\newtheorem{lemme}{Lemma}
\newtheorem{thm}{Theorem}
\newtheorem{prop}{Proposition}
\newtheorem{rem}{Remark}
\begin{document}

\title{Size matters for OTC market makers: general results and dimensionality reduction techniques\thanks{The authors would like to thank Yves Achdou (Université Paris Diderot), Bastien Baldacci (Ecole Polytechnique), Guillaume Bioche (BNP Paribas), Iuliia Manziuk (Université Paris 1 Panthéon-Sorbonne and Ecole Polytechnique), Domingo Puertas Trillo (BNP Paribas), Marie-Claire Quenez (Université Paris Diderot), Andrei Serjantov (BNP Paribas), and Goncalo Simoes (BNP Paribas) for the numerous and insightful discussions they had with them on the subject. Also, two anonymous referees deserve warm thanks as the paper was substantially improved following their remarks.}}

\author{Philippe \textsc{Bergault}\footnote{Université Paris 1 Panthéon-Sorbonne, Centre d'Economie de la Sorbonne, 106 Boulevard de l'Hôpital, 75642 Paris Cedex 13, France, philippe.bergault@etu.univ-paris1.fr} \and Olivier \textsc{Guéant}\footnote{Université Paris 1 Panthéon-Sorbonne, Centre d'Economie de la Sorbonne, 106 Boulevard de l'Hôpital, 75642 Paris Cedex 13, France, olivier.gueant@univ-paris1.fr \textit{Corresponding author.}}}
\date{}
\maketitle

\abstract{\noindent In most OTC markets, a small number of market makers provide liquidity to other market participants. More precisely, for a list of assets, they set prices at which they agree to buy and sell. Market makers face therefore an interesting optimization problem: they need to choose bid and ask prices for making money while mitigating the risk associated with holding inventory in a volatile market. Many market making models have been proposed in the academic literature, most of them dealing with single-asset market making whereas market makers are usually in charge of a long list of assets. The rare models tackling multi-asset market making suffer however from the curse of dimensionality when it comes to the numerical approximation of the optimal quotes. The goal of this paper is to propose a dimensionality reduction technique to address multi-asset market making by using a factor model. Moreover, we generalize existing market making models by the addition of an important feature: the existence of different transaction sizes and the possibility for the market makers in OTC markets to answer different prices to requests with different sizes.}

\vspace{8mm}

\setlength\parindent{0pt}

\textbf{Key words:} Market making, Stochastic optimal control, Curse of dimensionality, Integro-differential equations, Risk factor models.\\

\vspace{5mm}

\section{Introduction}

The electronification of financial markets has changed the traditional role played by market makers. This is evident in the case of most order-driven markets, such as many stock markets,  where the traditional market makers in charge of maintaining fair and orderly markets now often compete with high-frequency market making companies. Surprisingly maybe, many OTC markets organized around dealers have also undergone upheaval linked to electronification over the last ten years. This is the case of the corporate bond markets on both sides of the Atlantic ocean where the electronification process is dominated by Multi-dealer-to-client (MD2C) platforms enabling clients to send the same request for quote (RFQ) to several dealers simultaneously and therefore instantly put them into competition with one another. Electronification is also in progress inside investment banks as most of them replace their traders by algorithms to be able to provide clients with quotes for a large set of assets and automate their market making business, at least for small tickets.\\

Building market making algorithms is a difficult task as the optimization problem faced by a market maker involves both static and dynamic components. A market maker faces indeed a first (static) trade-off: high margin and low volume versus low margin and high volume. A market maker quoting a large bid-ask spread (with no skew) trades indeed rarely, but each trade is associated with large Mark-to-Market (MtM) gain. Conversely, a market maker who quotes a narrow bid-ask spread (with no skew) trades often, but each trade is associated with a small MtM gain. In addition to this simple static trade-off, market makers face a dynamic problem: in a volatile market, they must quote in a dynamic way to mitigate their market risk exposure and, in particular, skew their quotes as a function of their inventory. For example, a single-asset market maker with a long inventory should price in a conservative manner on the bid side and rather aggressively on the ask side, if she wants -- a reasonable behaviour -- to decrease her probability to buy and increase her probability to sell.\\

The optimization problem faced by market makers has been addressed in a long list of academic papers. The first two references commonly cited in the market making literature are two economic papers: Grossman and Miller \cite{grossman1988liquidity} and Ho and Stoll \cite{ho1981optimal}. If the former is a classic from a theoretical point of view, the latter was revived in 2008 by Avellaneda and Stoikov \cite{avellaneda2008high} to build the first practical model of single-asset market making. Since then, many models have been proposed, most of them to tackle the same problem of single-asset market making. For instance, \cite{gueant2013dealing} provides a rigourous analysis of the stochastic optimal control problem introduced by Avellaneda and Stoikov and proves that the problem boils down to a system of linear ordinary differential equations (ODE) in the case of exponential intensity functions. Cartea et al. (\cite{cartea2017algorithmic,cartea2015algorithmic,cartea2014buy}) contributed a lot to the literature and added many features to the initial models:  alpha signals, ambiguity aversion, etc. They also considered a different objective function: a risk-adjusted expectation instead of the Von Neumann-Morgenstern expected utility of \cite{avellaneda2008high} and \cite{gueant2013dealing}. Multi-asset market making has been considered in \cite{gueant2016financial,gueant2017optimal} for both kinds of objective functions and the author shows that the problem boils down, for general intensity functions, to solving a system of (\emph{a priori} nonlinear) ODEs. Most of the above models are well suited to tackle market making in OTC markets or in order-driven markets when the tick/spread ratio is large. For major stock markets or for some foreign-exchange platforms, other models are better suited such as those of Guilbaud and Pham who really took the microstructure into account (see \cite{guilbaud2013optimal,guilbaud2015optimal}).\footnote{Option market making has also been addressed, see for instance \cite{bergault2019algorithmic, el2015stochastic, stoikov2009option}.}\\

In spite of a large and growing literature on market making, several problems are rarely addressed. A first example is that of trade sizes: in markets organized around requests for quotes, quotes can and should depend on the size of the requests. A second and more general problem is that of the numerical approximation of the optimal quotes. If optimal quotes can theoretically be computed through the solution of a system of ODEs, the size of that system (which grows exponentially with the number of assets) prevents any concrete computation with grid methods when it comes to portfolios with more than 4 or 5 assets. To our knowledge, the only attempt to approximate the solution of the Hamilton-Jacobi equations associated with market making models in high dimension is \cite{gueant2019deep} in which the authors propose a method -- inspired by reinforcement learning techniques -- that uses neural networks instead of grids.\\

In this paper, our goal is twofold. Our first goal is to generalize existing models to introduce a distribution of trade size. This extension is not straightforward as the optimal controls cannot be modeled anymore with real-valued stochastic processes, but must instead be modeled with predictable maps. A consequence, in terms of mathematics, is that the problem does not anymore boil down to a finite system of ODEs but instead to an integro-differential equation of the Hamilton-Jacobi type that can be regarded as an ordinary differential equation in an infinite-dimensional space. Our second goal is to propose a numerical method for approximating the optimal bid and ask quotes of a market maker over a large universe of assets. For that purpose, we show that the real dimension of the problem is not that of the number of assets, but rather that of the rank of the correlation matrix of asset prices. Then, by using a factor model, we show how to approximate the optimal quotes of a market maker. Indeed, if market risk is projected on a low-dimensional space of factors, solving the market making problem boils down to solving a low-dimensional Hamilton-Jacobi equation. In particular, if the number of factors is lower than 3, grid methods can be applied independently of the number of assets. Furthermore, we suggest a Monte-Carlo method to approximate the influence of the residual risk not taken into account when the risk is projected on the space of factors.\\

In Section~2, we present our market making model with distributed request sizes. We characterize the value function associated with the stochastic optimal control problem as the solution of an integro-differential equation of the Hamilton-Jacobi type by using ODE techniques in a well-chosen infinite-dimensional space and a verification argument. We subsequently provide expressions for the optimal quotes as a function of time, inventory, and request size. In Section~3, we show how the equations can be simplified when the dependence structure between the prices of the different assets can be modeled by risk factors. We then show how this simplification leads to an approximation that helps to tackle the curse of dimensionality by solving a low-dimensional equation on a grid. We also explain how Monte-Carlo simulations could be used to account for the part of the risk not accounted by the factors. We apply these techniques in Section~4 to portfolios of 2 and 30 bonds and discuss the results.\\

\section{Market making with marked point processes}\label{genpb}

In all this paper, we consider a filtered probability space $\left( \Omega, \mathcal{F},\mathbb{P}; \mathbb{F}= (\mathcal{F}_{t})_{t\geq 0} \right)$ satisfying the usual conditions. We assume this probability space to be large enough to support all the processes we introduce.\\

\subsection{Modeling framework and notations}

We consider a market maker in charge of $d$ assets. For $i \in \{1,\ldots,d\}$, the reference price of asset $i$ is modeled by a process $(S^{i}_{t})_{t\ge 0}$ with the following dynamics:
\begin{equation*}
dS^{i}_{t} = \sigma^{i}dW^{i}_{t},
\end{equation*}
with $S^{i}_{0}$ given, $\sigma^{i}>0$, and $\left((W^{1}_{t},\ldots,W^{d}_{t})'\right)_{t\geq 0}$ a $d$-dimensional\footnote{The sign $'$ designates the transpose operator. It transforms here a line vector into a column vector.} Brownian motion with correlation matrix $(\rho^{i,j})_{1\leq i,j\leq d}$, adapted to the filtration $(\mathcal{F}_{t})_{t\geq 0}$. We denote by $\Sigma = \left( \rho^{i,j}\sigma^{i} \sigma^{j} \right)_{1\leq i,j \leq d}$ the variance-covariance matrix associated with the process $(S_{t})_{t\geq 0} = \left((S^{1}_{t},\ldots,S^{d}_{t})'\right)_{t\geq 0}$.\\

Those assets are traded with requests for quote (RFQ): the market maker first receives a request for quote from a client wishing to buy or sell a given asset and she then proposes a price to the client who finally decides whether she accepts to trade at that price or not.\\

At any time, the market maker must be ready to propose bid and ask quotes to buy and sell any of the $d$ assets. These bid and ask quotes depend on the size $z\in \mathbb{R}_{+}^{*}$ of the RFQ (in all this paper, we use the notation $\mathbb R_+^* := (0,+\infty)$). For a given asset $i$, they are modeled by maps $S^{i,b},S^{i,a} : \Omega \times [0,T] \times \mathbb{R}_{+}^{*}\rightarrow \mathbb{R}$ which are $\mathcal{P} \otimes \mathcal{B}(\mathbb{R}_{+}^{*})$-measurable, where $\mathcal{P}$ denotes the $\sigma$-algebra of $\mathbb{F}$-predictable subsets of $\Omega \times[0,T]$ and $\mathcal{B}(\mathbb{R}_{+}^{*})$ denotes the Borelian sets of $\mathbb{R}_{+}^{*}$.\\

For each $i\ \in\ \{1,\ldots,d\}$, we introduce $J^{i,b}(dt,dz)$ and $J^{i,
a}(dt,dz)$ two \textit{càdlàg} $\mathbb{R}_{+}$-marked point processes.\footnote{These processes are explicitly constructed in the Appendix. Note that in our model, as in most real OTC markets, there are no simultaneous RFQs in multiple assets.}\\

For  $i \in \{1,\ldots,d\}$, we denote by $\left(\nu^{i,b}_{t}(dz)\right)_{t\geq 0}$ and $\left(\nu^{i,a}_{t}(dz)\right)_{t\geq 0}$ the intensity kernels of $J^{i,b}(dt,dz)$ and $J^{i,
a}(dt,dz)$, respectively. In addition, we assume that $\left(\nu^{i,b}_{t}(dz)\right)_{t\geq 0}$ and $\left(\nu^{i,a}_{t}(dz)\right)_{t\geq 0}$ verify:
\begin{equation}
\begin{split}
    \nu^{i,b}_{t}(dz)\ =\ \Lambda^{i,b}(\delta^{i,b}(t,z))\mu^{i,b}(dz), \\
    \nu^{i,a}_{t}(dz)\ =\ \Lambda^{i,a}(\delta^{i,a}(t,z))\mu^{i,a}(dz), \nonumber
\end{split}
\end{equation}
where for all $i \in \{1,\ldots,d\}$, $\left(\mu^{i,b}, \mu^{i,a}\right)$ is a couple of probability measures on $\mathbb{R}_{+}^{*}$, $\delta^{i,b}(t,z) = S^{i}_{t} - S^{i,b}(t,z)$, $\delta^{i,a}(t,z) = S^{i,a}(t,z) -  S^{i}_{t}$, and $\left(\Lambda^{i,b}, \Lambda^{i,a}\right)$ is a couple of functions satisfying the following hypotheses (H):
\begin{itemize}
    \item $\Lambda^{i,b}$ and $\Lambda^{i,a}$ are twice continuously differentiable,
    \item $\Lambda^{i,b}$ and $\Lambda^{i,a}$ are decreasing, with $\forall \delta \in \mathbb{R}$, ${\Lambda^{i,b}}'(\delta)<0$ and ${\Lambda^{i,a}}'(\delta)<0$,
    \item $\underset{\delta \rightarrow +\infty}{\lim}\Lambda^{i,b}(\delta)=0$ and $\underset{\delta \rightarrow +\infty}{\lim} \Lambda^{i,a}(\delta)=0$,
    \item $\underset{\delta \in \mathbb{R}}{\sup}  \frac{\Lambda^{i,b}(\delta){\Lambda^{i,b}}''(\delta)}{\left( {\Lambda^{i,b}}'(\delta) \right)^{2}}  < 2$ and $\underset{\delta \in \mathbb{R}}{\sup}  \frac{\Lambda^{i,a}(\delta){\Lambda^{i,a}}''(\delta)}{\left( {\Lambda^{i,a}}'(\delta) \right)^{2}}  < 2$. \\
\end{itemize}

For all $i \in \{1, \ldots, d\}$, $J^{i,b}(dt,dz)$ and $J^{i,a}(dt,dz)$ model respectively the volumes of transactions at the bid and at the ask for asset $i$. The inventory of the market maker, modeled by the $d$-dimensional process $(q_{t})_{t\geq 0} = \left((q^{1}_{t},\ldots,q^{d}_{t})'\right)_{t\geq 0}$, has therefore the following dynamics:
\begin{equation*}
\forall i \in \{1,\ldots,d\},\ dq^{i}_{t}\ =\ \int_{\mathbb{R}_{+}^{*}} z J^{i,b}(dt,dz) - \int_{\mathbb{R}_{+}^{*}} z J^{i,a}(dt,dz),
\end{equation*}
with $q_{0}$ given.\\

\begin{rem}
For a given asset $i$, $\Lambda^{i,.}$ typically has the form $\Lambda^{i,.}(\delta) = \lambda^{i,.}_{RFQ} f^{i,.}(\delta)$, where $\lambda^{i,.}_{RFQ}$ is the (constant) intensity of arrival of requests for quote and $f^{i,.}(\delta)$ gives the probability that a request will result in a transaction given the quote $\delta$ proposed by the market maker. Furthermore, $\mu^{i, .}$ should be seen as the distribution of sizes.\\
\end{rem}

Finally, the process $(X_{t})_{t\geq 0}$ modeling the market maker's cash account has the dynamics
\begin{eqnarray*}
dX_{t} & = & \sum_{i=1}^{d} \int_{\mathbb{R}_{+}^{*}}S^{i,a}(t,z)zJ^{i,a}(dt,dz) - \sum_{i=1}^{d} \int_{\mathbb{R}_{+}^{*}}S^{i,b}(t,z)zJ^{i,b}(dt,dz)\\
& =& \sum_{i=1}^{d} \int_{\mathbb{R}_{+}^{*}}(S^{i}_{t}+\delta^{i,a}(t,z))zJ^{i,a}(dt,dz) - \sum_{i=1}^{d} \int_{\mathbb{R}_{+}^{*}}(S^{i}_{t}-\delta^{i,b}(t,z))zJ^{i,b}(dt,dz)\\
& =& \sum_{i=1}^{d} \int_{\mathbb{R}_{+}^{*}} \left(\delta^{i,b}(t,z)zJ^{i,b}(dt,dz) + \delta^{i,a}(t,z)zJ^{i,a}(dt,dz) \right) - \sum_{i=1}^{d} S^{i}_{t} dq^{i}_{t}.\\
\end{eqnarray*}

We fix $\delta_{\infty} \ge 0$ and define the set $\mathcal{A}$ of admissible controls\footnote{We introduce here a unique lower bound for the quotes, independently of the asset, the side, the size, and the time. Generalizations are straightforward.} by
\begin{eqnarray*}
\mathcal{A} &=& \Bigg\lbrace\left. \delta = \left(\delta^{i,b},\delta^{i,a}\right)_{1 \le i \le d} : \Omega \times [0,T] \times \mathbb{R}_{+}^{*} \mapsto \mathbb{R}^{2d} \right| \delta \text{ is } \mathcal{P} \otimes \mathcal{B}(\mathbb{R}_{+}^{*})-\text{measurable},\\
&&\forall i \in \{1, \ldots, d\}, \delta^{i,b}(t,z) \geq -\delta_{\infty}\ \mathbb P \otimes dt \otimes \mu^{i,b} \text{ a.e.} \text{ and } \delta^{i,a}(t,z) \geq -\delta_{\infty}\ \mathbb P \otimes dt \otimes \mu^{i,a} \text{ a.e.}  \Bigg\rbrace.
\end{eqnarray*}

As proved in \cite{gueant2017optimal}, under assumptions (H), the functions $\delta \in \mathbb{R} \mapsto \delta \Lambda^{i,b}(\delta)$ and $\delta \in \mathbb{R} \mapsto \delta \Lambda^{i,a}(\delta)$ have a unique maximum on $\mathbb{R}$. It is also easy to see that on $[-\delta_{\infty},+\infty)$, they are bounded from below by $-\delta_{\infty}\Lambda^{i,b}(-\delta_{\infty})$ and $-\delta_{\infty}\Lambda^{i,a}(-\delta_{\infty})$, respectively.\\

For two given continuous penalty functions $\psi:\mathbb{R}^{d} \rightarrow \mathbb{R}_{+}$ and $\ell_{d}:\mathbb{R}^{d} \rightarrow \mathbb{R}_{+}$, modeling the risk aversion of the market maker, we aim at maximizing the objective function
\begin{equation}
\label{MMpb}
\mathbb{E} \left[ X_{T} + \sum_{i=1}^{d} q^{i}_{T}S^{i}_{T} - \ell_{d}(q_{T}) - \int_{0}^{T} \psi(q_{t}) dt \right],
\end{equation}
over the set $\mathcal{A}$ of admissible controls.\\

\begin{rem}
\label{psi_forms}
For instance, we can choose $\psi(q)=\frac{\gamma}{2} q'\Sigma q$ or $\psi(q) = \gamma \sqrt{q'\Sigma q}$ (for $\gamma>0$) and $\ell_{d}(q) = 0$, $\ell_{d}(q) = \frac{\zeta}{2} q'\Sigma q$ or $\ell_{d}(q) = \zeta \sqrt{ q'\Sigma q}$ (for $\zeta>0$), as done in \cite{cartea2015algorithmic}, \cite{cartea2014buy}, \cite{gueant2019deep}, and \cite{gueant2017optimal}.\\
\end{rem}

After applying Itô's formula to $\left(X_{t} + \sum_{i=1}^{d} q^{i}_{t}S^{i}_{t}\right)_{t\geq 0}$ between $0$ and $T$, it is easy to see that the problem is equivalent to maximizing
\begin{equation*}
\mathbb{E}\Bigg[ \int\limits_{0}^{T} \Bigg\lbrace \sum_{i=1}^{d} \int_{\mathbb{R}_{+}^{*}}\left(\delta^{i,b}(t,z)z \Lambda^{i,b}(\delta^{i,b}(t,z))\mu^{i,b}(dz) + \delta^{i,a}(t,z)z \Lambda^{i,a}(\delta^{i,a}(t,z))\mu^{i,a}(dz) \right)
- \psi(q_{t}) \Bigg\rbrace dt -\ell_{d}(q_{T}) \Bigg],
\end{equation*}
over the set of admissible controls $\mathcal{A}$.\\

We introduce the function $\mathcal{J}:[0,T] \times \mathbb{R}^d \times \mathcal{A}\rightarrow \mathbb{R}$ such that, $\forall t \in [0,T]$, $\forall q=(q^{1},\ldots,q^{d})'\in \mathbb{R}^d$, $\forall (\delta^{i})_{i \in \{1, \ldots, d\}}\in \mathcal{A}$,
\begin{equation}
\begin{split}
\mathcal{J}(t,q,(\delta^{i})_{i \in \{1, \ldots, d\}}) =  \mathbb{E}\Bigg[ \int\limits_{t}^{T} \Bigg\lbrace \sum_{i=1}^{d} \int_{\mathbb{R}_{+}^{*}} \bigg(&\delta^{i,b}(s,z)z \Lambda^{i,b}(\delta^{i,b}(s,z))\mu^{i,b}(dz) + \delta^{i,a}(s,z)z \Lambda^{i,a}(\delta^{i,a}(s,z))\mu^{i,a}(dz)\bigg) \\
&- \psi \big(q^{t,q,(\delta^{i})_{i \in \{1, \ldots, d\}}}_{s}\big) \Bigg\rbrace ds -\ell_{d}\big(q^{t,q,(\delta^{i})_{i \in \{1, \ldots, d\}}}_{T}\big) \Bigg], \nonumber
\end{split}
\end{equation}
where $\big(q^{t,q,(\delta^{i})_{i \in \{1, \ldots, d\}}}_{s}\big)_{s\geq t}$ is the inventory process starting in state $q$ at time $t$ and controlled by $(\delta^{i})_{i \in \{1, \ldots, d\}}$.\\

The value function $\theta:[0,T]\times \mathbb{R}^d\rightarrow \mathbb{R}$ of the problem is then defined as follows:
\begin{equation*}
\theta(t,q) = \underset{(\delta^{i})_{i \in \{1, \ldots, d\}} \in \mathcal{A}}{\sup} \mathcal{J}(t,q,(\delta^{i})_{i \in \{1, \ldots, d\}}), \forall (t,q) \in [0,T] \times \mathbb{R}^d.
\end{equation*}

We will show that $\theta$ is the unique (in a large class of functions) classical solution to the following integro-differential Hamilton-Jacobi (HJ) equation:
\begin{eqnarray}
0 &=& \frac{\partial w}{\partial t}(t,q) - \psi(q) + \underset{i=1}{\overset{d}{\sum}}\int_{\mathbb{R}_{+}^{*}}zH^{i,b}\left(\frac{w(t,q) -  w(t,q+ze^{i}) }{z}\right) \mu^{i,b}(dz)\nonumber\\
&&\qquad\qquad + \underset{i=1}{\overset{d}{\sum}}\int_{\mathbb{R}_{+}^{*}}zH^{i,a}\left(\frac{w(t,q) - w(t,q-ze^{i})}{z} \right) \mu^{i,a}(dz),\qquad \forall (t,q) \in [0,T]\times \mathbb{R}^d,
\label{eqn:HJB}
\end{eqnarray}
with terminal condition $w(T,q) = -\ell_{d}(q), \forall q \in \mathbb{R}^d$, where
\begin{equation*}
H^{i,b}:p\in\mathbb{R} \mapsto \underset{\delta \geq -\delta_{\infty}}{\sup} \Lambda^{i,b}(\delta)(\delta-p) \text{ and }H^{i,a}:p\in\mathbb{R} \mapsto  \underset{\delta \geq -\delta_{\infty}}{\sup} \Lambda^{i,a}(\delta)(\delta-p),
\end{equation*}
and where $\left(e^{1},\ldots,e^{d}\right)$ denotes the canonical basis of $\mathbb{R}^{d}$.\\

\subsection{Existence and uniqueness of a solution to \eqref{eqn:HJB}}

\begin{lemme}
\label{lemmH}
$\forall i \in \{1,\ldots,d\}$, $H^{i,b}$ and $H^{i,a}$ are two globally Lipschitz continuously differentiable decreasing functions. Moreover, the supremum in the definition of $H^{i,b}(p)$ (respectively $H^{i,a}(p)$) is reached at a unique $\delta^{i,b*}(p)$ (respectively $\delta^{i,a*}(p)$). Furthermore, $\delta^{i,b*}$ and $\delta^{i,a*}$ are continuous and nondecreasing functions.\\
\end{lemme}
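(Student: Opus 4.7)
I work with $H^{i,b}$; the argument for $H^{i,a}$ is symmetric. The plan is to reduce the constrained maximization defining $H^{i,b}(p)$ to the inversion of a single monotone map, and then read off all the claimed regularity of $H^{i,b}$ and of the argmax from that inversion. Introduce $G^{i,b}(\delta) := \delta + \Lambda^{i,b}(\delta)/{\Lambda^{i,b}}'(\delta)$, well-defined on $\mathbb{R}$ since ${\Lambda^{i,b}}' < 0$ by (H). A direct computation gives
\[
{G^{i,b}}'(\delta) \;=\; 2 - \frac{\Lambda^{i,b}(\delta)\,{\Lambda^{i,b}}''(\delta)}{\bigl({\Lambda^{i,b}}'(\delta)\bigr)^{2}},
\]
which by the last hypothesis of (H) is bounded below by a strictly positive constant, so $G^{i,b}:\mathbb{R}\to\mathbb{R}$ is a $C^{1}$-diffeomorphism. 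For any $p$, the derivative of $\delta \mapsto \Lambda^{i,b}(\delta)(\delta - p)$ equals ${\Lambda^{i,b}}'(\delta)\bigl(G^{i,b}(\delta) - p\bigr)$, which (using ${\Lambda^{i,b}}' < 0$) is strictly positive on $\{\delta < (G^{i,b})^{-1}(p)\}$ and strictly negative afterwards. The map is therefore strictly quasiconcave with unique unconstrained maximizer $(G^{i,b})^{-1}(p)$, and imposing $\delta \geq -\delta_\infty$ yields the unique constrained maximizer
\[
\delta^{i,b*}(p) \;=\; \max\!\bigl((G^{i,b})^{-1}(p),\, -\delta_\infty\bigr),
\]
which is automatically continuous and nondecreasing.

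\textbf{Properties of $H^{i,b}$.} On the region $\{p \leq G^{i,b}(-\delta_\infty)\}$ the constraint binds and $H^{i,b}(p) = \Lambda^{i,b}(-\delta_\infty)(-\delta_\infty - p)$ is affine with slope $-\Lambda^{i,b}(-\delta_\infty)$. On the complementary region $\{p > G^{i,b}(-\delta_\infty)\}$, differentiating $H^{i,b}(p) = \Lambda^{i,b}(\delta^{i,b*}(p))(\delta^{i,b*}(p) - p)$ and using the first-order condition ${\Lambda^{i,b}}'(\delta^{i,b*})(\delta^{i,b*} - p) + \Lambda^{i,b}(\delta^{i,b*}) = 0$ to cancel cross terms (this is the envelope theorem in disguise) yields
\[
{H^{i,b}}'(p) \;=\; -\Lambda^{i,b}\!\bigl(\delta^{i,b*}(p)\bigr).
\]
Both expressions for the derivative agree at the threshold $p = G^{i,b}(-\delta_\infty)$, where $\delta^{i,b*}(p) = -\delta_\infty$, so $H^{i,b} \in C^{1}(\mathbb{R})$. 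Since $\Lambda^{i,b} > 0$, $H^{i,b}$ is strictly decreasing; since $\delta^{i,b*}(p) \geq -\delta_\infty$ and $\Lambda^{i,b}$ is decreasing, $|{H^{i,b}}'(p)| \leq \Lambda^{i,b}(-\delta_\infty)$, which is the global Lipschitz bound.

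\textbf{Main obstacle.} The content of the lemma is modest, and the only subtlety I anticipate is the $C^{1}$ gluing of $H^{i,b}$ at the threshold $p = G^{i,b}(-\delta_\infty)$ where the inequality constraint becomes active. This is resolved by the slope-matching at the end of the previous paragraph. The quantitative curvature hypothesis $\sup_\delta \Lambda^{i,b}{\Lambda^{i,b}}''/({\Lambda^{i,b}}')^{2} < 2$ is used essentially here: it is exactly what guarantees that $G^{i,b}$ is a surjective $C^{1}$-diffeomorphism, so that the argmax $(G^{i,b})^{-1}(p)$ is defined for every $p\in\mathbb{R}$ and the reduction in Step 1 goes through globally.
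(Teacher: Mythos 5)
Your proof is correct and follows essentially the same route as the paper's: both reduce the problem to inverting the monotone map $\delta \mapsto \delta + \Lambda^{i,\cdot}(\delta)/{\Lambda^{i,\cdot}}'(\delta)$ (whose derivative is controlled by the curvature bound in (H)), express the constrained argmax as a clamp at $-\delta_\infty$, compute ${H^{i,\cdot}}'(p) = -\Lambda^{i,\cdot}(\delta^{i,\cdot*}(p))$ on each side of the threshold via the envelope/first-order condition, and match the slopes to get $C^1$ and the bound $|{H^{i,\cdot}}'| \le \Lambda^{i,\cdot}(-\delta_\infty)$. The only difference is cosmetic: you make the map $G^{i,b}$ and its surjectivity explicit, while the paper appeals to \cite{gueant2017optimal} for the unconstrained maximizer and uses the implicit function theorem on the first-order condition.
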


\begin{proof}
We prove the result only for the ask side. The proof is similar for the bid side.\\

Let $i \in \{1,\ldots,d\}$. For $p \in \mathbb{R}$, we define
\begin{equation*}
    h^{i}_{p} : \delta \in \mathbb{R} \longmapsto  \Lambda^{i,a}(\delta)(\delta - p).
\end{equation*}

$h^{i}_{p}$ is a continuously differentiable function, positive for $\delta \in (p,+\infty)$ and nonpositive otherwise. It is easy to prove (see \cite{gueant2017optimal}) that there is a unique maximizer $\tilde{\delta}^{i,a*}(p)$ of $h^{i}_{p}$ on $\mathbb{R}$ characterized by
\begin{equation*}
p = \tilde{\delta}^{i,a*}(p) + \frac{\Lambda^{i,a}(\tilde{\delta}^{i,a*}(p))}{\Lambda^{i,a'}(\tilde{\delta}^{i,a*}(p))}.
\end{equation*}

By the implicit function theorem, $p\in \mathbb{R} \mapsto \tilde{\delta}^{i,a*}(p)$ is continuously differentiable and
\begin{equation*}
\tilde{\delta}^{i,a*'}(p) = \frac{1}{2 - \frac{\Lambda^{i,a}(\tilde{\delta}^{i,a*}(p))\Lambda^{i,a''}(\tilde{\delta}^{i,a*}(p))}{(\Lambda^{i,a'}(\tilde{\delta}^{i,a*}(p)))^{2}}} >0, \forall p\in \mathbb{R}.
\end{equation*}

In particular, $\tilde{\delta}^{i,a*}$ is increasing.\\

We introduce $\tilde{H}^{i,a}:p\in \mathbb{R} \rightarrow \underset{\delta \in \mathbb{R}}{\sup}\ h^{i}_{p}(\delta)$. Then $\forall p \in \mathbb{R}$, we have $\tilde{H}^{i,a}(p) = h^{i}_{p}(\tilde{\delta}^{i,a*}(p))$ and
\begin{equation*}
\tilde{H}^{i,a'}(p) = -\Lambda^{i,a}(\tilde{\delta}^{i,a*}(p))<0.
\end{equation*}

So $\tilde{H}^{i,a}$ is decreasing and
\begin{equation*}
 \tilde{\delta}^{i,a*}(p) = \left(\Lambda^{i,a}\right)^{-1} \left( -\tilde{H}^{i,a'}(p) \right).
\end{equation*}

Let us now recall that $\forall p \in \mathbb{R}$, $H^{i,a}(p) = \underset{\delta \geq -\delta_{\infty}}{\sup} \ h^{i}_{p}(\delta)$.\\

For all $p \in \mathbb{R}$ such that  $-\delta_{\infty} \leq \tilde{\delta}^{i,a*}(p)$, we clearly have
\begin{equation*}
    H^{i,a}(p) = h_{p}^{i}(\tilde{\delta}^{i,a*}(p)).
\end{equation*}

Otherwise, if $\tilde{\delta}^{i,a*}(p)< -\delta_{\infty}$, we can easily see that $h_{p}^{i}(.)$ is increasing on $]-\infty, \tilde{\delta}^{i,a*}(p)]$ and decreasing on $[\tilde{\delta}^{i,a*}(p),+\infty[$, which implies
\begin{equation*}
H^{i,a}(p) = h^{i}_{p}(-\delta_{\infty}).
\end{equation*}

This means that the supremum in $H^{i,a}(p)$ is reached at a unique $\delta^{i,a*}(p)$ given by
\begin{equation*}
    \delta^{i,a*}(p) = \max(\tilde{\delta}^{i,a*}(p),- \delta_{\infty}).
\end{equation*}

In particular, $\delta^{i,a*}$ is continuous and nondecreasing, so $H^{i,a}$ is continuous. Moreover, for all $p \in \mathbb{R}$ such that $\tilde{\delta}^{i,a*}(p) > -\delta_{\infty}$, we have $H^{i,a}(p) = \tilde{H}^{i,a}(p)$ so $H^{i,a}$ is decreasing on $]\tilde{\delta}^{i,a*-1}(-\delta_{\infty}), +\infty[$ and its derivative on this interval is
\begin{equation*}
{H^{i,a}}'(p) = -\Lambda^{i,a}(\tilde{\delta}^{i,a*}(p)) = -\Lambda^{i,a}(\delta^{i,a*}(p)).
\end{equation*}

On $]-\infty, \tilde{\delta}^{i,a*-1}(-\delta_{\infty})[$, $H^{i,a}$ is affine and its derivative is
\begin{equation*}
{H^{i,a}}'(p) = -\Lambda^{i,a}(-\delta_{\infty}) = -\Lambda^{i,a}(\delta^{i,a*}(p)).
\end{equation*}

Thus, by continuity of $\delta^{i,a*}$, $H^{i,a}$ is continuously differentiable and decreasing on $\mathbb{R}$. In particular, $|{H^{i,a}}'(p)| \leq \Lambda^{i,a}(-\delta_{\infty})$ for all $p \in \mathbb{R},$ so $H^{i,a}$ is Lipschitz.\\

\end{proof}

In what follows, we denote by $L^{i,a}$ the Lipschitz constant of $H^{i,a}$ for all $i \in \{1,\ldots, d\},$ and we define similarly $L^{i,b}$ the Lipschitz constant of $H^{i,b}$ for all $i \in \{1,\ldots, d\}.$\\

For $\pi \in C^0(\mathbb{R}^d, \mathbb{R}_+)$, let us consider $\mathcal{C}_{\pi}$ the following vector space:
$$\mathcal{C}_{\pi} = \left\{ u\in C^0(\mathbb{R}^d, \mathbb{R}) \left| \underset{q \in \mathbb{R}^d}{\sup} \left| \frac{u(q)}{1+\pi(q)} \right| < + \infty  \right. \right\}.$$
Equipped with the norm $ u \in \mathcal{C}_{\pi} \mapsto \|u\|_{\pi} = \underset{q \in \mathbb{R}^d}{\sup} \left| \frac{u(q)}{1+\pi(q)} \right|$, $\mathcal{C}_{\pi}$ is a Banach space.\\

We now consider for the rest of the paper that there exists $p\in \mathbb{N}^*$ and $C > 0$ such that:
\begin{itemize}
    \item $\forall q \in \mathbb{R}^d, \pi(q) \leq C\left(1+ \|q\|^p \right)$,\\
    \item $\forall q,y \in \mathbb{R}^d, \frac{1+\pi(q+y)}{1+\pi(q)} \leq C\left(1+ \|y\|^p \right)$,\\
    \item $\forall i \in \{1,\ldots, d\}$, $\int_{\mathbb{R}_{+}^{*}}\left(z^p\mu^{i,b}(dz) + z^p\mu^{i,a}(dz) \right) < +\infty$,\\
\end{itemize}
where $\|.\|$ denotes the Euclidean norm on $\mathbb R^d$.\\

Moreover, we assume that $\psi, \ell_d \in \mathcal{C}_\pi$.\footnote{This assumption implies in particular that $\psi$ and $\ell_d$ have, at most, polynomial growth at infinity.}\\

\begin{rem}
  For the examples of Remark \ref{psi_forms}, it is natural to choose a quadratic function $\pi$ such that $\psi, \ell_d \le \pi$. Then, the above assumptions are satisfied for $p=2$ whenever $\mu^{i,b}$ and $\mu^{i,a}$ have a finite second moment.\\
\end{rem}

\begin{prop}
For all $u \in \mathcal{C}_{\pi}$, the function
$$ F(u): q \in \mathbb{R}^d \mapsto \psi(q) - \underset{i=1}{\overset{d}{\sum}}\int_{\mathbb{R}_{+}^{*}} zH^{i,b} \left(\frac{u(q) -  u(q+ze^{i}) }{z}\right) \mu^{i,b}(dz) - \underset{i=1}{\overset{d}{\sum}}\int_{\mathbb{R}_{+}^{*}}zH^{i,a} \left(\frac{u(q) - u(q-ze^{i})}{z} \right) \mu^{i,a}(dz)$$
is in $\mathcal{C}_{\pi}.$\\
\end{prop}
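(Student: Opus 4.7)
I would fix $u \in \mathcal{C}_\pi$ and verify separately the two conditions defining membership in $\mathcal{C}_\pi$: that $F(u) \in C^0(\mathbb{R}^d, \mathbb{R})$, and that $\|F(u)\|_\pi < +\infty$. The main ingredients are the global Lipschitz regularity of $H^{i,b}$ and $H^{i,a}$ provided by Lemma \ref{lemmH}, together with the three structural assumptions placed on $\pi$ and on the size distributions $\mu^{i,b}, \mu^{i,a}$.

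For the growth estimate, I would first use the bound $|H^{i,b}(p)| \le |H^{i,b}(0)| + L^{i,b}|p|$ to write
$$\left| z H^{i,b}\!\left(\tfrac{u(q) - u(q+ze^i)}{z}\right)\right| \le z|H^{i,b}(0)| + L^{i,b}|u(q) - u(q+ze^i)|,$$
and similarly for the ask term. Combining $|u(y)| \le \|u\|_\pi (1+\pi(y))$ with the second assumption on $\pi$ applied to $y = \pm ze^i$ (for which $\|y\| = z$) yields
$$|u(q) - u(q \pm ze^i)| \le \|u\|_\pi (1+\pi(q))\bigl(1 + C(1+z^p)\bigr).$$
Integrating against $\mu^{i,b}$ or $\mu^{i,a}$ is then legitimate thanks to the $p$-th moment assumption, which also delivers a finite first moment since $z \le 1 + z^p$ on $\mathbb{R}_+$ when $p \ge 1$. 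Summing over $i$ and over both sides, and adding the contribution $\psi(q) \le \|\psi\|_\pi(1+\pi(q))$ coming from $\psi \in \mathcal{C}_\pi$, one obtains an inequality of the form $|F(u)(q)| \le C'(1+\pi(q))$ for some constant $C'$ depending only on $u$, $\psi$, $\pi$, the Lipschitz constants, and the moments of the $\mu^{i,\cdot}$. Hence $\|F(u)\|_\pi \le C' < +\infty$.

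For continuity, I fix $q \in \mathbb{R}^d$ and a sequence $q_n \to q$ and argue by dominated convergence. The term $\psi$ is continuous by assumption; for every fixed $z$, the integrand converges pointwise by continuity of $u$ and of $H^{i,b}, H^{i,a}$. Since $(q_n)$ is bounded and $\pi$ is continuous, $M := \sup_n \pi(q_n) < +\infty$, so the estimate from the previous paragraph provides an $n$-independent, $\mu^{i,b}$- and $\mu^{i,a}$-integrable majorant of the form $z|H^{i,b}(0)| + L^{i,b}\|u\|_\pi(1+M)(1+C(1+z^p))$. The dominated convergence theorem then gives $F(u)(q_n) \to F(u)(q)$.

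The only step requiring a little care is the telescoping estimate that converts the finite difference $u(q) - u(q \pm ze^i)$ into a quantity controlled, uniformly in $z$, by $(1+\pi(q))$ times a polynomial in $z$; once the assumption $\frac{1+\pi(q+y)}{1+\pi(q)} \le C(1+\|y\|^p)$ is invoked, the rest is a direct combination of the Lipschitz property of the Hamiltonians, the $p$-th moment condition on the size distributions, and dominated convergence, so no real obstacle is expected.
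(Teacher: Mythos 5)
Your proposal is correct and follows essentially the same route as the paper's own proof: the Lipschitz bound $|H^{i,\cdot}(p)|\le |H^{i,\cdot}(0)|+L^{i,\cdot}|p|$, the estimate $|u(q)-u(q\pm ze^i)|\le \|u\|_\pi(1+\pi(q))(1+C(1+z^p))$ via the growth assumption on $\pi$, the $p$-th moment hypothesis on $\mu^{i,b},\mu^{i,a}$ to integrate, and dominated convergence with a sup-over-$n$ majorant for continuity. Your explicit remark that the $p$-th moment assumption also furnishes a first moment (via $z\le 1+z^p$ for $p\ge 1$), needed to integrate the $zH^{i,\cdot}(0)$ term, is a small clarifying touch the paper leaves implicit.
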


\begin{proof}
Let $u\in \mathcal{C}_{\pi}$.\\

Let us consider $q \in \mathbb{R}^d$ and a sequence $(q_n)_n$ converging towards $q$.\\

From the continuity of $\psi$, we have $\lim_{n \to +\infty} \psi(q_n) = \psi(q)$.\\

Also, $\forall i \in \{1, \ldots, d\}, \forall z \in \mathbb{R}_+^*$, from the continuity of $H^{i,b}$ and $u$, we have
$$\lim_{n \to +\infty} z H^{i,b} \left(\frac{u(q_n) -  u(q_n+ze^{i}) }{z}\right) = z H^{i,b} \left(\frac{u(q) -  u(q+ze^{i}) }{z}\right).$$

Now, we write $ H^{i,b}(p)  \leq  H^{i,b}(0) + L^{i,b} |p|$ so that we get
\begin{eqnarray*}
z H^{i,b} \left(\frac{u(q_n) -  u(q_n+ze^{i}) }{z}\right)&\le& z H^{i,b}(0) + L^{i,b} |u(q_n) -  u(q_n+ze^{i})|\\
&\le& z H^{i,b}(0) + L^{i,b} |u(q_n)| + C L^{i,b} \|u\|_\pi \left(1+ \pi(q_n)\right) \left(1+ z^p\right)\\
&\le& z H^{i,b}(0) + L^{i,b} \sup_n|u(q_n)| + C L^{i,b} \|u\|_\pi \left(1+ \sup_n\pi(q_n)\right) \left(1+ z^p\right),
\end{eqnarray*}
which is integrable by assumption. Using the same technique for the terms associated with the ask side and Lebesgue's dominated convergence theorem, we conclude that $\lim_{n \to +\infty} F(u)(q_n) = F(u)(q)$, hence the continuity of $F(u)$.\\

Moreover, for all $q \in \mathbb{R}^d,$ we have
\begin{align*}
\left| \frac{F(u)(q)}{1+\pi(q)} \right|& = \Bigg| \frac{\psi(q)}{1+\pi(q)} - \underset{i=1}{\overset{d}{\sum}}\int_{\mathbb{R}_{+}^{*}} \frac{z}{1+\pi(q)}H^{i,b} \left(\frac{u(q) -  u(q+ze^{i}) }{z}\right) \mu^{i,b}(dz)\\
& \qquad \qquad \qquad \qquad \quad - \underset{i=1}{\overset{d}{\sum}}\int_{\mathbb{R}_{+}^{*}}\frac{z}{1+\pi(q)}H^{i,a} \left(\frac{u(q) - u(q-ze^{i})}{z} \right) \mu^{i,a}(dz) \Bigg|\\
& \leq \|\psi\|_{\pi} + \underset{i=1}{\overset{d}{\sum}}\int_{\mathbb{R}_{+}^{*}} \frac{z}{1+\pi(q)} H^{i,b} \left(\frac{u(q) -  u(q+ze^{i}) }{z} \right) \mu^{i,b}(dz)\\
& \qquad + \underset{i=1}{\overset{d}{\sum}}\int_{\mathbb{R}_{+}^{*}}\frac{z}{1+\pi(q)}H^{i,a} \left(\frac{u(q) - u(q-ze^{i})}{z}  \right) \mu^{i,a}(dz)\\
&\leq \|\psi\|_{\pi} + \underset{i=1}{\overset{d}{\sum}}\int_{\mathbb{R}_{+}^{*}} \frac{1}{1+\pi(q)} \left(zH^{i,b}(0) + L^{i,b} \left|u(q) -  u(q+ze^{i})\right| \right) \mu^{i,b}(dz)\\
& \qquad + \underset{i=1}{\overset{d}{\sum}}\int_{\mathbb{R}_{+}^{*}}\frac{1}{1+\pi(q)} \left(z H^{i,a}(0) + L^{i,a} \left|u(q) -  u(q-ze^{i})\right| \right) \mu^{i,a}(dz)\\
&\leq \|\psi\|_{\pi} + \underset{i=1}{\overset{d}{\sum}}\int_{\mathbb{R}_{+}^{*}} \left(zH^{i,b}(0) + L^{i,b} \|u\|_{\pi} + C L^{i,b} \|u\|_{\pi}\left(1+ z^p\right) \right) \mu^{i,b}(dz)\\
& \qquad + \underset{i=1}{\overset{d}{\sum}}\int_{\mathbb{R}_{+}^{*}}\left(zH^{i,a}(0) + L^{i,a} \|u\|_{\pi} + C L^{i,a} \|u\|_{\pi}\left(1+ z^p\right) \right) \mu^{i,a}(dz).\\
\end{align*}

We conclude that $\underset{q \in \mathbb{R}^d}{\sup} \left| \frac{F(u)(q)}{1+\pi(q)} \right| < +\infty$ and therefore that $F(u) \in \mathcal{C}_{\pi}.$\\
\end{proof}

We can therefore define a functional $F : \mathcal{C}_{\pi} \rightarrow \mathcal{C}_{\pi}$ such that, for all $u\in \mathcal{C}_{\pi}$ and for all $q \in \mathbb{R}^d,$
\begin{align*}
F(u)(q) = \psi(q) - \underset{i=1}{\overset{d}{\sum}}\int_{\mathbb{R}_{+}^{*}} zH^{i,b} \left(\frac{u(q) -  u(q+ze^{i}) }{z}\right) \mu^{i,b}(dz) - \underset{i=1}{\overset{d}{\sum}}\int_{\mathbb{R}_{+}^{*}}zH^{i,a} \left(\frac{u(q) - u(q-ze^{i})}{z} \right) \mu^{i,a}(dz).\\
\end{align*}

We now come to the main property of the function $F$.\\

\begin{prop}
$F$ is Lipschitz on $\mathcal{C}_{\pi}.$\\
\end{prop}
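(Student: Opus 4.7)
The plan is to exploit the Lipschitz property of the Hamiltonians $H^{i,b}, H^{i,a}$ (established in Lemma~\ref{lemmH} and carried as constants $L^{i,b}, L^{i,a}$) together with the growth control on $\pi$ assumed in the bulleted conditions. First I would fix $u,v \in \mathcal{C}_\pi$ and observe that, in the difference $F(u)(q) - F(v)(q)$, the penalty term $\psi(q)$ cancels. The remaining expression is a sum of integrals of differences of the form
\[
z\Bigl[H^{i,b}\bigl(\tfrac{u(q)-u(q+ze^i)}{z}\bigr) - H^{i,b}\bigl(\tfrac{v(q)-v(q+ze^i)}{z}\bigr)\Bigr]
\]
(and an analogous ask term). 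Using $|H^{i,b}(p_1)-H^{i,b}(p_2)| \le L^{i,b}|p_1-p_2|$ the factor $z$ cancels inside and I bound the integrand by
\[
L^{i,b}\bigl(|u(q)-v(q)| + |u(q+ze^i)-v(q+ze^i)|\bigr),
\]
with the symmetric bound on the ask side.

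Next I would divide by $1+\pi(q)$. The local term is controlled directly: $|u(q)-v(q)|/(1+\pi(q)) \le \|u-v\|_\pi$. For the shifted term I rewrite
\[
\frac{|u(q+ze^i)-v(q+ze^i)|}{1+\pi(q)} = \frac{|u(q+ze^i)-v(q+ze^i)|}{1+\pi(q+ze^i)} \cdot \frac{1+\pi(q+ze^i)}{1+\pi(q)} \le \|u-v\|_\pi \cdot C(1+z^p),
\]
using the definition of $\|\cdot\|_\pi$ and the second bulleted hypothesis on $\pi$ with $y = ze^i$. The same argument handles the $q-ze^i$ shift on the ask side.

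Integrating against $\mu^{i,b}(dz)$ (resp. $\mu^{i,a}(dz)$) then produces bounds involving $\int_{\mathbb{R}_+^*}(1+z^p)\mu^{i,b}(dz)$ and $\int_{\mathbb{R}_+^*}(1+z^p)\mu^{i,a}(dz)$, which are finite by the third bulleted hypothesis on the size distributions. Summing over $i \in \{1,\ldots,d\}$ and taking the supremum in $q$ yields
\[
\|F(u)-F(v)\|_\pi \le K \|u-v\|_\pi
\]
with
\[
K = \sum_{i=1}^d L^{i,b}\int_{\mathbb{R}_+^*}\bigl(1 + C(1+z^p)\bigr)\mu^{i,b}(dz) + \sum_{i=1}^d L^{i,a}\int_{\mathbb{R}_+^*}\bigl(1 + C(1+z^p)\bigr)\mu^{i,a}(dz).
\]

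There is no real obstacle here beyond careful bookkeeping; the only subtle point is handling the shifted argument of $u-v$, since $\|u-v\|_\pi$ controls $(u-v)(q+ze^i)$ relative to $1+\pi(q+ze^i)$ rather than $1+\pi(q)$, which is exactly why the ratio assumption on $\pi$ is needed. Once this is in place, the integrability of $z^p$ under $\mu^{i,b}$ and $\mu^{i,a}$ closes the argument.
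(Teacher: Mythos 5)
Your proposal is correct and matches the paper's own argument step for step: cancellation of $\psi$, Lipschitz bound on $H^{i,b},H^{i,a}$ cancelling the factor $z$, splitting the local and shifted terms, controlling the shifted term via the ratio hypothesis $\frac{1+\pi(q+y)}{1+\pi(q)} \le C(1+\|y\|^p)$, and closing with the $p$-th moment assumption on $\mu^{i,b},\mu^{i,a}$. Even the explicit Lipschitz constant $K$ you produce is the one the paper implicitly obtains.
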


\begin{proof}
Let $u,v \in \mathcal{C}_{\pi}.$ For all $q\in \mathbb{R}^d$, we have
\begin{eqnarray*}
    \left| F(u)(q) - F(v)(q) \right| &\leq& \underset{i=1}{\overset{d}{\sum}}\int_{\mathbb{R}_{+}^{*}} z\left|H^{i,b} \left(\frac{v(q) -  v(q+ze^{i}) }{z}\right)-H^{i,b} \left(\frac{u(q) -  u(q+ze^{i}) }{z}\right)\right| \mu^{i,b}(dz)\\
    && + \underset{i=1}{\overset{d}{\sum}}\int_{\mathbb{R}_{+}^{*}}z\left|H^{i,a} \left(\frac{v(q) - v(q-ze^{i})}{z} \right) - H^{i,a} \left(\frac{u(q) - u(q-ze^{i})}{z} \right)\right| \mu^{i,a}(dz).\\
\end{eqnarray*}

Therefore
\begin{eqnarray*}
    \left| F(u)(q) - F(v)(q) \right|  &\leq & \underset{i=1}{\overset{d}{\sum}}\int_{\mathbb{R}_{+}^{*}}L^{i,b} \left|v(q) -  v(q+ze^{i})- u(q) +  u(q+ze^{i}) \right| \mu^{i,b}(dz)\\
    && + \underset{i=1}{\overset{d}{\sum}}\int_{\mathbb{R}_{+}^{*}}L^{i,a}\left|v(q) -  v(q-ze^{i})- u(q) +  u(q-ze^{i}) \right| \mu^{i,a}(dz).\\
    &\leq& \underset{i=1}{\overset{d}{\sum}}\int_{\mathbb{R}_{+}^{*}}L^{i,b} \left|v(q) - u(q) \right|\mu^{i,b}(dz) +\underset{i=1}{\overset{d}{\sum}}\int_{\mathbb{R}_{+}^{*}}L^{i,b} \left| v(q+ze^{i}) - u(q+ze^{i}) \right| \mu^{i,b}(dz)\\
    && + \underset{i=1}{\overset{d}{\sum}}\int_{\mathbb{R}_{+}^{*}}L^{i,a} \left|v(q) - u(q) \right|\mu^{i,a}(dz) +\underset{i=1}{\overset{d}{\sum}}\int_{\mathbb{R}_{+}^{*}}L^{i,a} \left| v(q-ze^{i}) - u(q-ze^{i}) \right| \mu^{i,a}(dz).\\
\end{eqnarray*}

We obtain therefore,
\begin{eqnarray*}
    \frac{\left| F(u)(q) - F(v)(q) \right|}{1+\pi(q)}&\leq& \underset{i=1}{\overset{d}{\sum}}\int_{\mathbb{R}_{+}^{*}}L^{i,b} \|u-v\|_\pi \mu^{i,b}(dz) +\underset{i=1}{\overset{d}{\sum}}\int_{\mathbb{R}_{+}^{*}}C L^{i,b} \|u-v\|_\pi \left(1+ z^p\right)  \mu^{i,b}(dz)\\
    && + \underset{i=1}{\overset{d}{\sum}}\int_{\mathbb{R}_{+}^{*}}L^{i,a} \|u-v\|_\pi\mu^{i,a}(dz) +\underset{i=1}{\overset{d}{\sum}}\int_{\mathbb{R}_{+}^{*}} C L^{i,a} \|u-v\|_\pi \left(1+ z^p\right) \mu^{i,a}(dz).\\
\end{eqnarray*}

By taking the supremum over $q$, we get that there exists a constant $K>0$ such  $$\left\|F(u) - F(v) \right\|_{\pi} \leq K  \|u-v\|_\pi.$$
We conclude that $F$ is Lipschitz continuous.\\
\end{proof}

The Lipschitz property of $F$ allows to obtain the following existence and uniqueness theorem:
\begin{thm}
\label{exuni}
There exists a unique function $\mathcal{W} \in C^1([0,T], \mathcal{C}_\pi)$ such that $w : (t,q) \in [0,T] \times \mathbb{R}^d \mapsto \mathcal{W}(t)(q)$ is solution to \eqref{eqn:HJB} with terminal condition $w(T,q) = -\ell_d(q), \forall q \in \mathbb{R}^d$.\\
\end{thm}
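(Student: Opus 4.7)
The plan is to recast the HJ equation \eqref{eqn:HJB} as an abstract ordinary differential equation in the Banach space $\mathcal{C}_\pi$ and then invoke Cauchy-Lipschitz (Picard-Lindelöf). Writing $\mathcal{W}(t) = w(t,\cdot)$, the equation \eqref{eqn:HJB} reads
\begin{equation*}
\mathcal{W}'(t) = F(\mathcal{W}(t)), \qquad \mathcal{W}(T) = -\ell_d,
\end{equation*}
where $F : \mathcal{C}_\pi \to \mathcal{C}_\pi$ is the functional introduced and proved to be well-defined and globally Lipschitz in the two preceding propositions. Since $\ell_d \in \mathcal{C}_\pi$ by assumption, $-\ell_d$ is a valid terminal datum in $\mathcal{C}_\pi$.

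To handle the backward time direction, I would set $\tilde{\mathcal{W}}(t) := \mathcal{W}(T-t)$ for $t \in [0,T]$, which satisfies the forward Cauchy problem
\begin{equation*}
\tilde{\mathcal{W}}'(t) = -F(\tilde{\mathcal{W}}(t)), \qquad \tilde{\mathcal{W}}(0) = -\ell_d.
\end{equation*}
The vector field $-F$ inherits the global Lipschitz constant $K$ from $F$. Applying the Banach fixed point theorem to the Picard operator
\begin{equation*}
\Phi : u \in C^0([0,T], \mathcal{C}_\pi) \mapsto \left( t \mapsto -\ell_d - \int_0^t F(u(s))\, ds \right),
\end{equation*}
equipped with a weighted sup-norm $\|u\|_\lambda = \sup_{t\in[0,T]} e^{-\lambda t} \|u(t)\|_\pi$ for $\lambda$ large enough (e.g.\ $\lambda > K$), yields a unique fixed point $\tilde{\mathcal{W}} \in C^0([0,T], \mathcal{C}_\pi)$. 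By the continuity of $F$ and the integral representation, $\tilde{\mathcal{W}}$ is in fact $C^1$ in time, and reverting the time change produces $\mathcal{W} \in C^1([0,T], \mathcal{C}_\pi)$ solving the original backward problem.

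Uniqueness follows from the same contraction argument: any two $C^1([0,T], \mathcal{C}_\pi)$ solutions $\mathcal{W}_1, \mathcal{W}_2$ with the same terminal condition satisfy
\begin{equation*}
\|\mathcal{W}_1(t) - \mathcal{W}_2(t)\|_\pi \le K \int_t^T \|\mathcal{W}_1(s) - \mathcal{W}_2(s)\|_\pi \, ds,
\end{equation*}
and Gronwall's lemma forces equality on $[0,T]$. Globality on the whole interval $[0,T]$ is automatic because $F$ is \emph{globally} (not merely locally) Lipschitz, ruling out blow-up. The main conceptual point - and the only step that required genuine work - was establishing that $F$ maps $\mathcal{C}_\pi$ into itself and is Lipschitz there; this was precisely the content of the two preceding propositions, and relies on the weight $\pi$ dominating the polynomial-growth terms produced by the jumps of size $z$ against the measures $\mu^{i,b}, \mu^{i,a}$. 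Once these are in hand, the theorem is a direct application of Cauchy-Lipschitz in a Banach space, so I do not anticipate any further obstacle.
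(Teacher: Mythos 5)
Your proposal is correct and follows the same route as the paper: recast \eqref{eqn:HJB} as the abstract Cauchy problem $\mathcal{W}'(t)=F(\mathcal{W}(t))$, $\mathcal{W}(T)=-\ell_d$ in the Banach space $\mathcal{C}_\pi$, and invoke the Cauchy-Lipschitz theorem using the global Lipschitz property of $F$ established in the preceding propositions. The paper simply cites Cauchy-Lipschitz and notes globality, whereas you unpack the standard proof (time reversal, weighted-norm Picard contraction, Gronwall), but the substance is identical.
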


\begin{proof}
Let us observe that $\mathcal{W} \in C^1([0,T], \mathcal{C}_\pi)$ is solution of the Cauchy problem
$$
\begin{cases}
\mathcal{W}'(t) = F(\mathcal{W}(t)), \forall t\in[0,T] \\
\mathcal{W}(T) = - \ell_d
\end{cases}
$$
if and only if $w : (t,q) \in [0,T] \times \mathbb{R}^d \mapsto \mathcal{W}(t)(q)$ is solution to \eqref{eqn:HJB} with terminal condition $w(T,q) = -\ell_d(q), \forall q \in \mathbb{R}^d$.\\

As $\left(\mathcal{C}_{\pi}, \|. \|_{\pi} \right)$ is a Banach space and $F:\mathcal{C}_{\pi} \rightarrow \mathcal{C}_{\pi}$ is Lipschitz continuous, we know by Cauchy-Lipschitz theorem that there exists a unique maximal solution $\mathcal{W}$ to the above equation, and that this solution is in fact global, meaning in particular that $\mathcal{W}$ is defined on $[0,T]$.\\
\end{proof}

\subsection{Verification theorem}

We now want to prove that $\theta$ is in fact the function $w$ defined in Theorem \ref{exuni} and deduce the optimal controls associated with the problem \eqref{MMpb} using a verification argument.\\

\begin{thm}
\label{controls}
Let $w$ be the function defined in Theorem \ref{exuni}.\\

Let $(t,q)\in [0,T) \times \mathbb{R}^d$.\\

Let us define $(\bar{\delta}^{i})_{i \in \{1,\ldots,d\} }=(\bar{\delta}^{i,b},\bar{\delta}^{i,a})_{i \in \{1,\ldots,d\}} \in \mathcal{A}$ such that $\forall i \in \{1,\ldots,d\}, \forall s \in [t,T], \forall z>0$:
\begin{equation}
\begin{split}
\bar{\delta}^{i,b}(s,z) = \delta^{i,b*}\left( \frac{w(s,q_{s-}) - w(s,q_{s-}+ze^{i})}{z} \right),\\
\bar{\delta}^{i,a}(s,z) = \delta^{i,a*}\left( \frac{w(s,q_{s-}) - w(s,q_{s-}-ze^{i})}{z} \right), \nonumber
\end{split}
\end{equation}
where $\delta^{i,b*}$ and $\delta^{i,a*}$ are the functions defined in Lemma \ref{lemmH} and $(q_{s})_{t\leq s \leq T} = (q^{t,q,(\bar{\delta}^{1},\ldots,\bar{\delta}^{d})}_{s})_{t\leq s \leq T}$.\\

Then, $\theta(t,q)=w(t,q)$ and $(\bar{\delta}^{1},\ldots,\bar{\delta}^{d})$ is an optimal control for our stochastic control problem starting at time $t$ with $q_t = q$.\\
\end{thm}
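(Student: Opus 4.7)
The plan is to run a standard verification argument: apply Itô's formula to $w(s, q_s)$ along a generic admissible control between $t$ and $T$, use the fact that $w$ solves \eqref{eqn:HJB} to derive the inequality $w(t,q) \ge \mathcal{J}(t,q,(\delta^i)_i)$ for every $(\delta^i)_i \in \mathcal{A}$, and then check that the feedback $(\bar{\delta}^i)_i$ realizes equality thanks to Lemma~\ref{lemmH}.

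For the first step, fix $(\delta^i)_i \in \mathcal{A}$ and write $q_\cdot = q^{t,q,(\delta^i)_i}_\cdot$. Since $\mathcal{W} \in C^1([0,T],\mathcal{C}_\pi)$, the map $s \mapsto w(s, q)$ is $C^1$ for every $q$ and $q \mapsto w(s,q)$ is continuous, so Itô's formula for pure-jump processes applies and decomposes $w(T, q_T) - w(t, q)$ as the time integral of $\partial_s w(s, q_s)$ plus two jump integrals of $w(s, q_{s^-} \pm z e^i) - w(s, q_{s^-})$ against $J^{i,b}$ and $J^{i,a}$. Compensating these jump integrals by $\Lambda^{i,b}(\delta^{i,b}(s,z))\mu^{i,b}(dz)\,ds$ and $\Lambda^{i,a}(\delta^{i,a}(s,z))\mu^{i,a}(dz)\,ds$, substituting $\partial_s w$ from \eqref{eqn:HJB}, and invoking the pointwise bound $\Lambda^{i,\cdot}(\delta)(\delta - p) \le H^{i,\cdot}(p)$, valid for every $\delta \ge -\delta_\infty$, should give, after taking expectations and rearranging, $w(t,q) \ge \mathcal{J}(t,q,(\delta^i)_i)$ and hence $w(t,q) \ge \theta(t,q)$.

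The main technical obstacle is to justify these expectations and the compensation, since neither $w$ nor its increments are bounded. I would localize with $\tau_n := \inf\{s \in [t,T] : \|q_s\| \ge n\} \wedge T$, derive the identity up to $\tau_n$, and then pass to the limit. Admissibility forces the intensities to be bounded by $\Lambda^{i,\cdot}(-\delta_\infty)$, and combined with the finite $p$-th moment of $\mu^{i,\cdot}$ this yields $\mathbb{E}\bigl[\sup_{s\in[t,T]} \pi(q_s)\bigr] < \infty$ via a standard compound-Poisson-type estimate together with the polynomial growth $\pi(q) \le C(1 + \|q\|^p)$. The second hypothesis on $\pi$ then controls the jump size $|w(s, q_{s^-} \pm ze^i) - w(s, q_{s^-})|$ by $K(1+\pi(q_{s^-}))(1+z^p)$, where $K$ depends on $\sup_{[0,T]}\|\mathcal{W}(s)\|_\pi$, which makes the compensated integrals true martingales and lets dominated convergence handle $n \to \infty$.

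Finally, I would check that $(\bar{\delta}^i)_i \in \mathcal{A}$: the lower bound $\bar{\delta}^{i,\cdot} \ge -\delta_\infty$ is immediate from the representation $\delta^{i,\cdot*}(p) = \max(\tilde{\delta}^{i,\cdot*}(p), -\delta_\infty)$ obtained in the proof of Lemma~\ref{lemmH}, and the required $\mathcal{P} \otimes \mathcal{B}(\mathbb{R}_+^*)$-measurability follows from the continuity of $\delta^{i,\cdot*}$ and $w$ together with the predictability of $(q_{s^-})_s$. Since by construction $\delta^{i,\cdot*}(p)$ attains the supremum defining $H^{i,\cdot}(p)$, the pointwise inequality used above becomes an equality along this feedback, so repeating the computation yields $w(t,q) = \mathcal{J}(t,q,(\bar{\delta}^i)_i)$. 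Combined with the upper bound $w(t,q) \ge \theta(t,q)$ from the previous step, this gives $\theta(t,q) = w(t,q)$ and the optimality of $(\bar{\delta}^i)_i$.
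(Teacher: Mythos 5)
Your proposal is correct and follows essentially the same verification argument as the paper: apply Itô's formula to $w(s,q_s)$ along an arbitrary admissible control, compensate the jump integrals, substitute the HJ equation and exploit the pointwise inequality defining $H^{i,\cdot}$, then identify the feedback $(\bar{\delta}^i)_i$ as the argmax via Lemma~\ref{lemmH}. The only real difference is technical bookkeeping: the paper first establishes the integrability $\mathbb{E}\bigl[\int_t^T\int_{\mathbb{R}_+^*}|w(s,q_{s-}\pm ze^i)-w(s,q_{s-})|\Lambda^{i,\cdot}(\delta^{i,\cdot}_s)\mu^{i,\cdot}(dz)\,ds\bigr]<\infty$ directly (reducing to $\mathbb{E}\bigl[\int_t^T\|q_{s-}\|^p\,ds\bigr]<\infty$ via a comparison with compound Poisson processes of the maximal intensities $\Lambda^{i,\cdot}(-\delta_\infty)$), so that Itô's formula can be taken in expectation without stopping, whereas you localize with $\tau_n$ and then pass to the limit using a $\sup$-type moment bound $\mathbb{E}\bigl[\sup_{s\in[t,T]}\pi(q_s)\bigr]<\infty$ and dominated convergence. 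Both routes rest on the same ingredients (the $\Lambda^{i,\cdot}(-\delta_\infty)$ intensity cap, the finite $p$-th moments of $\mu^{i,\cdot}$, and the growth conditions on $\pi$), so this is a cosmetic rather than a substantive divergence; your explicit check that $(\bar{\delta}^i)_i\in\mathcal{A}$ is a welcome addition that the paper leaves implicit.
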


\begin{proof}
Let $(\delta^i)_{i \in \{1,\ldots,d\}} = (\delta^{i,b}, \delta^{i,a})_{i \in \{1,\ldots,d\}} \in \mathcal{A}$ be an arbitrary control and let us denote by $(q_s)_{s\in [t,T]}$ the process $\left(q_s^{t,q,(\delta^1, \ldots, \delta^d)}\right)_{s\in [t,T]}$.\\

Let us first prove that for all $i \in \left\{ 1,\ldots,d \right\}$,
$$\mathbb{E} \left[ \int_{t}^T \int_{\mathbb{R}_+^*} \left|w(s,q_{s-} + ze^i)  - w(s,q_{s-}) \right| \Lambda^{i,b}(\delta^{i,b}_s) \mu^{i,b}(dz) ds \right] < +\infty.$$

Denoting by $M^w$ the quantity  $\underset{t \in [0,T]}{\sup} \|w(t,\cdot)\|_\pi$, we have
 \begin{align*}
\mathbb{E} \left[ \int_{t}^T \int_{\mathbb{R}_+^*} \left|w(s,q_{s-} + ze^i)\  - \right. \right. & \left. \left.\vphantom{e^i}w(s,q_{s-}) \right| \vphantom{ \int_{t}^T \int_{\mathbb{R}_+^*}} \Lambda^{i,b}(\delta^{i,b}_s) \mu^{i,b}(dz) ds \right]\\
& \leq \Lambda^{i,b}(-\delta_{\infty}) \mathbb{E} \left[ \int_{t}^T \int_{\mathbb{R}_+^*} \left(\left|w(s,q_{s-}  + ze^i)\right| + \left| w(s,q_{s-}) \right|\right) \mu^{i,b}(dz) ds \right]\\
& \leq \Lambda^{i,b}(-\delta_{\infty}) M^w \mathbb{E} \left[ \int_{t}^T \int_{\mathbb{R}_+^*} \left(1 + \pi(q_{s-}+ze^i) + 1+ \pi(q_{s-}) \right) \mu^{i,b}(dz) ds \right]\\
& \leq \Lambda^{i,b}(-\delta_{\infty}) M^w \mathbb{E} \left[ \int_{t}^T \int_{\mathbb{R}_+^*} \left(C\left(1+ z^p\right)(1+\pi(q_{s-})) + 1 + \pi(q_{s-})) \right) \mu^{i,b}(dz) ds \right].\\
\end{align*}
Therefore
\begin{align*}
\mathbb{E} \left[ \int_{t}^T \int_{\mathbb{R}_+^*} \right. & \left.\left| w(s,q_{s-} + ze^i)\  - \vphantom{e^i}w(s,q_{s-}) \right| \vphantom{ \int_{t}^T \int_{\mathbb{R}_+^*}} \Lambda^{i,b}(\delta^{i,b}_s) \mu^{i,b}(dz) ds \right]\\
& \leq \Lambda^{i,b}(-\delta_{\infty}) M^w \mathbb{E} \left[ \int_{t}^T \int_{\mathbb{R}_+^*}\left(C\left(1+ z^p\right)\left(1+C\left(1 + \|q_{s-}\|^p\right)\right) + 1 + C\left(1+\|q_{s-}\|^p\right)\right) \mu^{i,b}(dz) ds \right].\\
\end{align*}

Subsequently, we just have to prove that $$\mathbb{E} \left[ \int_t^T \|q_{s-}\|^p ds \right] < +\infty.$$

Since $\|q_{s}\| \le \|q\| + \|q_{s} - q\|$, $\|q_{s}\|^p \le 2^{p-1} \left(\|q\|^p + \|q_{s} - q\|^p \right)$, and we need to prove that
$$\mathbb{E} \left[ \int_t^T \|q_{s-}-q\|^p ds \right] < +\infty.$$

As we are working in $\mathbb{R}^d$, it is equivalent to prove that $$\mathbb{E} \left[ \int_t^T \|q_{s-}-q\|^p_p ds \right] < +\infty,$$ where $\|(x_1,\ldots, x_d)'\|_p = \left(\sum_{i=1}^d |x_i|^p\right)^{1/p}$.\\

For that purpose, we introduce for each $j \in \{1,\ldots, d\},$ two independent Poisson processes $N^{j,b}$ and $N^{j,a}$ with respective intensities $\Lambda^{j,b}(-\delta_{\infty})$ and $\Lambda^{j,a}(-\delta_{\infty})$, and $(\xi^{j,b}_k)_{k \geq 1}$ and $(\xi^{j,a}_k)_{k \geq 1}$ two sequences of i.i.d. random variables with respective distributions $\mu^{j,b}$ and $\mu^{j,a}.$ Then, we have
\begin{eqnarray*}
\mathbb{E} \left[ \int_t^T \|q_{s-}-q\|^p_p ds \right] & = & \mathbb{E} \left[ \int_t^T  \sum_{j=1}^d \left|\int_{\mathbb{R}_{+}^{*}} z J^{j,b}(dt,dz) - \int_{\mathbb{R}_{+}^{*}} z J^{j,a}(dt,dz)\right|^p ds\right]\\
& \le & \mathbb{E} \left[ \int_t^T  \sum_{j=1}^d \left(\int_{\mathbb{R}_{+}^{*}} z J^{j,b}(dt,dz) + \int_{\mathbb{R}_{+}^{*}} z J^{j,a}(dt,dz)\right)^p\right]\\
& \leq& \mathbb{E} \left[ \int_t^T  \sum_{j=1}^d \left(\sum_{k=1}^{N^{j,b}_s} \xi^{j,b}_k + \sum_{k=1}^{N^{j,a}_s} \xi^{j,a}_k \right)^{p}   ds\right]\\
& \leq& 2^{p-1}\mathbb{E} \left[ \int_t^T \sum_{j=1}^d \left(\left(\sum_{k=1}^{N^{j,b}_s} \xi^{j,b}_k\right)^p + \left(\sum_{k=1}^{N^{j,a}_s} \xi^{j,a}_k \right)^p\right)   ds\right]\\
& \leq& 2^{p-1} \int_{t}^T \sum_{j=1}^d \left(\mathbb{E} \left[\left(N^{j,b}_s \right)^{p-1}\sum_{k=1}^{N^{j,b}_s} \left(\xi^{j,b}_k\right)^p \right] + \mathbb{E} \left[\left(N^{j,a}_s \right)^{p-1}\sum_{k=1}^{N^{j,a}_s} \left(\xi^{j,a}_k\right)^p \right] \right) ds\\
& \leq& 2^{p-1} \int_{t}^T \sum_{j=1}^d \left(\mathbb{E} \left[\left(N^{j,b}_T \right)^{p}\right]\mathbb{E} \left[ \left(\xi^{j,b}_1\right)^p \right] + \mathbb{E} \left[\left(N^{j,a}_T \right)^{p}\right]\mathbb{E} \left[\left(\xi^{j,a}_1\right)^p \right] \right) ds\\
& \leq& 2^{p-1} T  \sum_{j=1}^d \left(\mathbb{E}\left[\left(N^{j,b}_T \right)^{p}\right] \int_{\mathbb{R}_+^*} z^p \mu^{j,b}(dz) + \mathbb{E}\left[\left(N^{j,a}_T \right)^{p}\right] \int_{\mathbb{R}_+^*} z^p \mu^{j,a}(dz)\right) < +\infty.\\
\end{eqnarray*}

Using the above, we have, for all $i \in \left\{ 1,\ldots,d \right\}$,
\begin{align*}
    \mathbb{E} \Bigg[ \int_t^T \int_{\mathbb{R}_+^*} &  \left(w(s,q_{s-} + ze^i) - w(s,q_{s-}) \right) J^{i,b}(ds,dz) \Bigg] \\
    & = \mathbb{E} \Bigg[ \int_{t}^T \int_{\mathbb{R}_+^*} \left(w(s,q_{s-} + ze^i)\  - w(s,q_{s-}) \right) \Lambda^{i,b}(\delta^{i,b}_s) \mu^{i,b}(dz) ds \Bigg],\\
\end{align*}

Of course, we can similarly prove that, for all $i \in \left\{ 1,\ldots,d \right\}$,
\begin{align*}
    \mathbb{E} \Bigg[ \int_t^T \int_{\mathbb{R}_+^*} &  \left(w(s,q_{s-} - ze^i) - w(s,q_{s-}) \right) J^{i,a}(ds,dz) \Bigg] \\
    & = \mathbb{E} \Bigg[ \int_{t}^T \int_{\mathbb{R}_+^*} \left(w(s,q_{s-} - ze^i)\  - w(s,q_{s-}) \right) \Lambda^{i,a}(\delta^{i,a}_s) \mu^{i,a}(dz) ds \Bigg].\\
\end{align*}

Now, by applying Itô's formula, we get
\begin{align*}
    w(T,q_T) = w(t,q) + \int_t^T \frac{\partial w}{\partial t}(s,q_s) ds &+ \sum_{i=1}^d \int_t^T \int_{\mathbb{R}_+^*} \left(w(s,q_{s-} + ze^i) - w(s,q_{s-}) \right) J^{i,b}(ds,dz)\\
    &+ \sum_{i=1}^d \int_t^T \int_{\mathbb{R}_+^*} \left(w(s,q_{s-} - ze^i) - w(s,q_{s-}) \right) J^{i,a}(ds,dz).\\
\end{align*}

By taking expectation, we get
\begin{align*}
    \mathbb{E} \left[ w(T,q_T) \right] = w(t,q) + \mathbb{E} \Bigg[\int_t^T \Bigg\{ \frac{\partial w}{\partial t}(s,q_s)  &+ \sum_{i=1}^d  \int_{\mathbb{R}_+^*} \Lambda^{i,b}(\delta^{i,b}(s,z))\left(w(s,q_{s-} + ze^i) - w(s,q_{s-}) \right) \mu^{i,b}(dz)\\
    &+ \sum_{i=1}^d \int_{\mathbb{R}_+^*} \Lambda^{i,a}(\delta^{i,a}(s,z))\left(w(s,q_{s-} - ze^i) - w(s,q_{s-}) \right) \mu^{i,a}(dz) \Bigg\} ds \Bigg],
\end{align*}
which, by definition of $w$, gives us the following inequality:
\begin{align*}
    \mathbb{E} \left[-\ell_d(q_T) \right] \le w(t,q) + \mathbb{E} \Bigg[\int_t^T \Bigg\{ \psi(q_s)  &- \sum_{i=1}^d  \int_{\mathbb{R}_+^*} z\Lambda^{i,b}(\delta^{i,b}(s,z))\delta^{i,b}(s,z) \mu^{i,b}(dz)\\
    &- \sum_{i=1}^d \int_{\mathbb{R}_+^*} z\Lambda^{i,a}(\delta^{i,a}(s,z))\delta^{i,a}(s,z) \mu^{i,a}(dz) \Bigg\} ds \Bigg],
\end{align*}
with equality when $(\delta^i)_{i \in \{1,\ldots,d\}} = \left(\bar\delta^i\right)_{i \in \{1,\ldots,d\}}$.\\

In other words,
\begin{align*}
\mathbb{E} \Bigg[\int_t^T \Bigg\{ \sum_{i=1}^d  \int_{\mathbb{R}_+^*} &\left(z\Lambda^{i,b}(\delta^{i,b}(s,z))\delta^{i,b}(s,z) \mu^{i,b}(dz)+ z\Lambda^{i,a}(\delta^{i,a}(s,z))\delta^{i,a}(s,z) \mu^{i,a}(dz)\right)\\
 &- \psi(q_s) \Bigg\} ds -\ell_d(q_T) \Bigg] \le w(t,q),
 \end{align*}
with equality when $(\delta^i)_{i \in \{1,\ldots,d\}} = \left(\bar\delta^i\right)_{i \in \{1,\ldots,d\}}$.\\

By taking the supremum over $(\delta^i)_{i \in \{1,\ldots,d\}} \in \mathcal{A}$, we get $\theta(t,q) = w(t,q)$ and the fact that $\left(\bar\delta^i\right)_{i \in \{1,\ldots,d\}}$ is optimal.\\
\end{proof}

\section{Solving the multi-asset market making problem with factors}\label{factorisation}

Let us now consider the particular case of problem \eqref{MMpb} where $\forall q \in \mathbb{R}^{d}$, $\psi(q) = \bar{\psi}\left(q'\Sigma q\right)$ and $\ell_{d}(q) = \bar{\ell}_{d}(q'\Sigma q)$ for some continuous functions $\bar{\psi}$ and $\bar{\ell}_{d}$ with, at most, polynomial growth at infinity. This particular case covers the examples of the literature (see Remark \ref{psi_forms}).\\

If the prices of the $d$ assets are modeled using a small number $k$ of factors, as it is the case in most econometric models of financial asset prices, then the variance-covariance matrix $\Sigma$ takes the form
$$\Sigma = \beta V \beta' + R,$$
where $\beta$ is a $d$-by-$k$ matrix of real coefficients, $V$ the $k$-by-$k$ variance-covariance matrix of the factors, and $R$ the $d$-by-$d$ variance-covariance matrix of the residuals.\\

If the explanatory power of the factors is high, $R$ should be small compared to $\Sigma$ (in Frobenius norm for instance). Our approach consists in ignoring the residuals, i.e. setting $R$ to $0$. In other words, we project the market risk on a space of factors of dimension $k$. As we shall see in Section \ref{num}, this approach provides very good results as measured by the objective function \eqref{MMpb}.\\

In what follows, we also discuss an approximation method based on Monte-Carlo simulations to account for the influence of $R$ once one has computed the optimal quotes in the case with no residual risk. The advantages and drawbacks of this additional approximation method will be discussed in Section \ref{num}.\\

\subsection{A low-dimensional approximation}
\label{grid}
Let us now assume that $\Sigma = \beta V \beta'$, i.e. $R=0$. Under this assumption, we can write problem \eqref{MMpb} as the maximization of
\begin{equation}
\label{MMfactorpb}
\mathbb{E} \left[ X_{T} + \sum_{i=1}^{d} q^{i}_{T}S^{i}_{T} - \bar{\ell}_d\left((\beta'q_{T})' V (\beta'q_{T})\right) - \int_{0}^{T} \bar{\psi}\left((\beta'q_{t})' V (\beta'q_{t})\right)dt \right].
\end{equation}

Using the same ideas as in Section \ref{genpb}, this expression can be written as
\begin{align*}
\mathbb{E}\Bigg[ \int\limits_{0}^{T} \Bigg\lbrace \int_{\mathbb{R}_{+}^{*}} \sum_{i=1}^{d}\left( \delta^{i,b}(t,z)z \Lambda^{i,b}(\delta^{i,b}(t,z))\mu^{i,b}(dz) + \delta^{i,a}(t,z)z \Lambda^{i,a}(\delta^{i,a}(t,z))\mu^{i,a}(dz) \right)  &- \bar{\psi}\left((\beta'q_{t})' V (\beta'q_{t})\right)  \Bigg\rbrace dt\\
& - \bar{\ell}_d\left((\beta'q_{T})' V (\beta'q_{T})\right) \Bigg].
\end{align*}

Let us introduce $(f_{t})_{t\in[0,T]} = (\beta'q_{t})_{t\in[0,T]}$. Then, the problem of maximizing \eqref{MMfactorpb} is equivalent to that of maximizing
\begin{equation*}
\mathbb{E}\Bigg[ \int\limits_{0}^{T} \Bigg\lbrace \int_{\mathbb{R}_{+}^{*}} \sum_{i=1}^{d}\left(\delta^{i,b}(t,z)z \Lambda^{i,b}(\delta^{i,b}(t,z))\mu^{i,b}(dz) + \delta^{i,a}(t,z)z \Lambda^{i,a}(\delta^{i,a}(t,z))\mu^{i,a}(dz) \right)  - \bar{\psi}\left(f_t'Vf_t\right) \Bigg\rbrace dt - \bar{\ell}_d\left(f_T'Vf_T\right) \Bigg].
\end{equation*}

The state process of our problem is now the Markov process $(f_{t})_{t \in[0,T]}$ instead of $(q_{t})_{t \in[0,T]}$: we have reduced the dimension of the problem from $d$ to $k$.\\

Let us introduce  $\tilde{\mathcal{J}}:[0,T] \times \mathbb{R}^k \times \mathcal{A}\rightarrow \mathbb{R}$ such that, $\forall t \in [0,T]$, $\forall f=(f^{1},\ldots,f^{k})'\in \mathbb{R}^k$, $\forall (\delta^{i})_{i \in \{1, \ldots, d\}}\in \mathcal{A}$
\begin{eqnarray*}
\tilde{\mathcal{J}}(t,f,(\delta^{i})_{i \in \{1, \ldots, d\}}) &=&  \mathbb{E}\Bigg[ \int\limits_{t}^{T} \Bigg\lbrace \int_{\mathbb{R}_{+}^{*}} \sum_{i=1}^{d}\bigg(\delta^{i,b}(s,z)z \Lambda^{i,b}(\delta^{i,b}(s,z))\mu^{i,b}(dz) + \delta^{i,a}(s,z)z \Lambda^{i,a}(\delta^{i,a}(s,z))\mu^{i,a}(dz) \bigg) \\
&& - \bar{\psi}\left(f_{s}'Vf_s\right) \Bigg\rbrace ds - \bar{\ell}_d\left(f_T'Vff_T\right)  \Bigg],
\end{eqnarray*}
where $(f_s)_{s\in [t,T]} = (f^{t,f,(\delta^{i})_{i \in \{1, \ldots, d\}}}_{s})_{s\in [t,T]}$ is here the state process starting in state $f$ at time $t$ and controlled by $(\delta^{i})_{i \in \{1, \ldots, d\}}$.\\

The value function $\tilde{\theta}:[0,T]\times \mathbb{R}^k\rightarrow \mathbb{R}$ of the problem is then defined as follows:
\begin{equation*}
\tilde{\theta}(t,f) = \underset{(\delta^{i})_{i \in \{1, \ldots, d\}} \in \mathcal{A}}{\sup} \tilde{\mathcal{J}}(t,f,(\delta^{i})_{i \in \{1, \ldots, d\}}), \quad \forall (t,f) \in [0,T] \times \mathbb{R}^k.
\end{equation*}

By using the same arguments as in Section \ref{genpb}, we can show that $\tilde{\theta}$ is the unique (in a large class of functions) smooth solution to the following integro-differential Hamilton-Jacobi equation:
\begin{align}
0 = \frac{\partial \tilde{\theta}}{\partial t}(t,f) - \bar{\psi}\left(f' V f\right) &+ \underset{i=1}{\overset{d}{\sum}}\int_{\mathbb{R}_{+}^{*}} zH^{i,b} \left(\frac{\tilde{\theta}(t,f) - \tilde{\theta}(t,f+z\tilde{e}^{i}) }{z}\right) \mu^{i,b}(dz) \nonumber\\
& + \underset{i=1}{\overset{d}{\sum}}\int_{\mathbb{R}_{+}^{*}} zH^{i,a} \left(\frac{\tilde{\theta}(t,f) - \tilde{\theta}(t,f-z\tilde{e}^{i})}{z} \right) \mu^{i,a}(dz), \quad \forall (t,f)\in [0,T) \times \mathbb{R}^k,
\label{eqn:HJBfac}
\end{align}
with terminal condition $\tilde{\theta}(T,f) = -\bar{\ell}_{d}(f'Vf), \forall f \in \mathbb{R}^k$, where $\forall i \in \{1, \ldots, d\}, \tilde{e}^i = \beta' e^i$.\\

Furthermore, the optimal controls are now given by:
\begin{equation*}
\begin{split}
\bar{\delta}^{i,b}(s,z) = \delta^{i,b*}\left( \frac{\tilde{\theta}(s,f_{s-}) - \tilde{\theta}(s,f_{s-}+z\tilde{e}^{i})}{z} \right),\\
\bar{\delta}^{i,a}(s,z) = \delta^{i,a*}\left( \frac{\tilde{\theta}(s,f_{s-}) - \tilde{\theta}(s,f_{s-}-z\tilde{e}^{i})}{z} \right).\\
\end{split}
\end{equation*}

When $R=0$, the problem boils down therefore to finding the solution $\tilde{\theta}$ of \eqref{eqn:HJBfac} with the appropriate terminal condition. In particular, from a numerical point of view, we need to approximate the solution of an equation involving time plus $k$ space dimensions, and this is doable with grid methods if $k$ is small.\\

\subsection{A Monte-Carlo method to take the residual risk into account}
\label{MCr}

As we shall see in Section \ref{num}, the above approximation method provides very good results as measured by the value of the objective function \eqref{MMpb}. Nevertheless, when market risk is projected on a low-dimensional space of factors, there are linear combinations of assets that falsely appear to be risk-free. To prevent trajectories of the inventory visiting too often regions that are falsely associated with low risk, it makes sense to look for methods that account for the residual risk measured by the matrix $R$.\\

In what follows, we propose an approximation method to take the residual risk into account. The idea consists in considering the first-order expansion in $\varepsilon$ where $$\Sigma = \beta V \beta' + \varepsilon R.$$
The rationale behind this idea is that, for a factor model with high explanatory power, $R$ should be small and it makes sense therefore to use a perturbative approach.\\

When $\varepsilon = 0,$ we know how to solve the problem, and the value function is given by $\tilde{\theta}.$ To approximate the value function $\theta$ of the problem for $\varepsilon>0$, we consider a first-order expansion of the form
$$\theta(t,q) = \tilde{\theta}(t,\beta'q) + \varepsilon \eta(t,q) + o(\varepsilon), \quad  \forall (t,q) \in [0,T] \times \mathbb{R}^d.$$

By plugging this expression into equation \eqref{eqn:HJB}, we formally get
\begin{align*}
    0  = &\frac{\partial \tilde{\theta}}{\partial t}(t,\beta'q) + \varepsilon \frac{\partial \eta}{\partial t}(t,q) + o(\varepsilon) - \bar{\psi}\left((\beta'q)'V(\beta'q) + \varepsilon q'Rq \right)\\
    & + \sum_{i=1}^d \int_{\mathbb{R}_{+}^{*}} zH^{i,b} \left(\frac{\tilde{\theta}(t,\beta'q) - \tilde{\theta}(t,\beta'q+ z\tilde{e}^i) }{z} + \varepsilon \frac{\eta(t,q) - \eta(t,q+ze^i)}{z} + o(\varepsilon)\right) \mu^{i,b}(dz)\\
    & + \sum_{i=1}^d \int_{\mathbb{R}_{+}^{*}} zH^{i,a} \left(\frac{\tilde{\theta}(t,\beta'q) - \tilde{\theta}(t,\beta'q - z\tilde{e}^i) }{z} + \varepsilon \frac{\eta(t,q) - \eta(t,q-ze^i)}{z} + o(\varepsilon)\right) \mu^{i,a}(dz),
\end{align*}
and $$\tilde{\theta}(T,\beta'q) + \varepsilon \eta(T,q) + o(\varepsilon) = -\bar{\ell}_d\left((\beta'q)'V(\beta'q) + \varepsilon q'Rq\right).$$

Assuming that $\bar{\psi}$ and $\bar{\ell_d}$ are $C^1$ and performing a Taylor expansion, we obtain
\begin{align*}
    0  = &\frac{\partial \tilde{\theta}}{\partial t}(t,\beta'q) + \varepsilon \frac{\partial \eta}{\partial t}(t,q) - \bar{\psi}\left((\beta'q)'V(\beta'q)\right) - \varepsilon \bar{\psi}'\left((\beta'q)'V(\beta'q)\right)  q'Rq\\
    & + \sum_{i=1}^d \int_{\mathbb{R}_{+}^{*}} zH^{i,b} \left(\frac{\tilde{\theta}(t,\beta'q) - \tilde{\theta}(t,\beta'q+ z\tilde{e}^i) }{z}\right)\mu^{i,b}(dz)\\
    & + \varepsilon \sum_{i=1}^d \int_{\mathbb{R}_{+}^{*}} {H^{i,b}}' \left(\frac{\tilde{\theta}(t,\beta'q) - \tilde{\theta}(t,\beta'q+ z\tilde{e}^i) }{z}\right) \left(\eta(t,q) - \eta(t,q+ze^i)\right) \mu^{i,b}(dz)\\
    & + \sum_{i=1}^d \int_{\mathbb{R}_{+}^{*}} zH^{i,a} \left(\frac{\tilde{\theta}(t,\beta'q) - \tilde{\theta}(t,\beta'q - z\tilde{e}^i) }{z}\right)\\
    & +\varepsilon \sum_{i=1}^d \int_{\mathbb{R}_{+}^{*}} {H^{i,a}}'\left(\frac{\tilde{\theta}(t,\beta'q) - \tilde{\theta}(t,\beta'q - z\tilde{e}^i) }{z}\right)\left(\eta(t,q) - \eta(t,q-ze^i)\right) \mu^{i,a}(dz) + o(\varepsilon),
\end{align*}
and $$\tilde{\theta}(T,\beta'q) + \varepsilon \eta(T,q) + o(\varepsilon) = -\bar{\ell}_d\left((\beta'q)'V(\beta'q)\right) -  \varepsilon \bar{\ell}'_d\left((\beta'q)'V(\beta'q)\right) q'Rq + o(\varepsilon).$$

As $\tilde{\theta}$ verifies \eqref{eqn:HJBfac}, we get
\begin{align*}
    0  = & \frac{\partial \eta}{\partial t}(t,q) - \bar{\psi}'\left((\beta'q)'V(\beta'q)\right)  q'Rq + \sum_{i=1}^d \int_{\mathbb{R}_{+}^{*}} {H^{i,b}}' \left(\frac{\tilde{\theta}(t,\beta'q) - \tilde{\theta}(t,\beta'q+ z\tilde{e}^i) }{z}\right) \left(\eta(t,q) - \eta(t,q+ze^i)\right) \mu^{i,b}(dz)\\
    & + \sum_{i=1}^d \int_{\mathbb{R}_{+}^{*}} {H^{i,a}}'\left(\frac{\tilde{\theta}(t,\beta'q) - \tilde{\theta}(t,\beta'q - z\tilde{e}^i) }{z}\right)\left(\eta(t,q) - \eta(t,q-ze^i)\right) \mu^{i,a}(dz).
\end{align*}
and $\eta(T,q) = -\bar{\ell}'_d\left((\beta'q)'V(\beta'q)\right) q'Rq.$\\

This equation, although in space-dimension $d$, is linear. Therefore, by the Feynman-Kac representation theorem, we have the following formula:
\begin{align*}
    \eta(t,q) = \mathbb{E}^{\tilde{\mathbb P}} \left[-\int_t^T \bar{\psi}'\left((\beta'q_s)'V(\beta'q_s)\right)  q_s'Rq_s ds - \bar{\ell}'_d\left((\beta'q_T)'V(\beta'q_T)\right) q_T'Rq_T \bigg| q_t = q \right],
\end{align*}
where under $\tilde{\mathbb P}$, for all $i \in \{1, \ldots, d\}$, $J^{i,b}$ and $J^{i,a}$ have their respective intensity kernels given by
\begin{align*}
    \tilde{\nu}^{i,b}_t(dz) & = - {H^{i,b}}' \left(\frac{\tilde{\theta}(t,\beta'q_{t-}) - \tilde{\theta}(t,\beta'q_{t-}+z\tilde{e}^i) }{z}\right) \mu^{i,b}(dz),\\
    \tilde{\nu}^{i,a}_t(dz) & = - {H^{i,a}}' \left(\frac{\tilde{\theta}(t,\beta'q_{t-}) - \tilde{\theta}(t,\beta'q_{t-}-z\tilde{e}^i)) }{z}\right) \mu^{i,a}(dz).\\
\end{align*}

\begin{rem}
It is noteworthy that the dynamics of $(q_s)_{s \in [t,T]}$ under $\tilde{\mathbb{P}}$ is that associated with the use of the optimal quotes when $R=0$.\\
\end{rem}

Thanks to this probabilistic representation, we can easily compute $\eta(t,q)$ for a given time $t$ and inventory $q$ using a Monte-Carlo method, and therefore easily compute both an approximation of the value function and an approximation of the optimal quotes that account, to the first order, for the residual risk. Of course, in practice, it would be prohibitively expensive in terms of computation time to carry out a Monte-Carlo simulation for all possible values of the inventory, but this method can alternatively be used (online) upon receiving a request for quote for a specific asset and given the current time and inventory (this will be discussed in Section \ref{MCnum}). \\

\begin{rem}
In the computation of the optimal quotes associated with asset $i$, one relies on the approximation
$$\theta(t,q_{t-}) - \theta(t,q_{t-} \pm z e^i) \simeq \tilde\theta(t,\beta'q_{t-}) - \tilde\theta(t,\beta'q_{t-} \pm z\tilde{e}^i) + \eta(t,q_{t-}) - \eta(t,q_{t-}\pm ze^i).$$ To compute $\eta(t,q_{t-}) - \eta(t,q_{t-}\pm ze^i)$, the same sample paths should be used in the estimations of $\eta(t,q_{t-})$ and $\eta(t,q_{t-}\pm ze^i)$. This is the same remark as for the computation of the Greeks of derivatives contracts with Monte-Carlo techniques.\\
\end{rem}

\section{Numerical results}
\label{num}
\subsection{The case of two assets: one factor vs. two factors}

\subsubsection{Model parameters}

In this section, we apply our multi-asset market making model to the case of two highly-correlated assets (here bonds). Our goal is to show that, in this case, the reduced one-factor model gives very similar results to the complete two-factor model. For this purpose, we consider two assets with the following characteristics:\\
\begin{itemize}
  \item Asset prices: $S^1_0 = S^2_0 = 100\ \textrm{\euro}$.
  \item Volatility of asset 1: $\sigma^1 = 1.2\  \textrm{\euro} \cdot \textrm{day}^{-\frac{1}{2}}$.
  \item Volatility of asset 2: $\sigma^2 = 0.6\  \textrm{\euro} \cdot \textrm{day}^{-\frac{1}{2}}$.
  \item Correlation: $\rho = 0.9$.
  \item Intensity functions: $$\Lambda^{i,b}(\delta) = \Lambda^{i,a}(\delta) = \lambda_{RFQ} \frac{1}{1+e^{\alpha_{\Lambda} + \beta_{\Lambda} \delta}},\quad  i\in \{1,2\},$$ with $\lambda_{RFQ} =30\ \textrm{day}^{-1}$, $\alpha_{\Lambda}=0.7$, and $\beta_{\Lambda}=30\ \textrm{\euro}^{-1}$. This corresponds to 30 RFQs per day for each asset, a probability of $\frac 1 {1+e^{0.7}} \simeq 33\%$ to trade when the answered quote is the reference price and a probability of $\frac 1 {1+e^{-0.2}} \simeq 55\%$ to trade when the answered quote is the reference price improved by 3 cents.
  \item Request sizes are distributed according to a Gamma distribution $\Gamma(\alpha_\mu,\beta_\mu)$ with $\alpha_\mu = 4$ and $\beta_\mu=4\cdot 10^{-4}$. This corresponds to an average request size of $10000$ assets (i.e. approximately $1000000\ \textrm{\euro}$) and a standard deviation equal to half the average.\\
\end{itemize}

The variance-covariance matrix is therefore given by
\begin{equation*}
\Sigma =
\begin{bmatrix}
1.44 & 0.648 \\
0.648 & 0.36
\end{bmatrix} = \Omega D \Omega' \simeq \begin{bmatrix}
0.906 & 0.424 \\
0.424 & -0.906
\end{bmatrix} \begin{bmatrix}
1.744 & 0 \\
0 & 0.056
\end{bmatrix}\begin{bmatrix}
0.906 & 0.424 \\
0.424 & -0.906
\end{bmatrix}.
\end{equation*}

We can see that the second eigenvalue is very small in comparison to the first. This justifies that it is reasonable to approximate the two-dimensional problem with a one-dimensional problem using the result of Section \ref{factorisation}, i.e. by considering $\beta \simeq \begin{bmatrix}
0.906 \\
0.424
\end{bmatrix}$ and $V \simeq 1.744$.\\

Regarding the objective function, we consider the following:
\begin{itemize}
  \item Time horizon given by $T = 12\ \textrm{days}$. This horizon ensures convergence towards stationary quotes at time $t=0$ -- see Figures \ref{conv_deltas} and \ref{conv_deltas_1f} below.
  \item $\psi: q \in \mathbb{R}^2 \mapsto \frac \gamma 2 q'\Sigma q$ with $\gamma = 8\cdot10^{-7}\ \textrm{\euro}^{-1}$.
  \item $\ell_d = 0$.
\end{itemize}

\subsubsection{Results with 2 factors}

Since $\theta$ and $\tilde\theta$ are defined on $[0,T] \times \mathbb{R}^2$, a first step for approximating the value functions consists in restricting the state space to a compact set. A traditional way to proceed consists in setting boundary conditions. In what follows, we equivalently impose risk limits in the sense that no trade that would result in an inventory $q \in \mathbb{R}^2$ such that $q'\Sigma q > B$ is admitted, where $B = 2.4\cdot10^{10}$.\footnote{These risk limits can be expressed in the space of factors instead of being expressed with the inventory vector.} We then approximate the solution $\tilde{\theta}$ to~\eqref{eqn:HJBfac} with two factors using a monotone explicit Euler scheme with linear interpolation on a grid of size $141\times141$ for the factors and a discretization of the RFQ size distribution with 4 sizes: $z^1 = 6250$, $z^2 = 12500$, $z^3 = 18750$, and $z^4 = 25000$ assets -- thereafter respectively designated by very small, small, large, and very large size -- with respective probability $p^1 = 0.53$, $p^2 = 0.35$, $p^3 = 0.10$, and $p^4 = 0.02$.\footnote{When there are as many factors as assets, one could directly consider the problem with inventory variables.}\\

The value function (at time $t=0$) as a function of the factors is plotted in Figure \ref{theta_f_3d}. The corresponding value function as a function of the inventory, obtained through linear interpolation is plotted in Figure \ref{theta_3d}.\\

From the value function, we obtain the optimal bid and ask quotes of the market maker as a function of inventories and request size. The optimal bid quotes (at time $t=0$) for asset 1 and asset 2 (in the case of the smallest RFQ size) are plotted in Figures \ref{delta_b_3d_asset_1_size_0} and \ref{delta_b_3d_asset_2_size_0}. The ask quotes are of course symmetric and are not plotted.\\

We see in Figures \ref{delta_b_3d_asset_1_size_0} and \ref{delta_b_3d_asset_2_size_0} that the optimal bid quotes for both assets are increasing functions of both the inventory in asset~1 and asset 2, as expected given the positive value of the correlation parameter $\rho$.\\

As discussed above, we chose $T=12$ days to ensure convergence of the optimal quotes to their stationary values. This is illustrated in Figure \ref{conv_deltas}.\\

\begin{figure}[!h]\centering
\includegraphics[width=\textwidth]{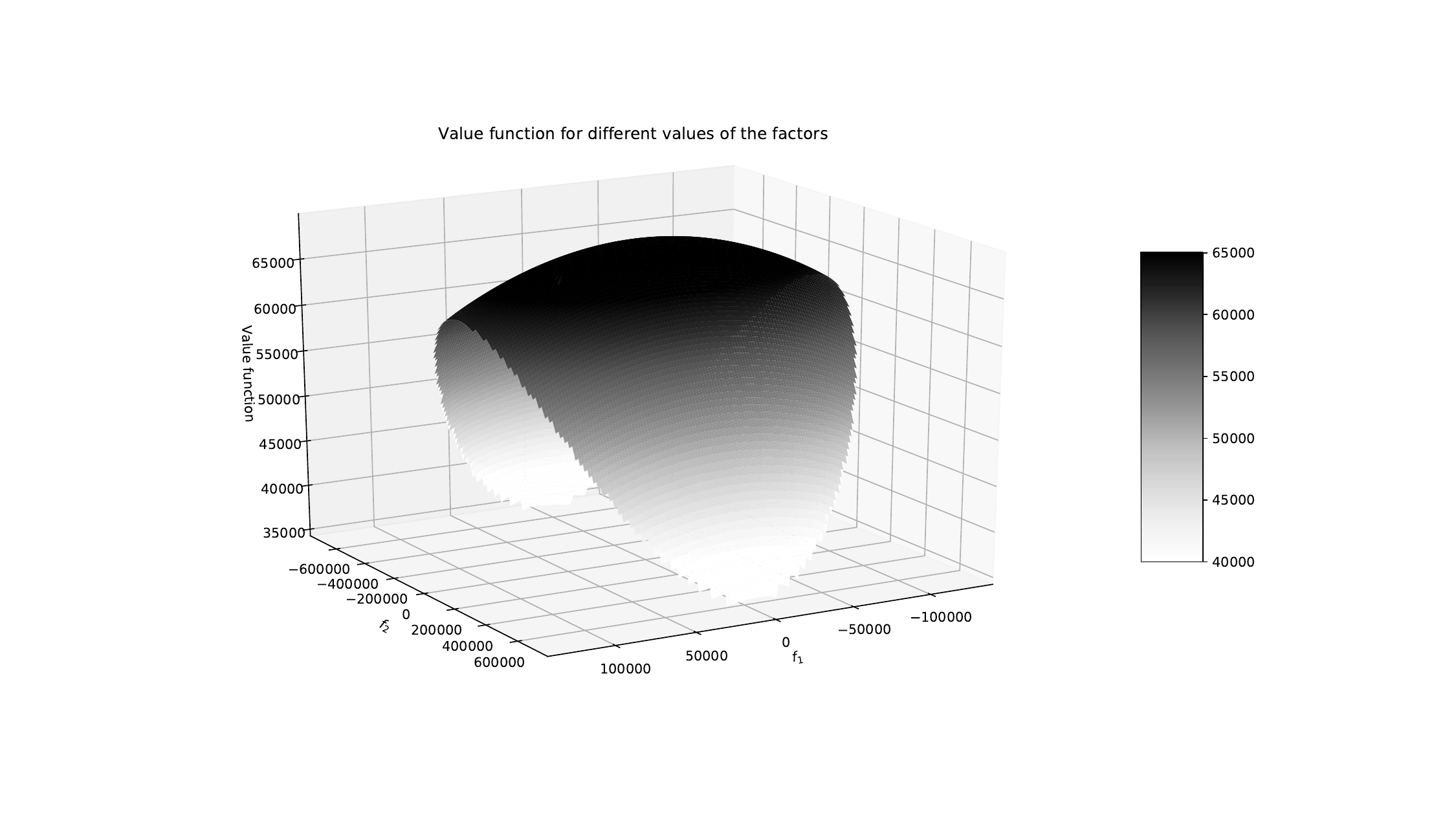}\\
\caption{Value function for different values of the factors.}\label{theta_f_3d}
\end{figure}

\begin{figure}[!h]\centering
\includegraphics[width=0.98\textwidth]{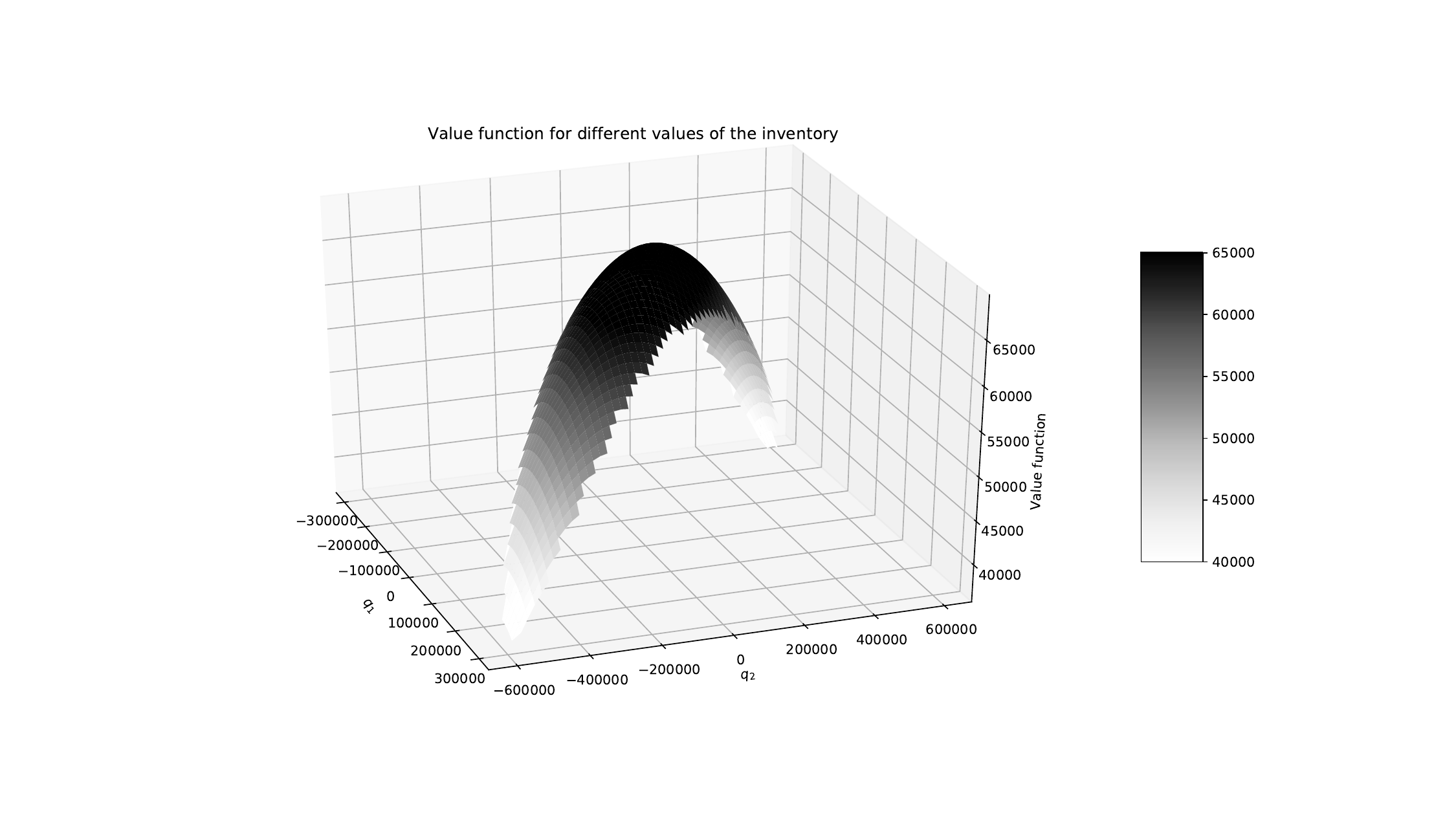}\\
\caption{Value function for different values of the inventory.}\label{theta_3d}
\end{figure}
\vspace{5mm}

\begin{figure}[!h]\centering
\includegraphics[width=0.98\textwidth]{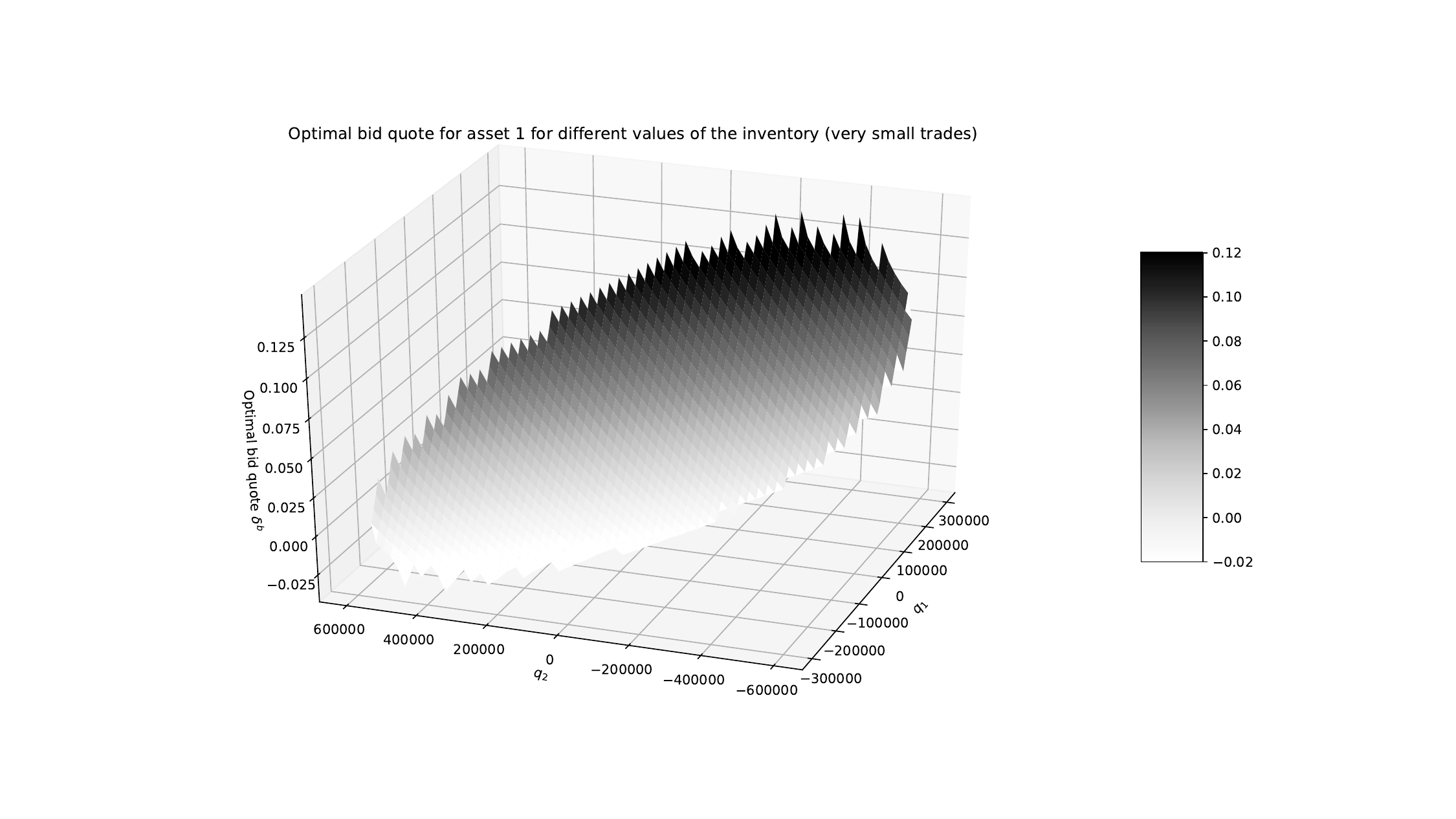}\\
\caption{Optimal bid quote for asset 1 for different values of the inventory (very small trades).}\label{delta_b_3d_asset_1_size_0}
\end{figure}
\vspace{5mm}

\begin{figure}[!h]\centering
\includegraphics[width=0.97\textwidth]{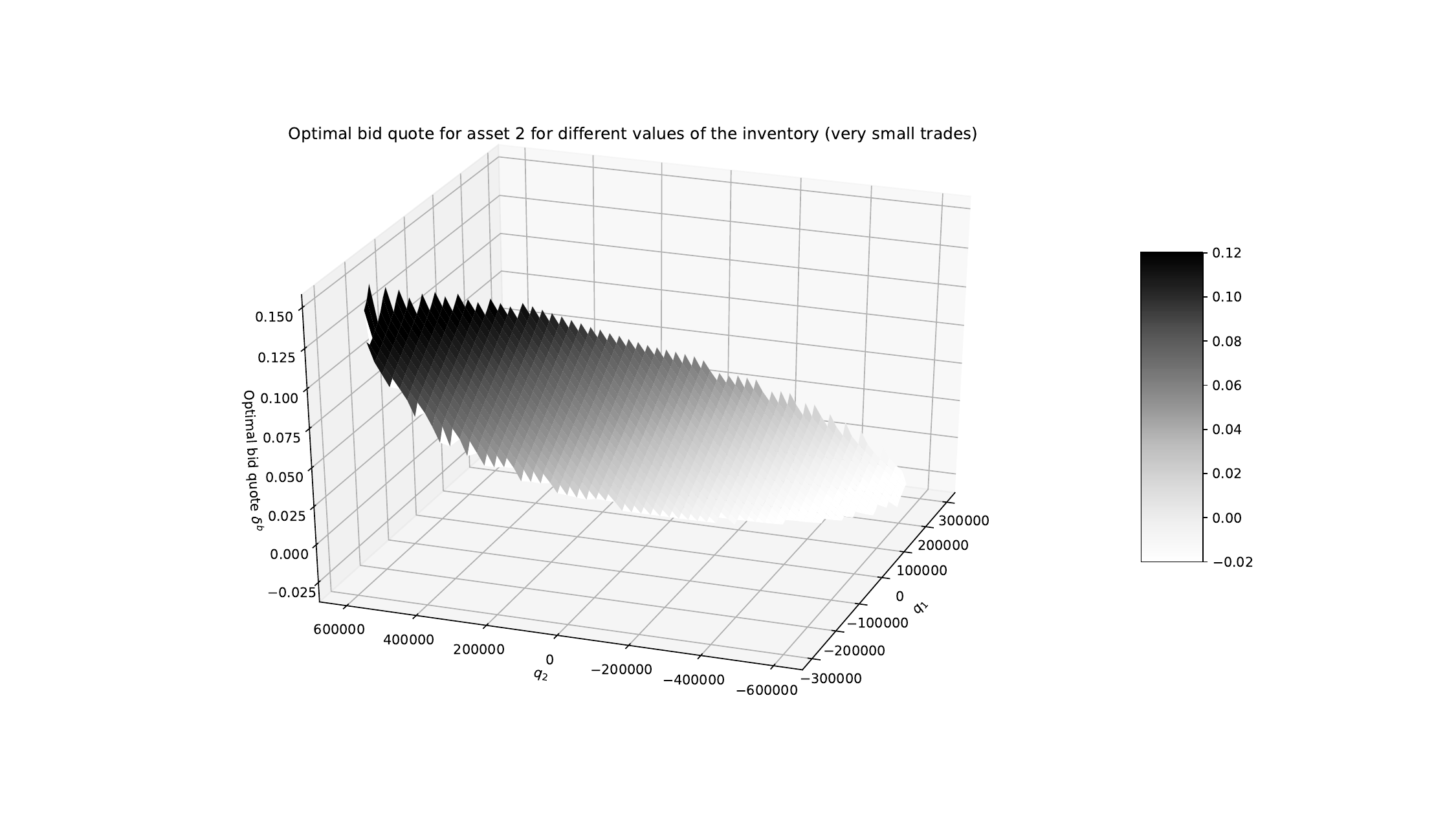}\\
\caption{Optimal bid quote for asset 2 for different values of the inventory (very small trades).}\label{delta_b_3d_asset_2_size_0}
\end{figure}

\begin{figure}[!h]\centering
\includegraphics[width=0.93\textwidth]{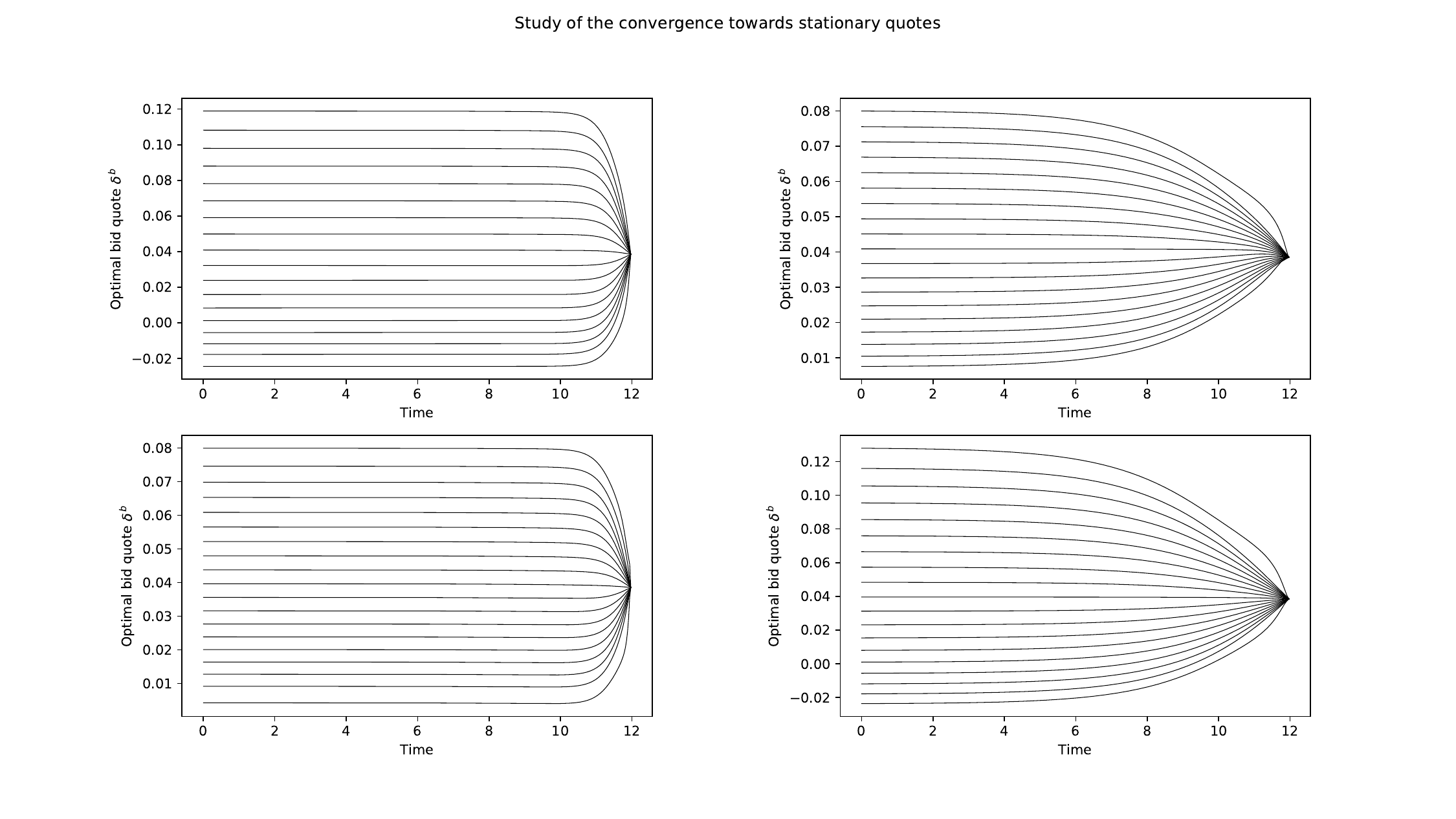}\\
\caption{Optimal bid quotes as a function of time for various values of the factors (very small trades). Top left: Asset~1 when $f^2 =0$. Top right: Asset 1 when $f^1 =0$. Bottom left: Asset 2 when $f^2 =0$. Bottom right: Asset 2 when $f^1 =0$.}\label{conv_deltas}
\end{figure}

To see the impact of the RFQ size on the optimal quotes, we plot in Figure \ref{deltas_asset_1_q_1_different_sizes} the four functions $$q^1 \mapsto \bar{\delta}^{1,b}(0,q^1,0,z^k), k \in \{1,\ldots,4\},$$ in Figure \ref{deltas_asset_1_q_2_different_sizes} the four functions $$q^2 \mapsto \bar{\delta}^{1,b}(0,0,q^2,z^k), k \in \{1,\ldots,4\},$$ and in Figure \ref{deltas_asset_1_major_axis_different_sizes} the four functions $$q^1 \mapsto \bar{\delta}^{1,b}(0,q^1,q_{\text{major axis}}^2(q^1),z^k), k \in \{1,\ldots,4\},$$ where $(q^1,q_{\text{major axis}}^2(q^1))$ spans the major axis of the ellipse of authorized inventories.\\

Likewise for asset 2: we plot in Figure \ref{deltas_asset_2_q_2_different_sizes} the four functions $$q^2 \mapsto \bar{\delta}^{2,b}(0,0,q^2,z^k), k \in \{1,\ldots,4\},$$ in Figure \ref{deltas_asset_2_q_1_different_sizes} the four functions $$q^1 \mapsto \bar\delta^{2,b}(0,q^1,0,z^k), k \in \{1,\ldots,4\},$$ and in Figure \ref{deltas_asset_2_major_axis_different_sizes} the four functions $$q^1 \mapsto \bar\delta^{2,b}(0,q^1,q_{\text{major axis}}^2(q^1),z^k), k \in \{1,\ldots,4\},$$ where $(q^1,q_{\text{major axis}}^2(q^1))$ spans the major axis of the ellipse of authorized inventories.\\

\begin{rem}
The reader should be aware that the $x$-axis does not span the same range of inventory on all graphs even when the variable is the same. This is due to the elliptic geometry of the set of authorized values for the inventory.\\
\end{rem}

We see that accounting for the size of RFQs significantly impacts the optimal quotes of asset 1. This is less the case for asset 2  (this difference is due to the fact that the volatility of asset 1 is twice that of asset 2). In all cases, the monotonicity is unsurprising.\footnote{Boundary effects related to impossible interpolation explain the surprising position of some extreme points.}

\begin{figure}[!h]\centering
\includegraphics[width=0.94\textwidth]{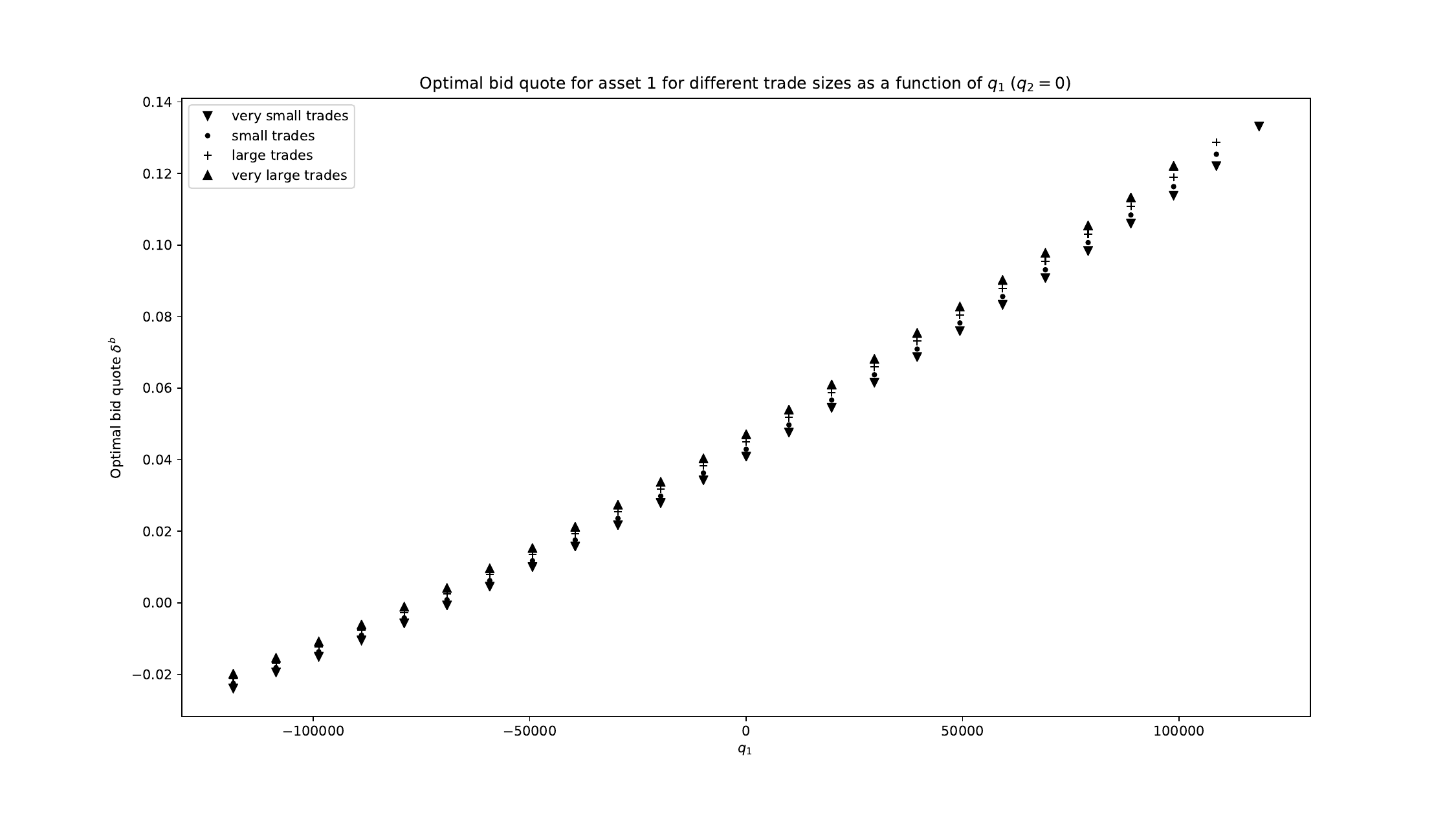}\\
\caption{Optimal bid quote for asset 1 for different trade sizes as a function of $q_1$ ($q_2 = 0$).}\label{deltas_asset_1_q_1_different_sizes}
\end{figure}

\begin{figure}[!h]\centering
\includegraphics[width=0.97\textwidth]{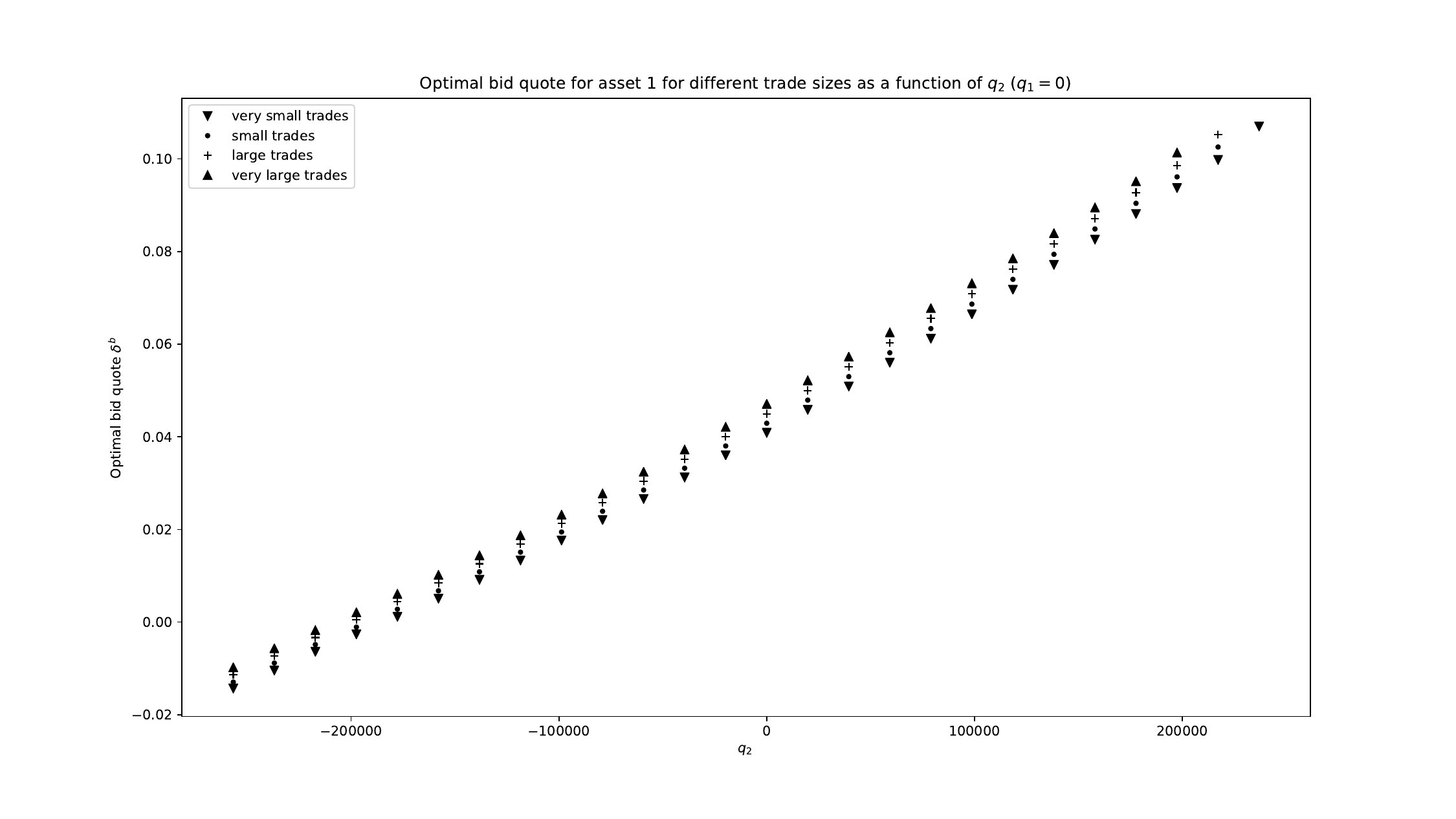}\\
\caption{Optimal bid quote for asset 1 for different trade sizes as a function of $q_2$ ($q_1 = 0$).}\label{deltas_asset_1_q_2_different_sizes}
\end{figure}

\begin{figure}[!h]\centering
\includegraphics[width=0.97\textwidth]{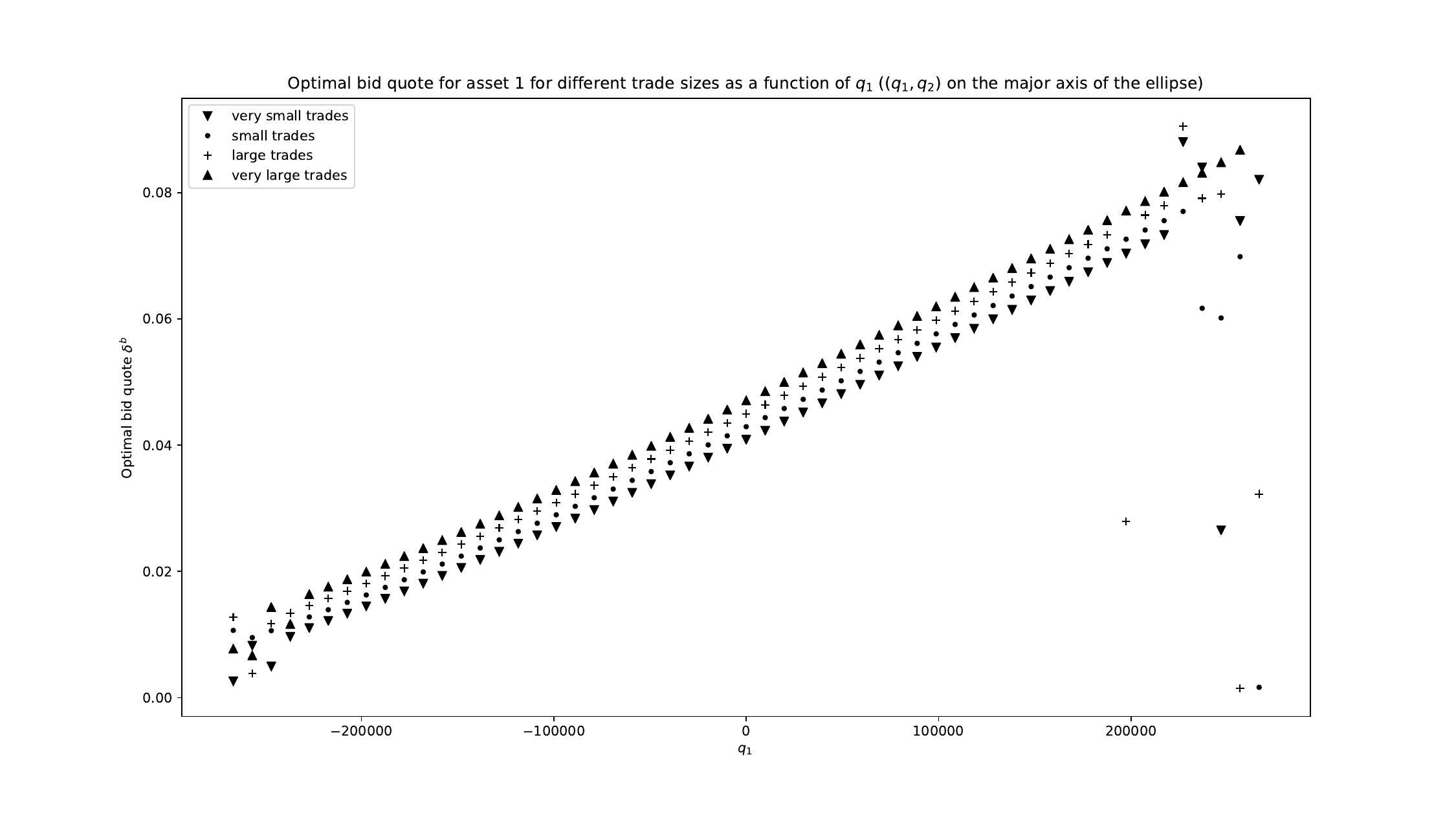}\\
\caption{Optimal bid quote for asset 1 for different trade sizes as a function of $q_1$ ($(q_1,q_2)$ on the major axis of the ellipse).}\label{deltas_asset_1_major_axis_different_sizes}
\end{figure}

\begin{figure}[!h]\centering
\includegraphics[width=0.97\textwidth]{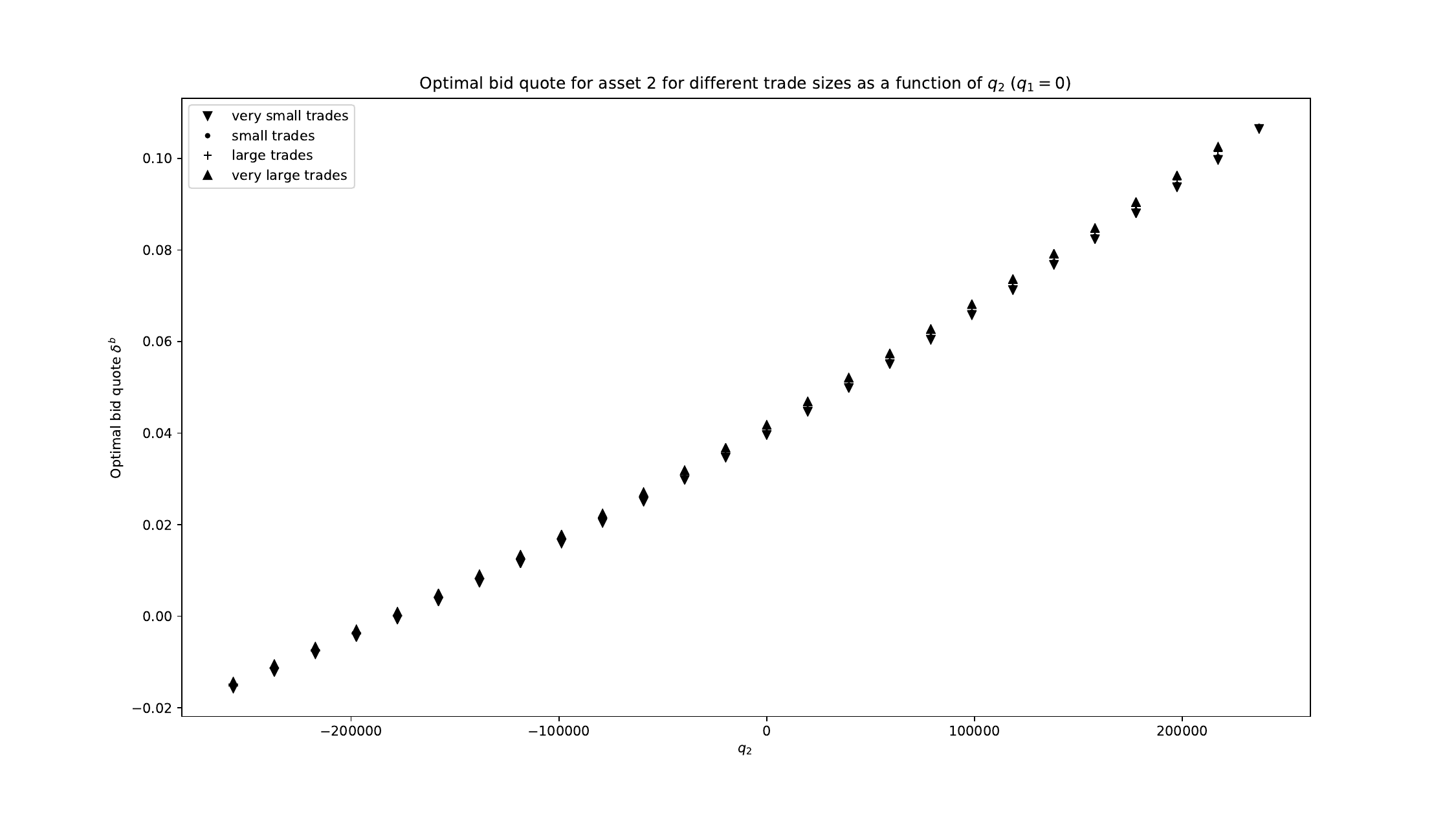}\\
\caption{Optimal bid quote for asset 2 for different trade sizes as a function of $q_2$ ($q_1 = 0$).}\label{deltas_asset_2_q_2_different_sizes}
\end{figure}

\begin{figure}[!h]\centering
\includegraphics[width=0.97\textwidth]{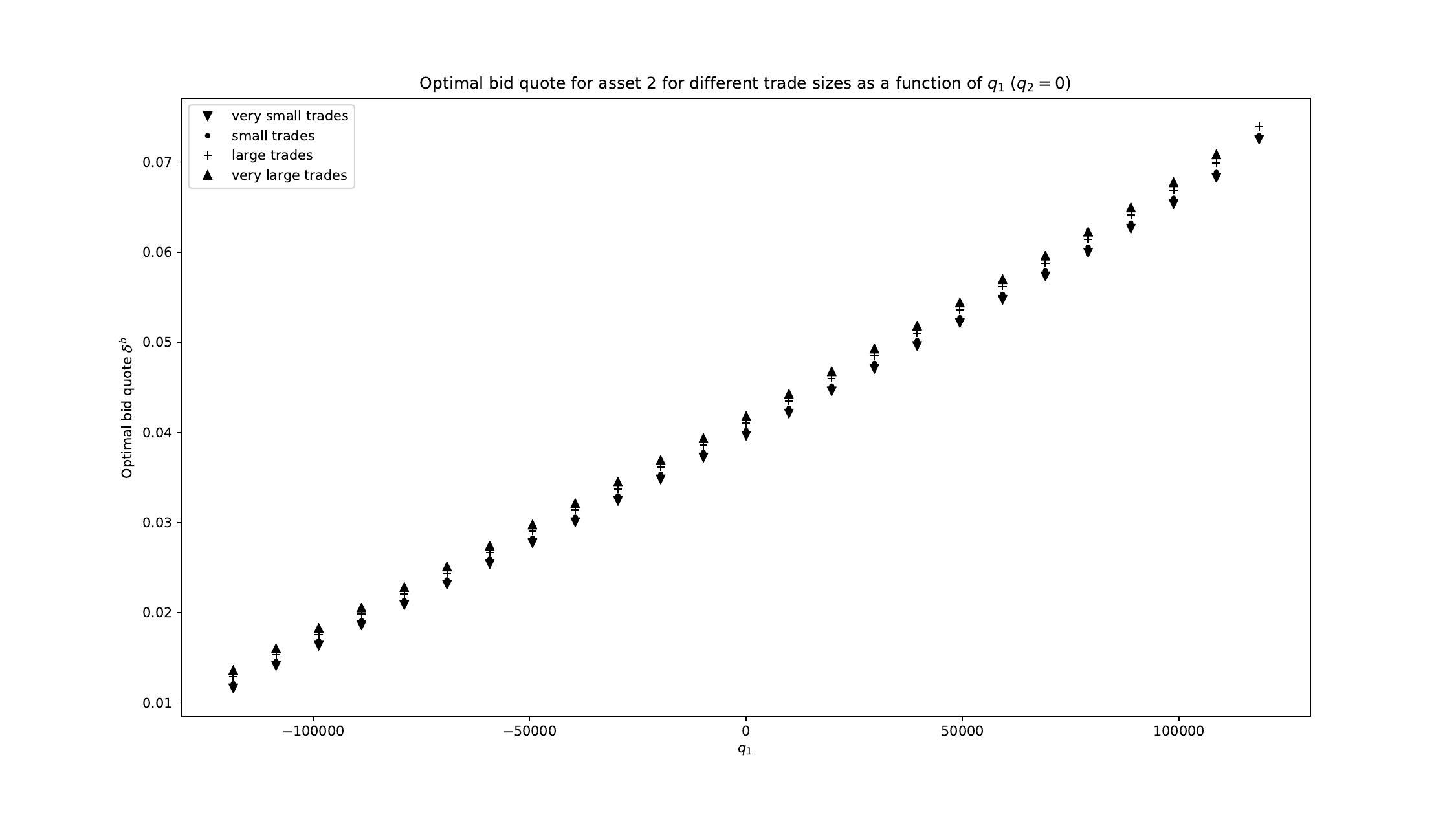}\\
\caption{Optimal bid quote for asset 2 for different trade sizes as a function of $q_1$ ($q_2 = 0$).}\label{deltas_asset_2_q_1_different_sizes}
\end{figure}

\begin{figure}[!h]\centering
\includegraphics[width=\textwidth]{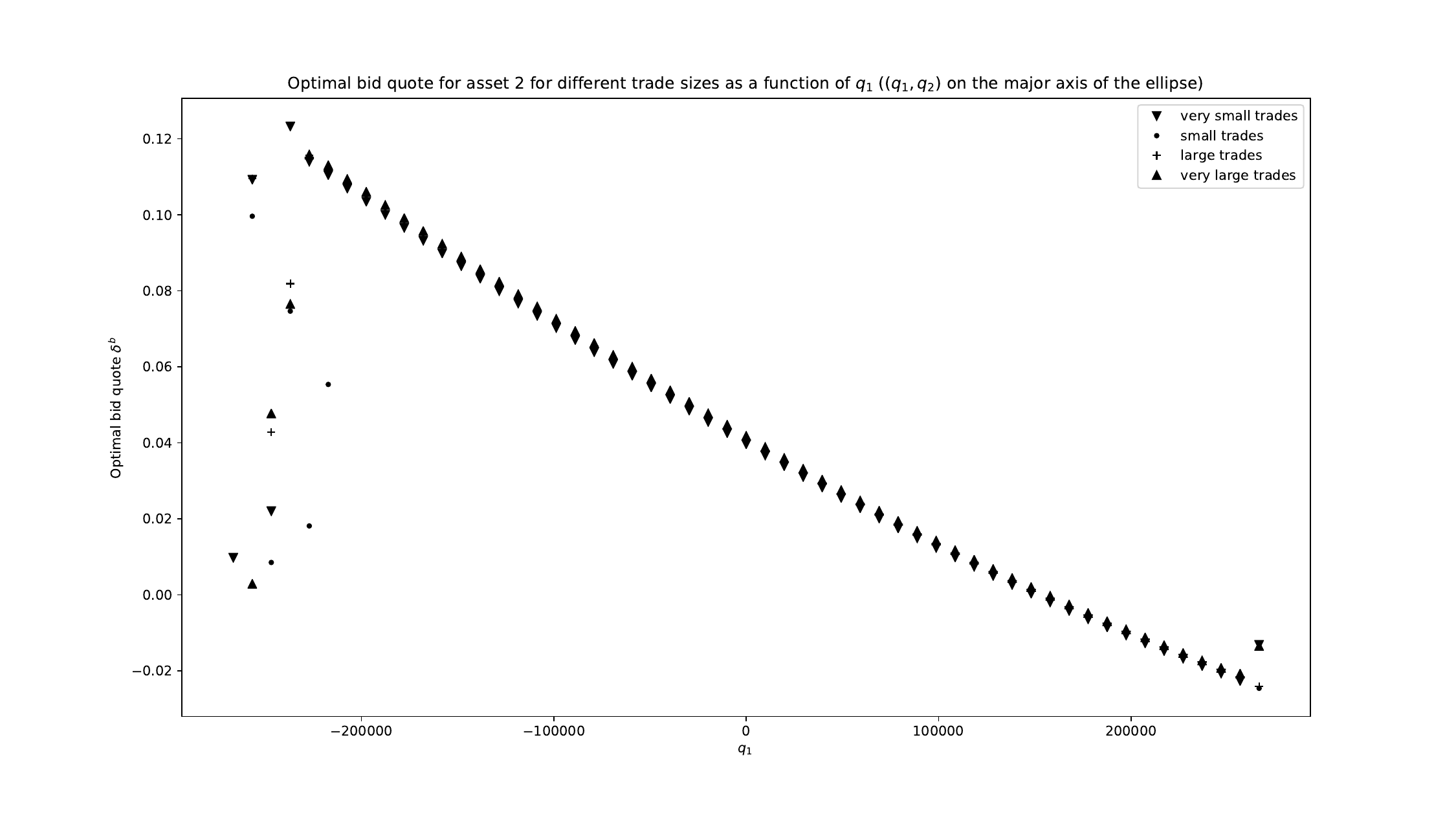}\\
\caption{Optimal bid quote for asset 2 for different trade sizes as a function of $q_1$ ($(q_1,q_2)$ on the major axis of the ellipse).}\label{deltas_asset_2_major_axis_different_sizes}
\end{figure}
In order to check that the value of $B$ defining the risk limits does not have a significant impact on our numerical approximation, we carried out a Monte-Carlo simulation with 2000 trajectories starting from zero inventory, using the optimal quotes. The distribution of inventory is plotted in Figure \ref{q_opt_dist}.\footnote{The shades of gray are in logarithmic scale.} We clearly see that the ellipse of authorized inventory is wide enough to have little influence on the outcome.\\ \\
The statistics associated with our simulations are documented in Table 1: the average PnL at time $T$, the standard deviation of that PnL at date $T$, the part of that standard deviation not related to market risk -- i.e. only related to the randomness of RFQs --,\footnote{Using the law of total variance, it is easy to see that $$\mathbb{V}(\text{PnL}_T) = \mathbb{E}\left[\int_0^T q_t'\Sigma q_t dt\right] + \mathbb{V}\left(\int_0^T \sum_{i=1}^{2} \int_{\mathbb{R}_{+}^{*}} \left(\delta^{i,b}(t,z)zJ^{i,b}(dt,dz) + \delta^{i,a}(t,z)zJ^{i,a}(dt,dz) \right)\right).$$ This formula enables to distinguish the part of the variance of the PnL at time $T$ coming from market risk (the first term) from that coming from the randomness of RFQs (the second term).} and the estimated value of the objective function, i.e. the empirical mean of $\text{PnL}_T - \frac \gamma 2 \int_0^T q_t'\Sigma q_t dt$.\\
\vspace{0.5cm}

\begin{table}[h!]
\centering
\begin{tabular}{|c|c|c|c|}
  \hline
  Mean PnL & Stdev PnL & Stdev coming from RFQs & Objective function \\
  \hline
  72081 & 80432 & 5959 & 69293\\
  \hline
\end{tabular}
\label{pnl_distrib_2d2f_table}
\caption{Statistics associated with our 2000 simulations starting from zero inventory (with optimal quotes).}
\end{table}

\vspace{1cm}

\begin{figure}[!h]\centering
\includegraphics[width=0.98\textwidth]{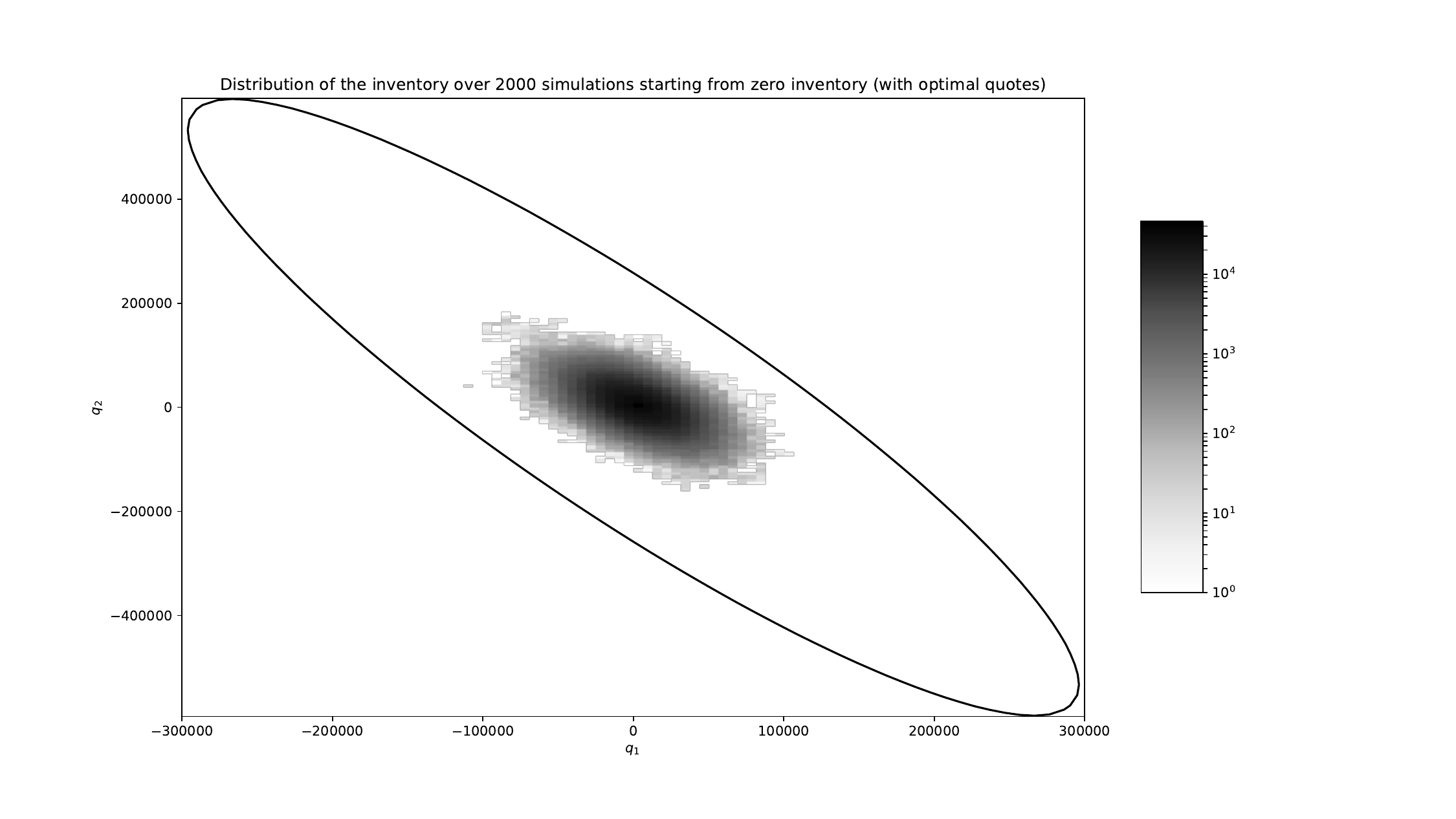}\\
\caption{Distribution of the inventory over 2000 simulations starting from zero inventory (with optimal quotes).}\label{q_opt_dist}
\end{figure}
These figures have to be compared with those associated with a basic naive strategy. A basic strategy consists, for each asset and side, in always quoting the same ``myopic'' quote that maximizes the expected instantaneous PnL. In other words, these myopic quotes are defined, for all $i \in\{1, \ldots, d\}$, by
$$\delta^{i,b}_{\text{myopic}} = \text{argmax\ } \delta \Lambda^{i,b}(\delta) \quad \text{and} \quad \delta^{i,a}_{\text{myopic}} = \text{argmax\ } \delta \Lambda^{i,a}(\delta).$$ In our case, the myopic quotes are all equal to $0.03854\ \textrm{\euro}$.\\

We carried out 2000 simulations using these quotes with the same source of randomness as above. The distribution of inventory is plotted in Figure \ref{q_myopic_dist}, and the statistics associated with our simulations are documented in Table 2.\\

\begin{table}[h!]
\centering
\begin{tabular}{|c|c|c|c|}
  \hline
  Mean PnL & Stdev PnL & Stdev coming from RFQs & Objective function \\
  \hline
  73410 & 265906 & 6211 & 43953\\
  \hline
\end{tabular}
\label{pnl_distrib_2dmyopic_table}
\caption{Statistics associated with our 2000 simulations starting from zero inventory (with myopic quotes).}
\end{table}

With these figures, we clearly see that the main source of risk is market risk and not the risk associated with the randomness of RFQs. This justifies our choice of objective function that only penalizes the part of the variance coming from market risk.\\

We also clearly see that the use of the optimal quotes drastically reduces the variance of the PnL and results in a high value of the objective function. More precisely, although the use of optimal quotes reduces the average PnL from around $73410$ to around $72081$, it enables a reduction by a factor $3$ of the standard deviation of the PnL from around $265906$ to around $80432$, hence a major increase of the objective function from around $43953$ to around $69293$ (a figure consistent with the maximum of the value function plotted in Figure \ref{theta_f_3d} which is approximately 69174).\\

\begin{figure}[!h]\centering
\includegraphics[width=0.98\textwidth]{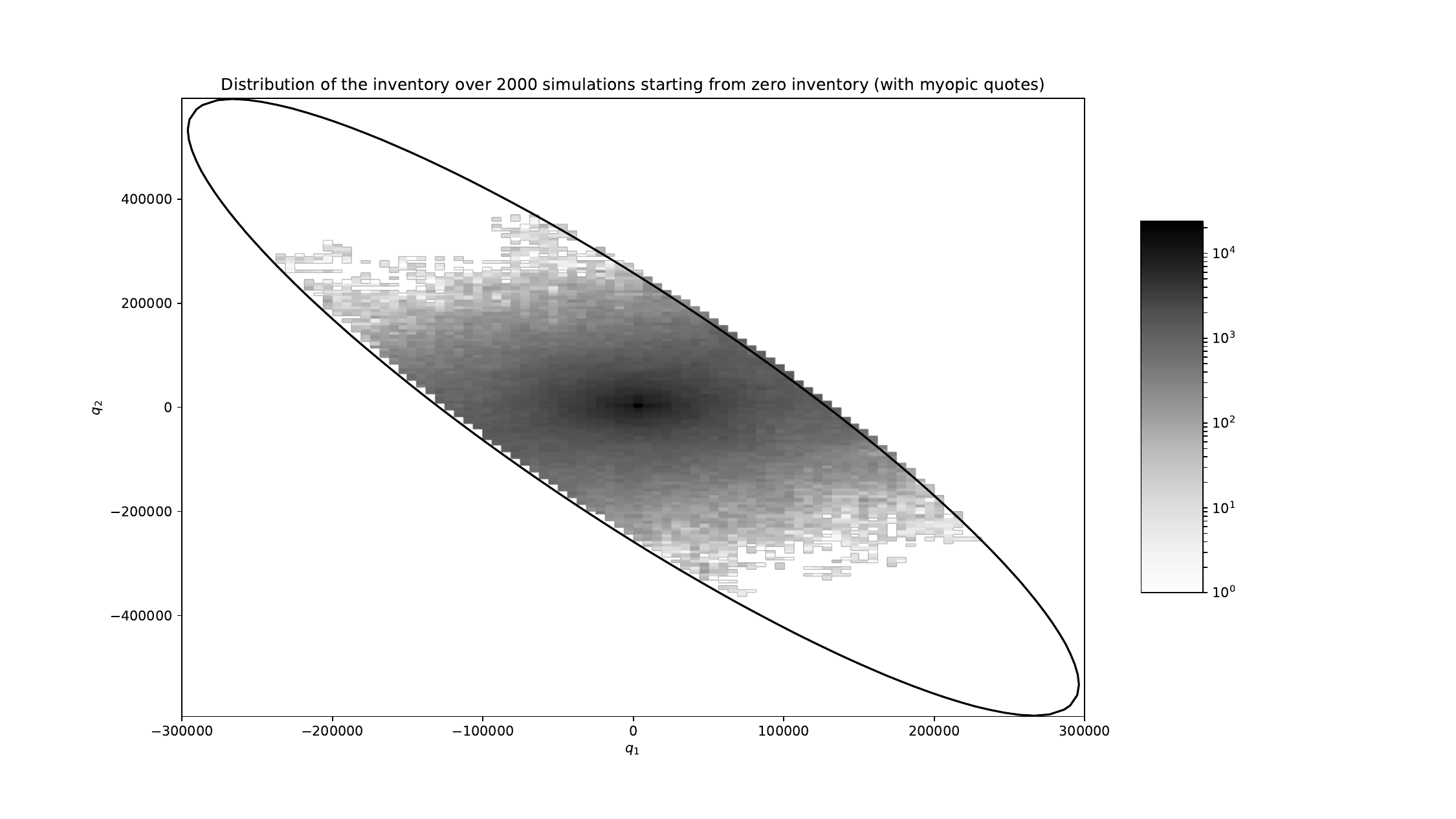}\\
\caption{Distribution of the inventory over 2000 simulations starting from zero inventory (with myopic quotes).}\label{q_myopic_dist}
\end{figure}

\subsubsection{Results with the one-factor model and comparison}

Let us now compare the results with two factors to the results with one factor, i.e. when the smallest eigenvalue of $\Sigma$ is replaced by $0$.\\

As above, we start with an approximation of the solution $\tilde{\theta}$ to \eqref{eqn:HJBfac} with one factor. We used a monotone explicit Euler scheme on a grid of size $141$ for the unique factor and the same discretization (with 4 sizes) as in the previous paragraphs for the RFQ size distribution. The set of authorized inventory $\{q \in \mathbb R^2 | q'\Sigma q  \le B\}$ is of course replaced by the set $\{q \in \mathbb R^2 | (\beta'q)'V (\beta'q) \le B\}$ which corresponds, in terms of the unique factor, to the interval $\left[-\sqrt {\frac BV}, \sqrt {\frac BV}\right]$.\\

 The value function (at time $t=0$) as a function of the inventory, obtained through linear interpolation is plotted in Figure \ref{theta_3d_1f}. The difference between the one-factor value function and the two-factor one is plotted in Figure \ref{diff_theta_3d_1f}.\\

We see that the value function in the one-factor case is above that of the two-factor case. This comes from the fact that not all the risk is taken into account in the one-factor case. We also see that the difference between the two value functions is very large at the two extremes of the major axis of the ellipse. This comes from the fact that the market maker using the one-factor model believes that positions close to the major axis of the ellipse are associated with low risk whereas this is less and less the case as the inventory in each asset increases in absolute value.\\

As in the two-factor case, we deduce from the value function the optimal bid and ask quotes of the market maker (at time $t=0$) as a function of inventory and request size. The optimal bid quotes for asset 1 and asset 2 (in the case of the smallest RFQ size) are plotted in Figures \ref{delta_b_3d_asset_1_size_0_1f} and \ref{delta_b_3d_asset_2_size_0_1f}. The differences between the optimal bid quotes in the one-factor case and two-factor case (in the case of the smallest RFQ size) are plotted in Figure \ref{diff_delta_b_3d_asset_1_size_0_1f} for asset 1 and Figure \ref{diff_delta_b_3d_asset_2_size_0_1f} for asset 2.\\

We clearly see that the larger (in absolute value) the inventory in each asset, the larger the difference in optimal quotes between the exact model and the one-factor approximation. This is in line with expectation.\\
\vspace{0.5cm}

\begin{figure}[!h]\centering
\includegraphics[width=0.98\textwidth]{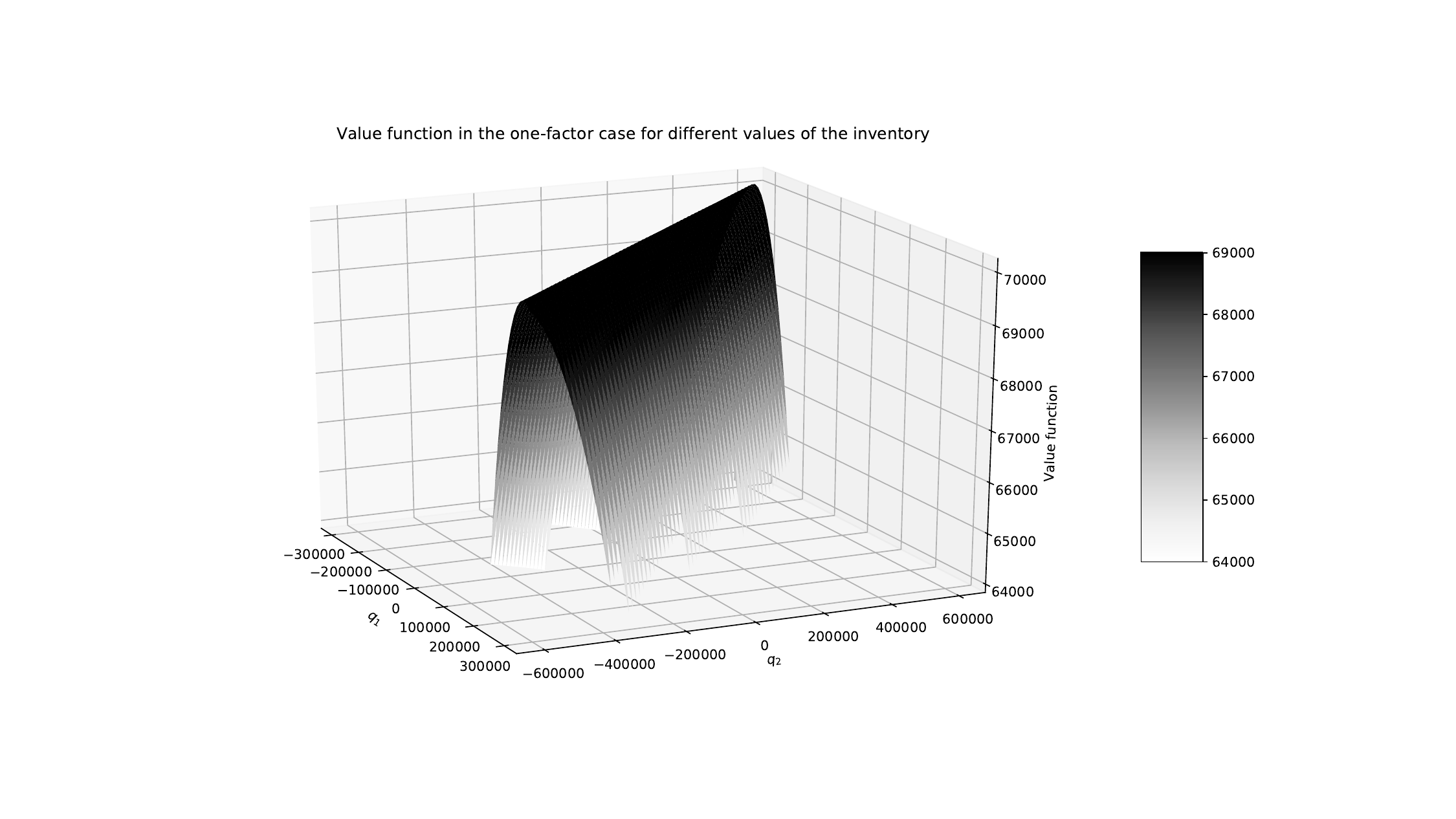}\\
\caption{Value function in the one-factor case for different values of the inventory.}\label{theta_3d_1f}
\end{figure}

\begin{figure}[!h]\centering
\includegraphics[width=0.98\textwidth]{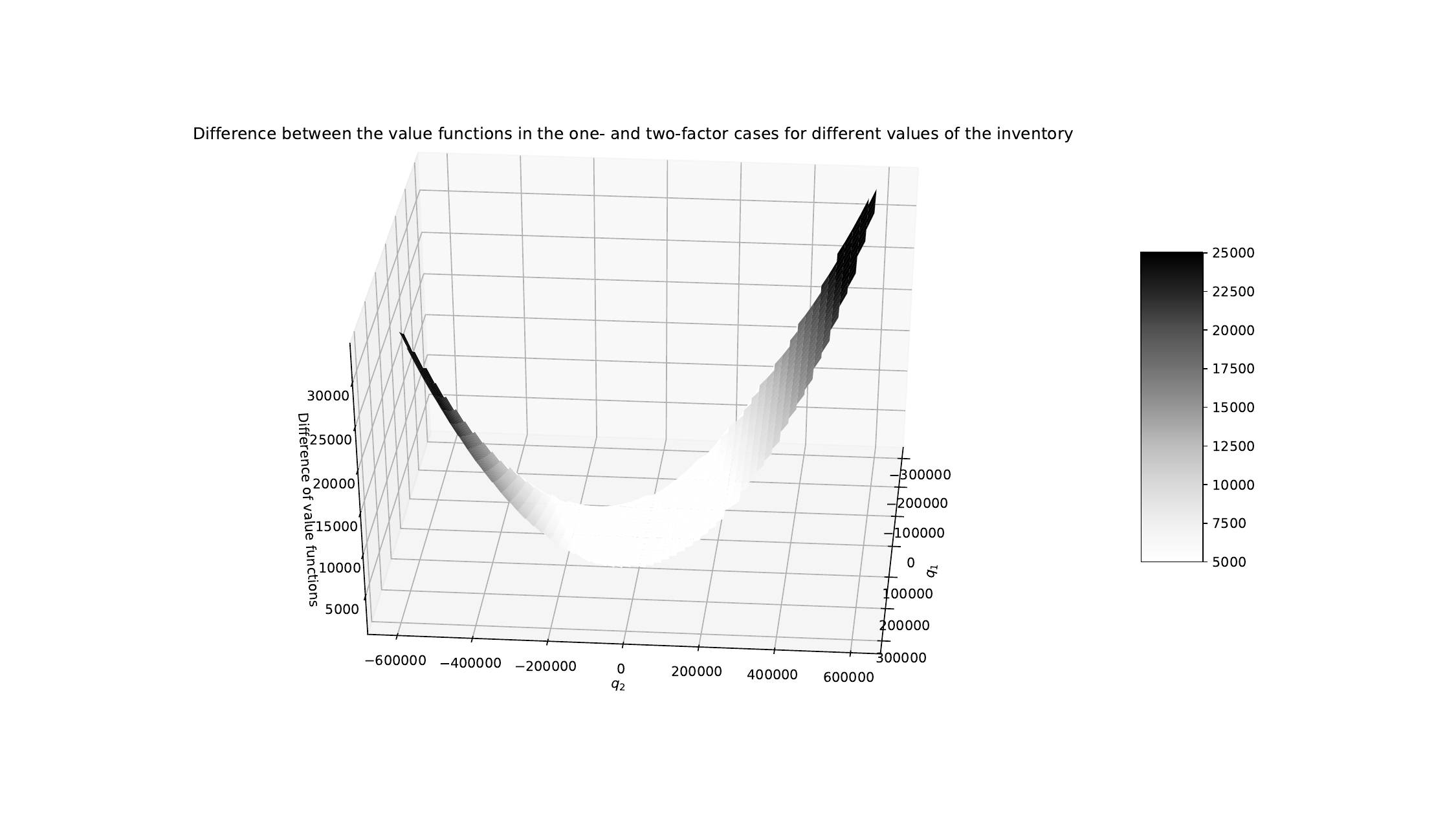}\\
\caption{Difference between the value functions in the one- and two-factor cases for different values of the inventory.}\label{diff_theta_3d_1f}
\end{figure}

\begin{figure}[!h]\centering
\includegraphics[width=0.98\textwidth]{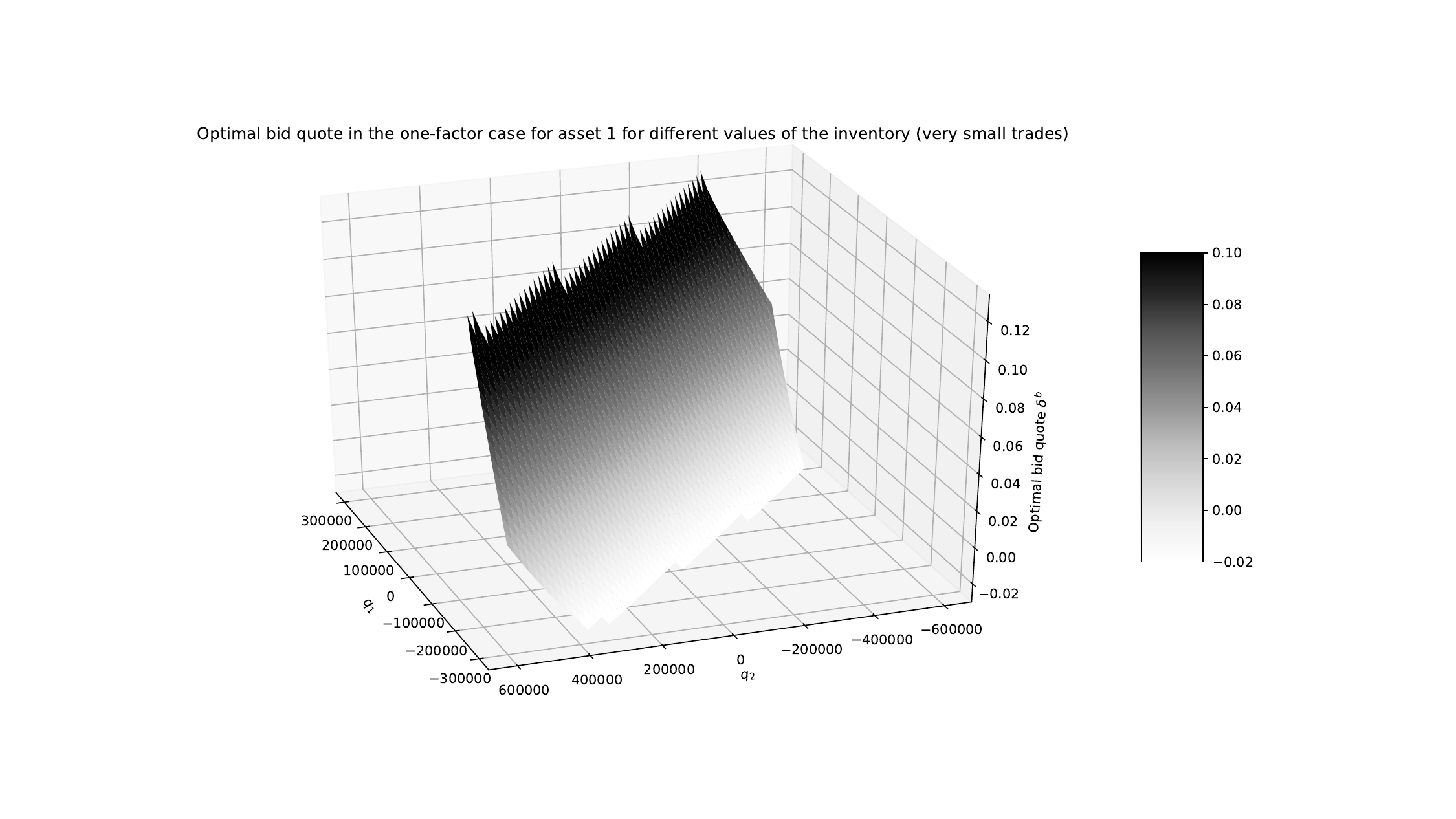}\\
\caption{Optimal bid quote in the one-factor case for asset 1 for different values of the inventory (very small trades).}\label{delta_b_3d_asset_1_size_0_1f}
\end{figure}

\begin{figure}[!h]\centering
\includegraphics[width=0.97\textwidth]{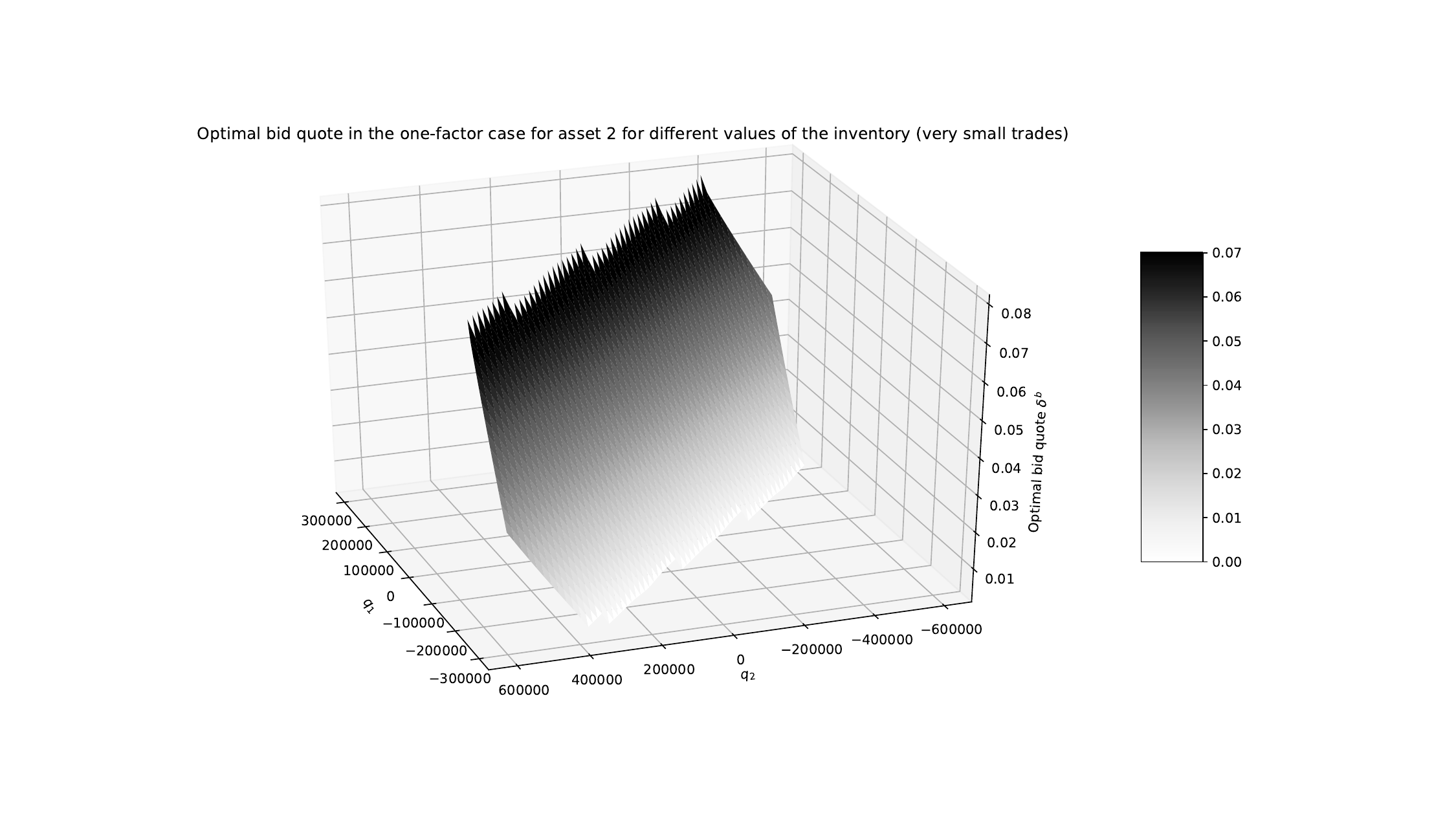}\\
\caption{Optimal bid quote in the one-factor case for asset 2 for different values of the inventory (very small trades).}\label{delta_b_3d_asset_2_size_0_1f}
\end{figure}

\begin{figure}[!h]\centering
\includegraphics[width=0.95\textwidth]{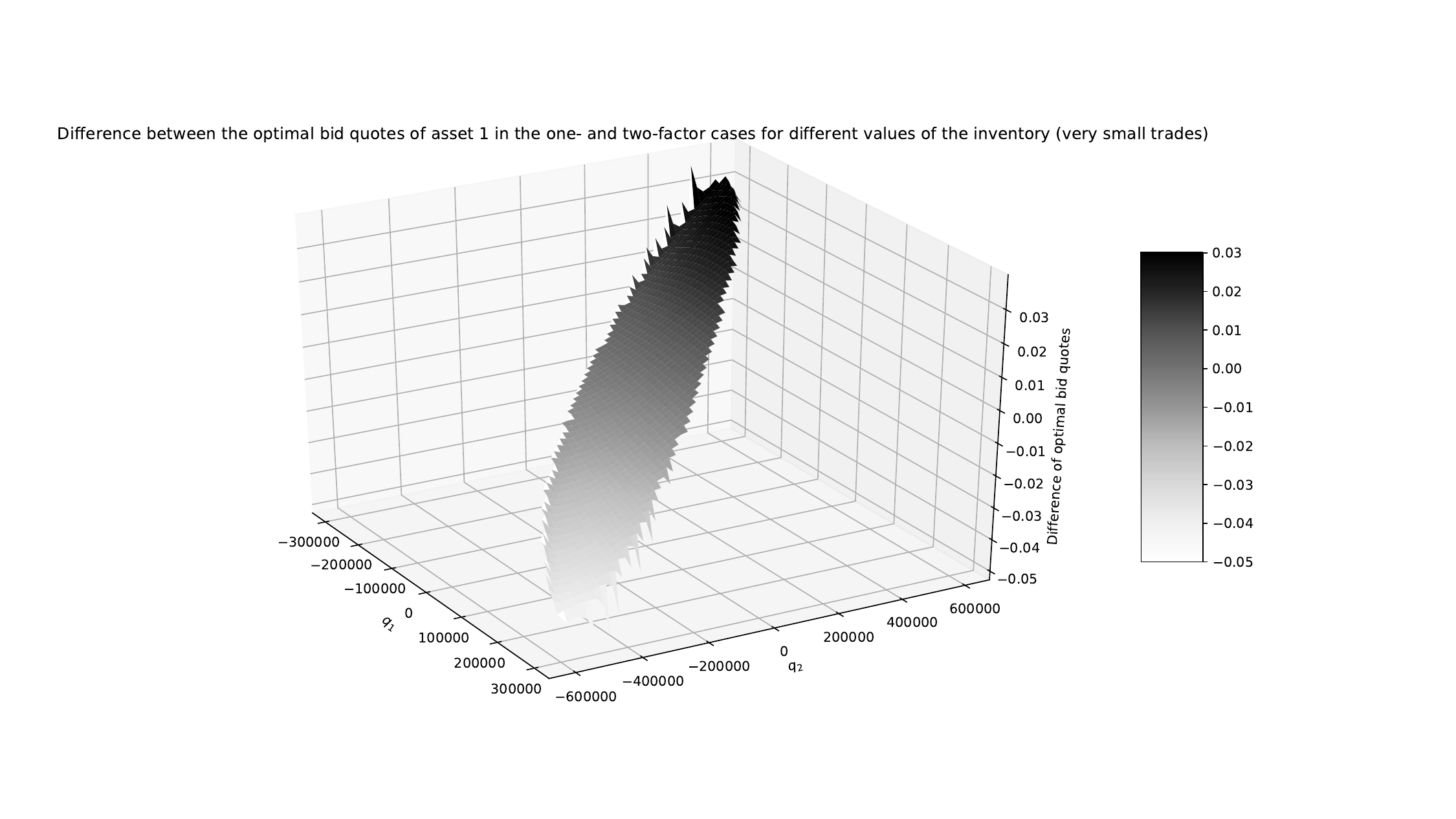}\\
\caption{Difference between the optimal bid quotes of asset 1 in the one- and two-factor cases for different values of the inventory (very small trades).}\label{diff_delta_b_3d_asset_1_size_0_1f}
\end{figure}

\begin{figure}[!h]\centering
\includegraphics[width=0.95\textwidth]{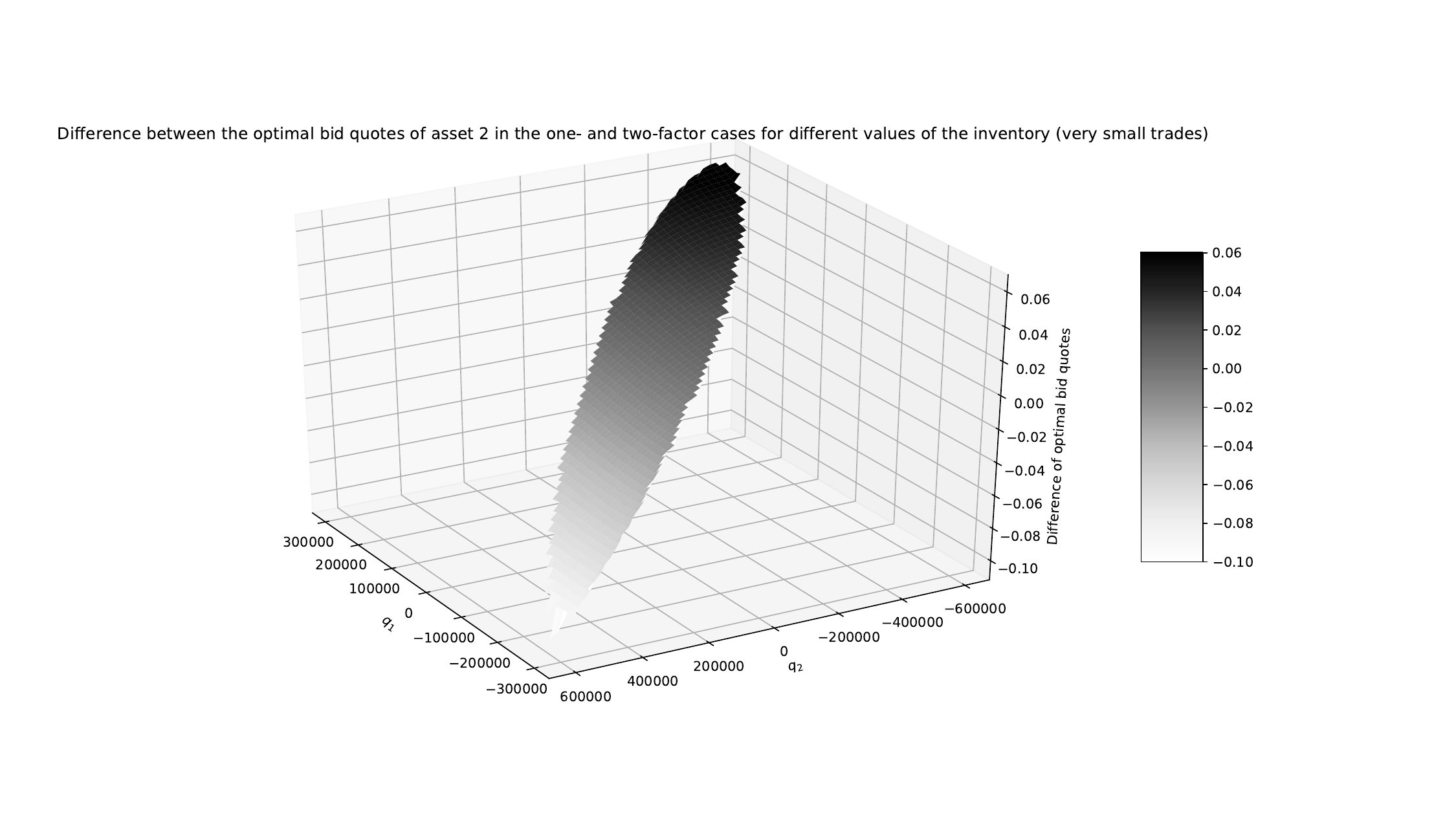}\\
\caption{Difference between the optimal bid quotes of asset 2 in the one- and two-factor cases for different values of the inventory (very small trades).}\label{diff_delta_b_3d_asset_2_size_0_1f}
\end{figure}

As in the two-factor case, we checked the convergence of the quotes towards their asymptotic values. We see in Figure \ref{conv_deltas_1f} that the optimal quotes reach their stationary values in less than $3$ days, far before the $12$ days.\\

\begin{figure}[!h]\centering
\includegraphics[width=0.8\textwidth]{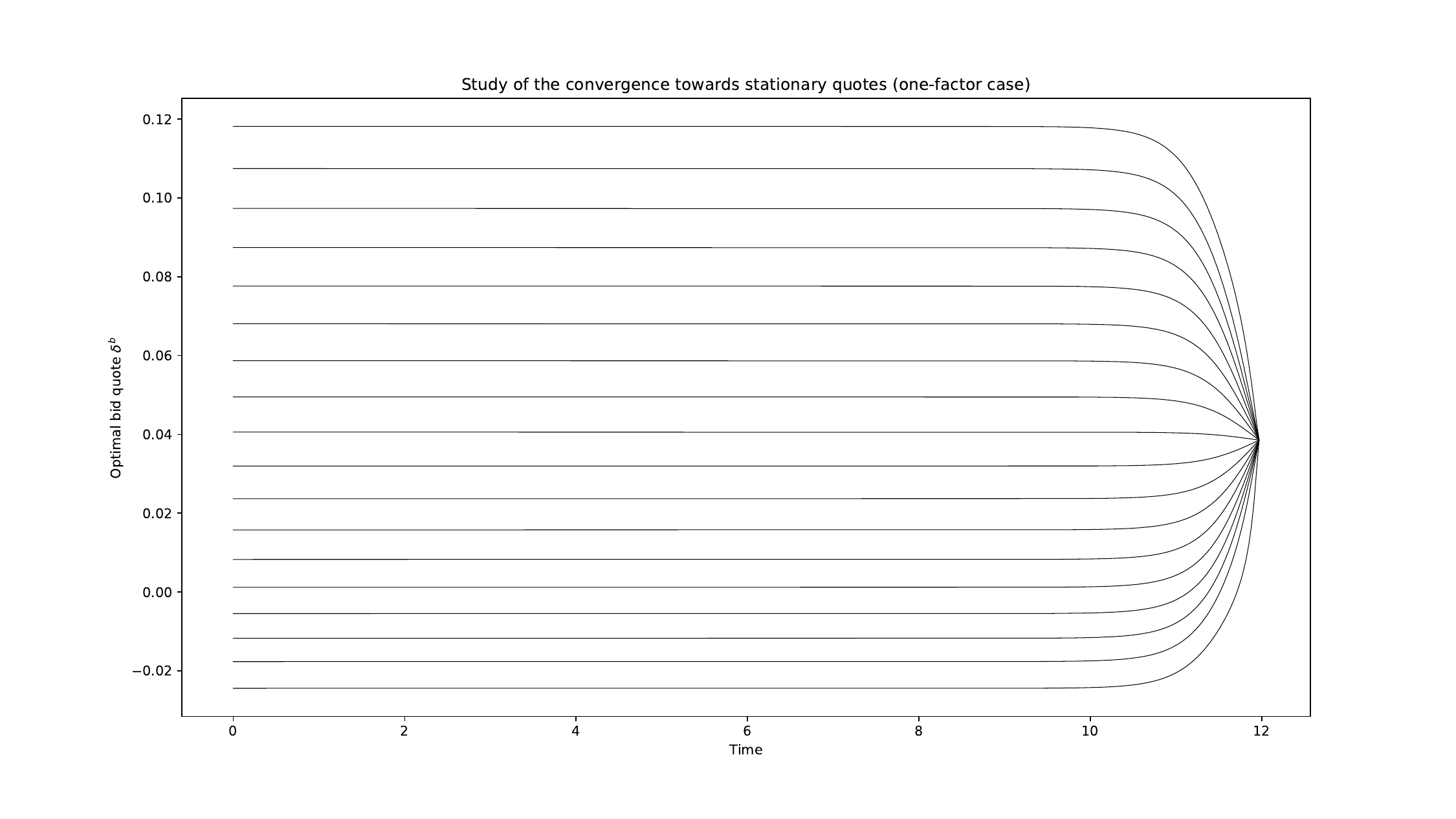}\\
\caption{Optimal bid quote of asset 1 as a function of time for various values of the factor.}\label{conv_deltas_1f}
\end{figure}

To better compare the quotes and see the impact of RFQ size, we plot in Figures \ref{diff_deltas_asset_1_q_1_different_sizes_1f}, \ref{diff_deltas_asset_1_q_2_different_sizes_1f}, and \ref{diff_deltas_asset_1_major_axis_different_sizes_1f} the optimal bid quotes of asset 1 when $q_2=0$ for different values of $q_1$, when $q_1=0$ for different values of $q_2$, and when $(q_1, q_2)$  spans the major axis of the ellipse of authorized inventory.\\

\begin{figure}[!h]\centering
\includegraphics[width=0.9\textwidth]{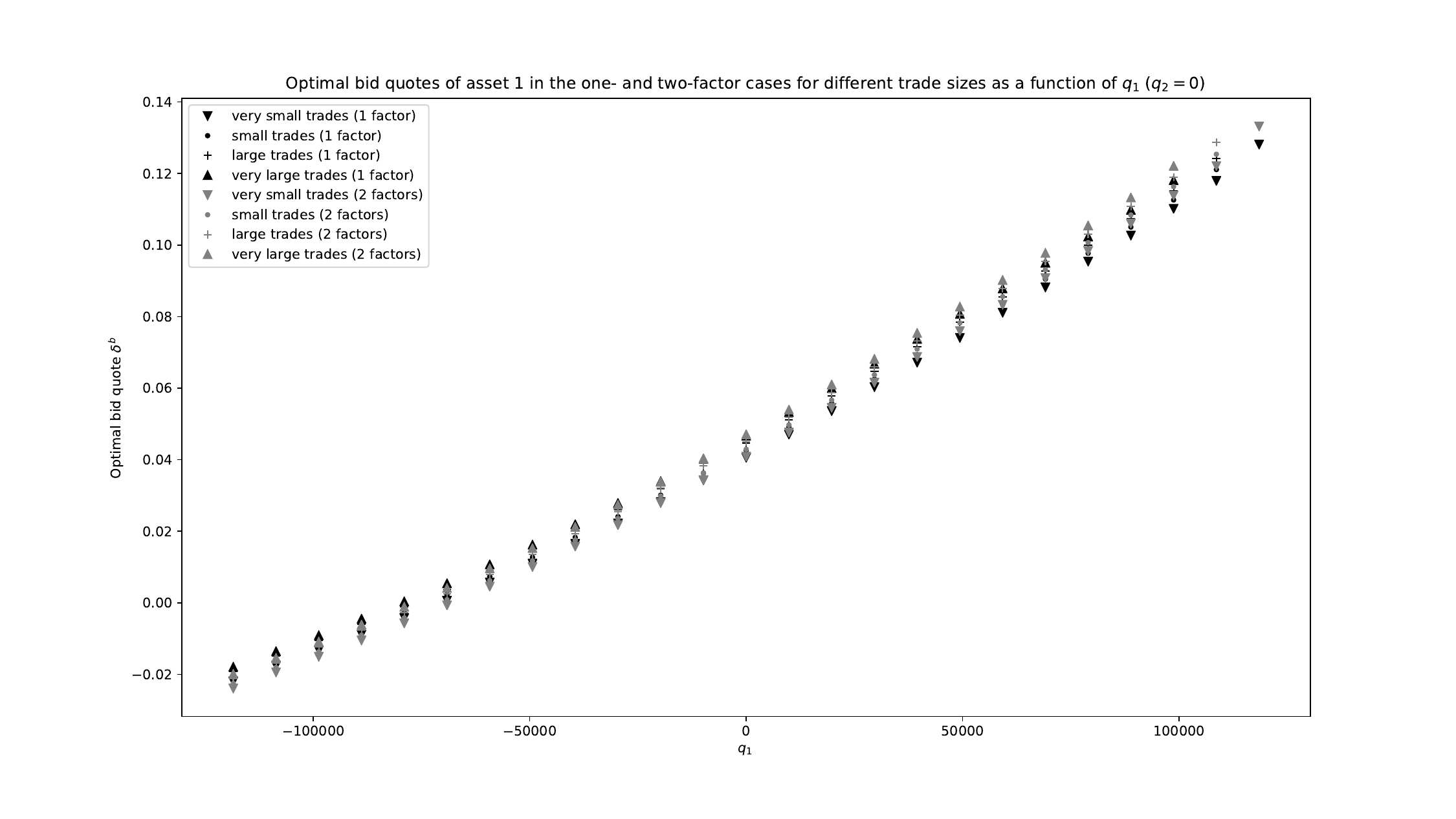}\\
\caption{Optimal bid quote for asset 1 for different trade sizes as a function of $q_1$ ($q_2 = 0$).}\label{diff_deltas_asset_1_q_1_different_sizes_1f}
\end{figure}

\begin{figure}[!h]\centering
\includegraphics[width=0.9\textwidth]{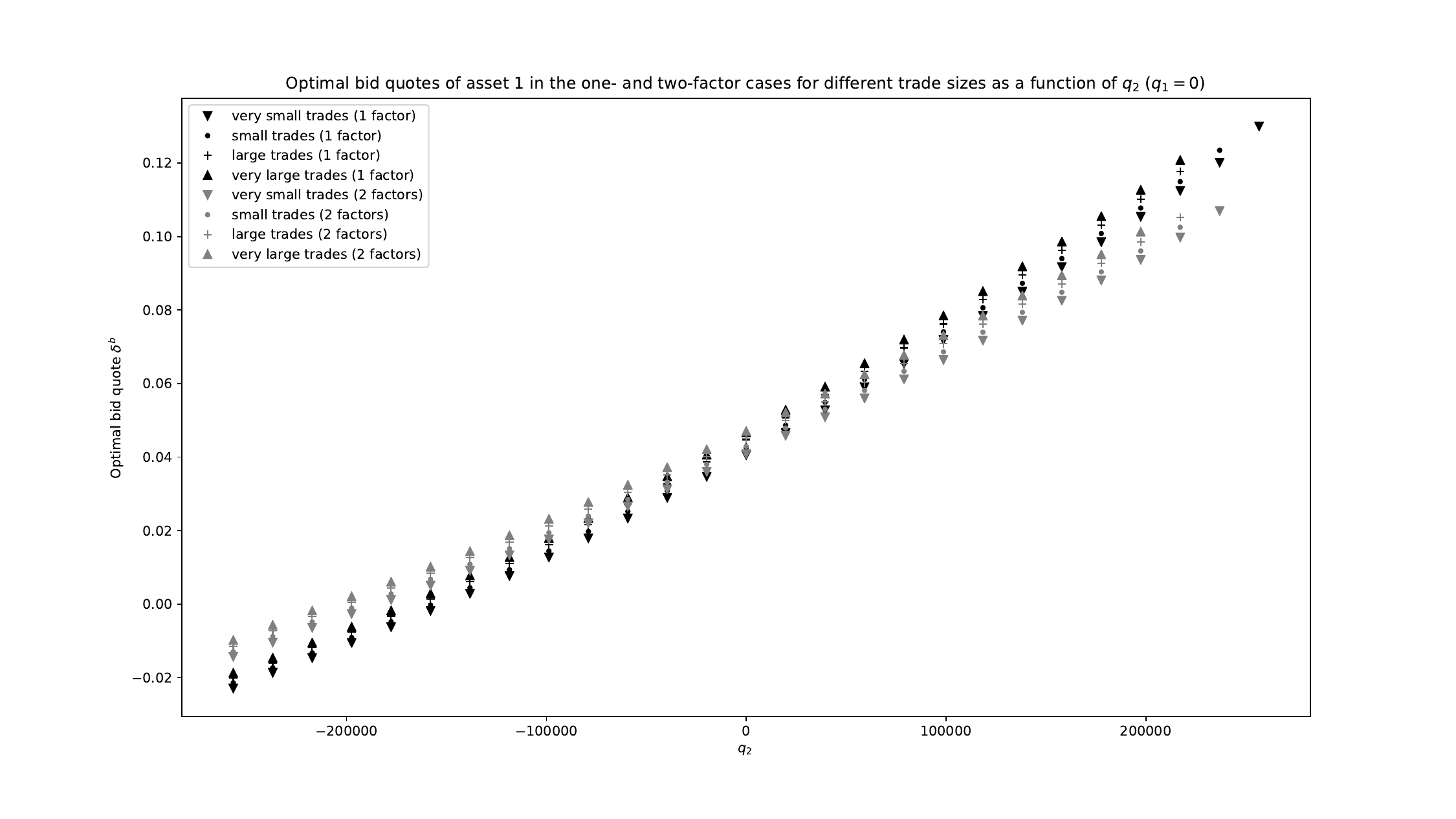}\\
\caption{Optimal bid quote for asset 1 for different trade sizes as a function of $q_2$ ($q_1 = 0$).}\label{diff_deltas_asset_1_q_2_different_sizes_1f}
\end{figure}

\begin{figure}[!h]\centering
\includegraphics[width=0.9\textwidth]{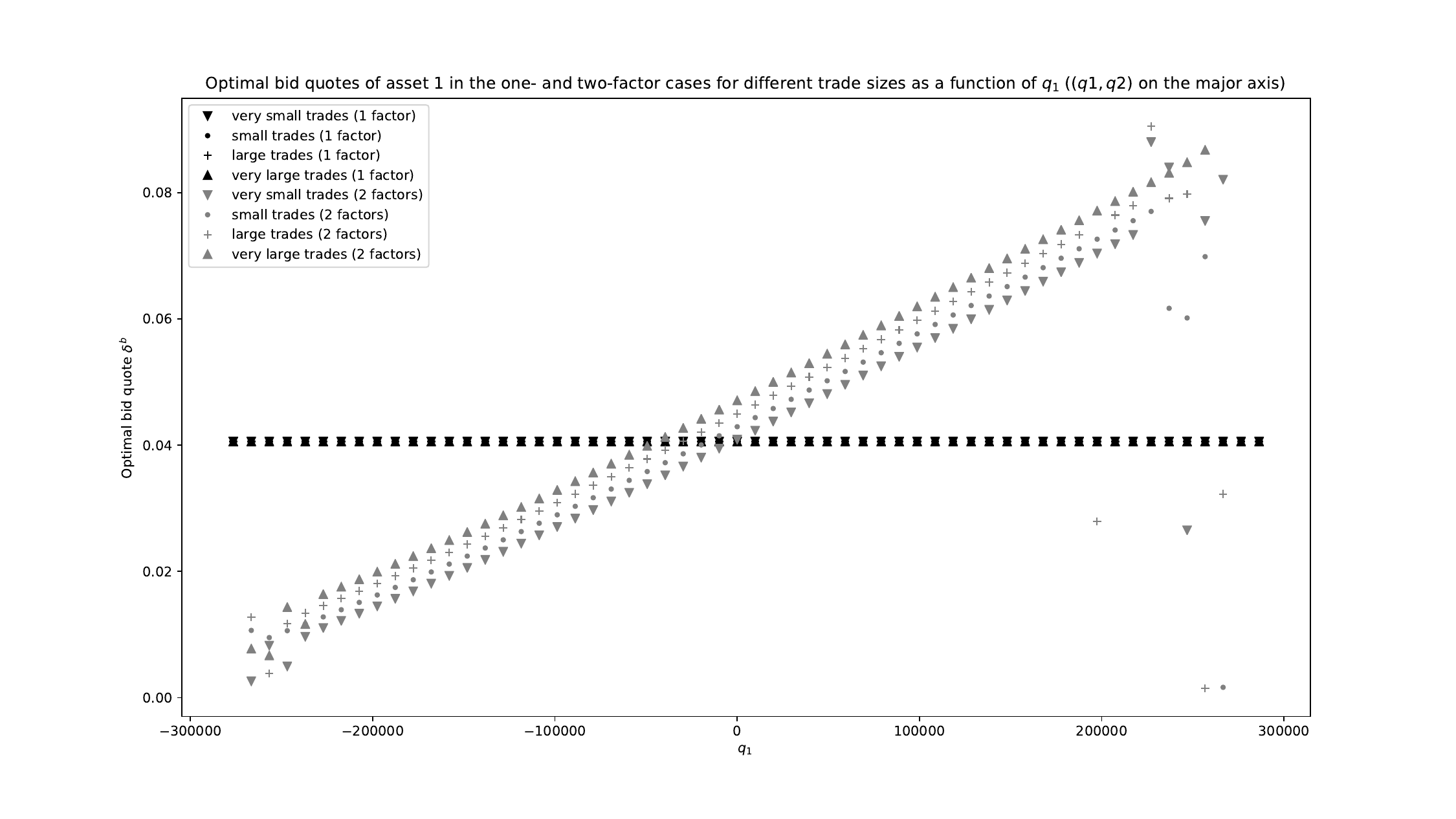}\\
\caption{Optimal bid quote for asset 1 for different trade sizes as a function of $q_1$ ($(q_1,q_2)$ on the major axis of the ellipse).}\label{diff_deltas_asset_1_major_axis_different_sizes_1f}
\end{figure}

Likewise, we plot in Figures \ref{diff_deltas_asset_2_q_2_different_sizes_1f}, \ref{diff_deltas_asset_2_q_1_different_sizes_1f}, and \ref{diff_deltas_asset_2_major_axis_different_sizes_1f} the optimal bid quotes of asset 2 when $q_1=0$ for different values of $q_2$, when $q_2=0$ for different values of $q_1$, and when $(q_1, q_2)$  spans the major axis of the ellipse of authorized inventory.\\

\begin{figure}[!h]\centering
\includegraphics[width=0.97\textwidth]{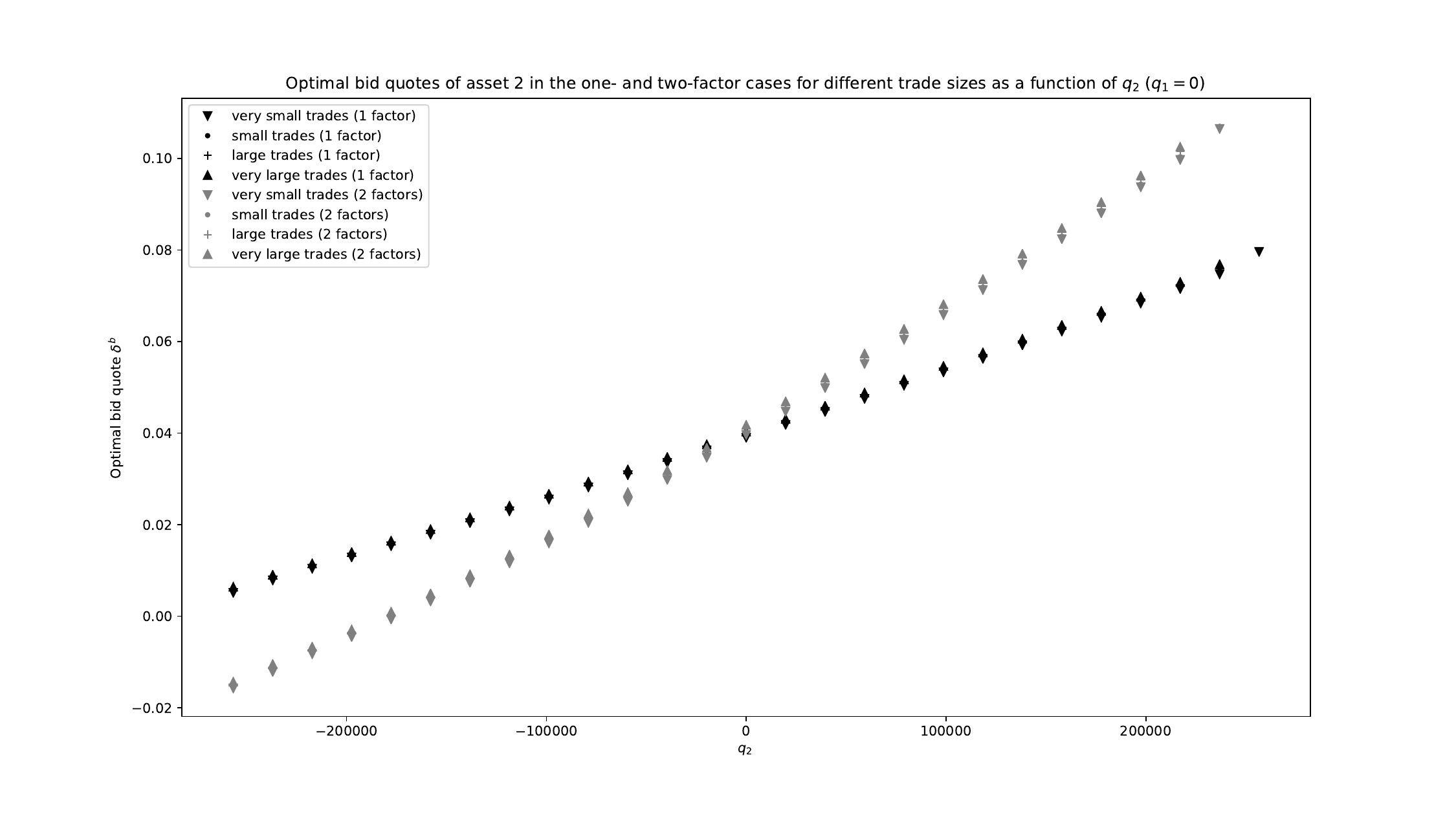}\\
\caption{Optimal bid quote for asset 2 for different trade sizes as a function of $q_2$ ($q_1 = 0$).}\label{diff_deltas_asset_2_q_2_different_sizes_1f}
\end{figure}

\begin{figure}[!h]\centering
\includegraphics[width=0.97\textwidth]{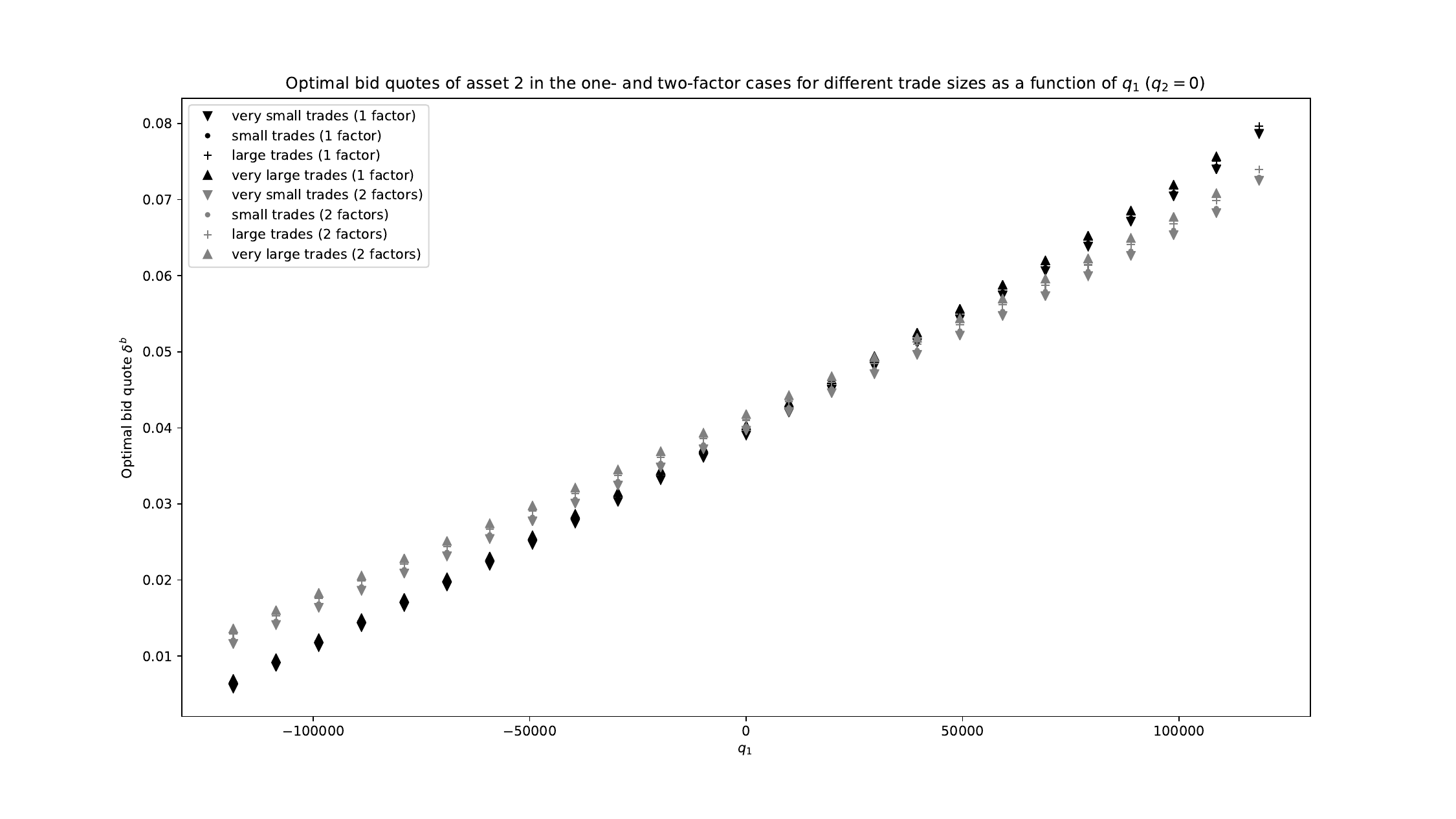}\\
\caption{Optimal bid quote for asset 2 for different trade sizes as a function of $q_1$ ($q_2 = 0$).}\label{diff_deltas_asset_2_q_1_different_sizes_1f}
\end{figure}

\begin{figure}[!h]\centering
\includegraphics[width=\textwidth]{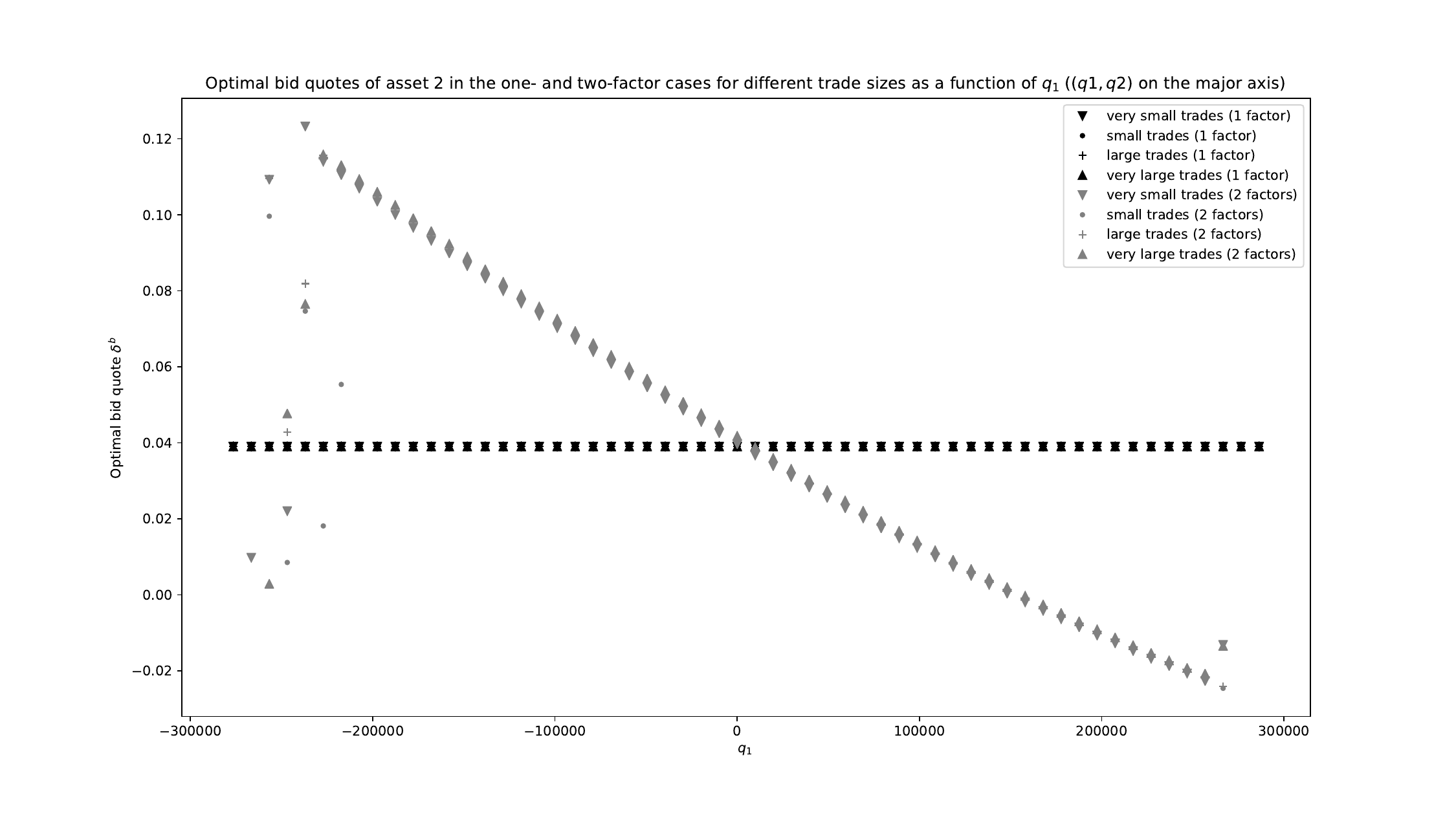}\\
\caption{Optimal bid quote for asset 2 for different trade sizes as a function of $q_1$ ($(q_1,q_2)$ on the major axis of the ellipse).}\label{diff_deltas_asset_2_major_axis_different_sizes_1f}
\end{figure}

These plots confirm that the optimal quotes in the one-asset model are good approximations of the true optimal ones whenever the inventory in each asset is not too large, all the more for inventories that are not close to the major axis of the ellipse of authorized inventory. Moreover in our example, the one-factor model seems to return quotes closer to the true optimal ones for asset 1 than for asset 2: this is due to the fact that here, the factor obtained through PCA explains better the risk of asset 1 than that of asset 2 (the residual variance of the latter is four times that of the former). \\

Comparing quotes is important but what really matters is to compare the distribution of the PnL at time $T$ when using the quotes obtained within the one-factor model with the distribution of the PnL at time $T$ when using the optimal quotes (of the two-factor model). For that purpose, we carried out a Monte-Carlo with 2000 simulations using the same source of randomness as in Section 4.1.2. The distribution of inventory when using the optimal quotes in the one-factor model is plotted in Figure \ref{q_opt_1f_dist}. The statistics associated with our simulations are documented in Table 3.\\

\begin{table}[h!]
\centering
\begin{tabular}{|c|c|c|c|}
  \hline
  Mean PnL & Stdev PnL & Stdev coming from RFQs & Objective function \\
  \hline
  72523 & 96746 & 6033 & 68567\\
  \hline
\end{tabular}
\caption{Statistics associated with our 2000 simulations starting from zero inventory (with the optimal quotes of the one-factor case).}
\end{table}

We clearly see that the performance of the one-factor approximation is very good. The value of the objective function is indeed around $68567$ when using the quotes obtained in the one-factor model, a bit smaller than the value of approximately $69293$ obtained with the same source of randomness when using the optimal quotes. In fact, the average PnL is higher with the one-factor optimal quotes but, since the distribution of inventory is denser in areas that are falsely believed to be risk-free or wrongfully associated with low risk, the standard deviation of the PnL is in fact higher (around $96746$ versus around $80432$), resulting in a lower value of the objective function.\\

\begin{figure}[!h]\centering
\includegraphics[width=0.99\textwidth]{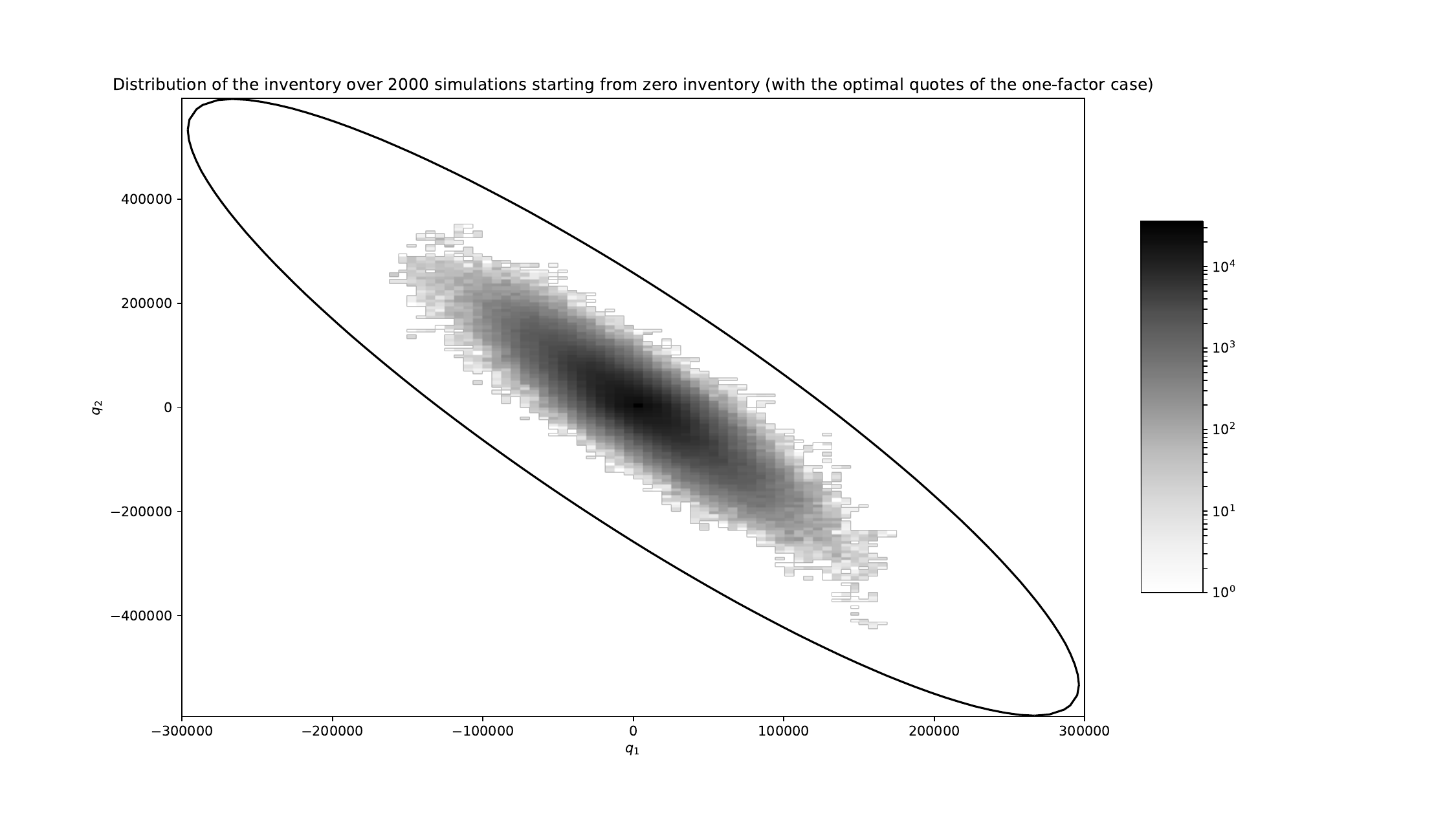}\\
\caption{Distribution of the inventory over 2000 simulations starting from zero inventory (with the optimal quotes of the one-factor case).}\label{q_opt_1f_dist}
\end{figure}
\subsubsection{Taking the residual risk into account with our Monte-Carlo method}
\label{MCnum}
We have seen above that the use of the optimal quotes of the one-factor model provides very good results in terms of the value of the objective function. Nevertheless, the distribution of the inventory plotted in Figure \ref{q_opt_1f_dist} differs from the distribution associated with the true optimal quotes plotted in Figure \ref{q_opt_dist} because the major axis of the ellipse of authorized inventory is associated with zero risk in the one-factor model. In this section, we illustrate the Monte-Carlo method proposed in Section \ref{MCr} in order to account for the residual risk in the approximation of the value function and the optimal quotes.\\

It is noteworthy that the Monte-Carlo method of Section \ref{MCr}, unlike the grid method of Section \ref{grid}, does not allow to compute the optimal quotes for all assets, sides, sizes, and values of the inventory (we ignore time by focusing on $t=0$) at once. Instead, it requires a different Monte-Carlo simulation for each desired quote. In particular, should it be used by practitioners, its use should be online. In other words, the computations should only be carried out upon receiving an RFQ or slightly beforehand if one wants to prepare the quotes (given the current inventory) for the most probable RFQs.\\

A related point is that, even for illustration and even with two assets, it is too time-consuming to compute the Monte-Carlo adjustment for all assets, sides, sizes, and possible inventories. As a consequence, it is too time-consuming to carry out simulations of the PnL with the quotes amended by the Monte-Carlo method of Section \ref{MCr}. Instead of a complete analysis, we focus on a sectional analysis by looking at the cases $q^2 = 0$, $q^1=0$, and $(q^1, q^2)$ on the major axis of the ellipse of authorized inventory.\\

In Figures \ref{value_funcs_q_1_q_2_0}, \ref{value_funcs_q_2_q_1_0}, and \ref{value_funcs_major_axis}, we compare the value function obtained in the two-factor case, i.e. the true value function,  to the value function of the one-factor case and to its adjustment through the Monte-Carlo technique of Section \ref{MCr} -- we use 50 simulations for each point (with the same source of randomness for all points). Figure \ref{value_funcs_q_1_q_2_0} deals with the comparison of the values on the section $\{q^2 = 0\}$, Figure \ref{value_funcs_q_2_q_1_0} deals with the comparison of the values on the section $\{q^1 = 0\}$, and Figure \ref{value_funcs_major_axis} deals with the comparison of the values on the major axis of the ellipse of authorized inventory.\\

\begin{figure}[!h]\centering
\includegraphics[width=0.95\textwidth]{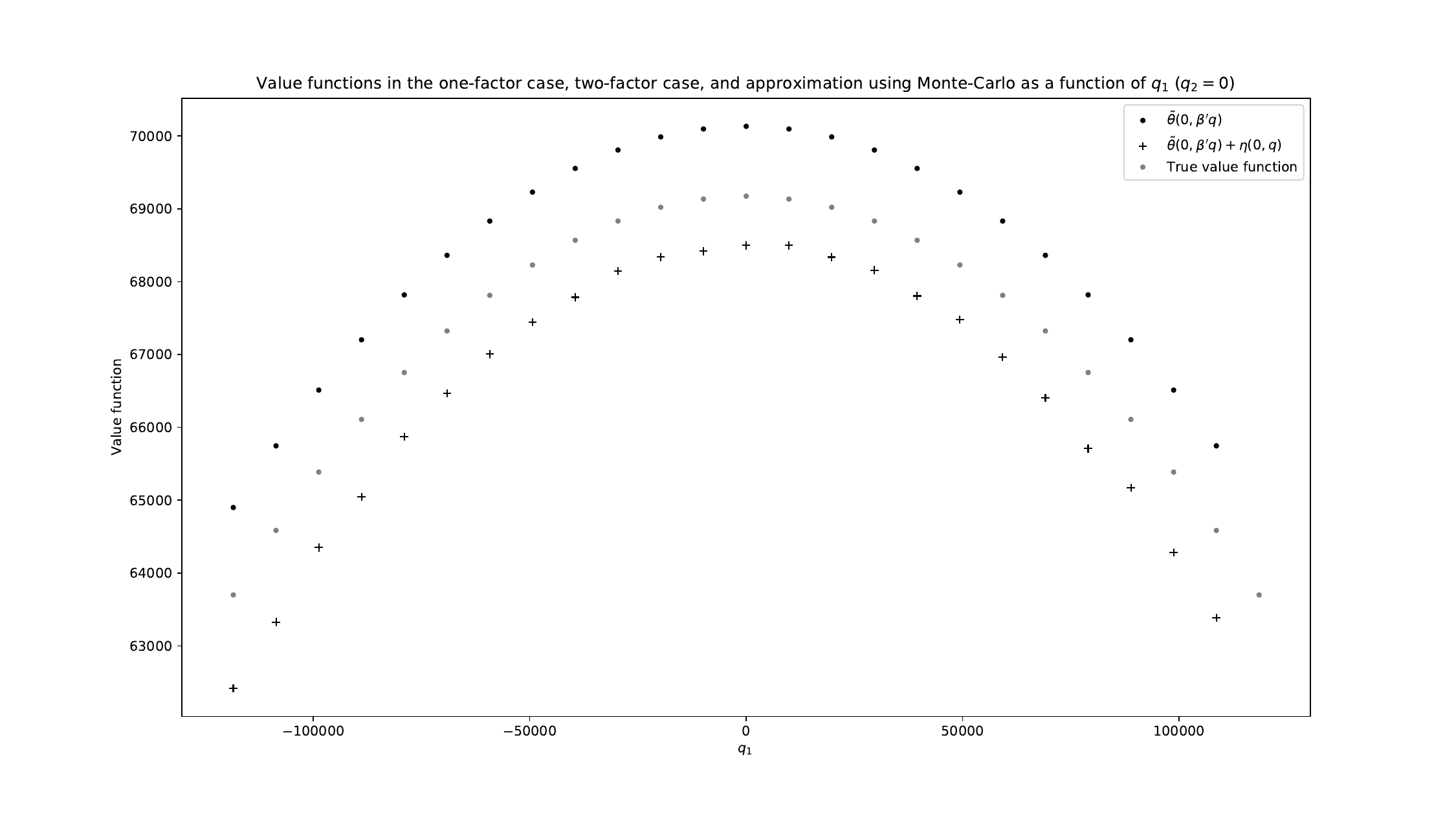}\\
\caption{Value functions in the one-factor case, two-factor case, and approximation using Monte-Carlo as a function of $q_1$ ($q_2 = 0$).}\label{value_funcs_q_1_q_2_0}
\end{figure}

\begin{figure}[!h]\centering
\includegraphics[width=0.95\textwidth]{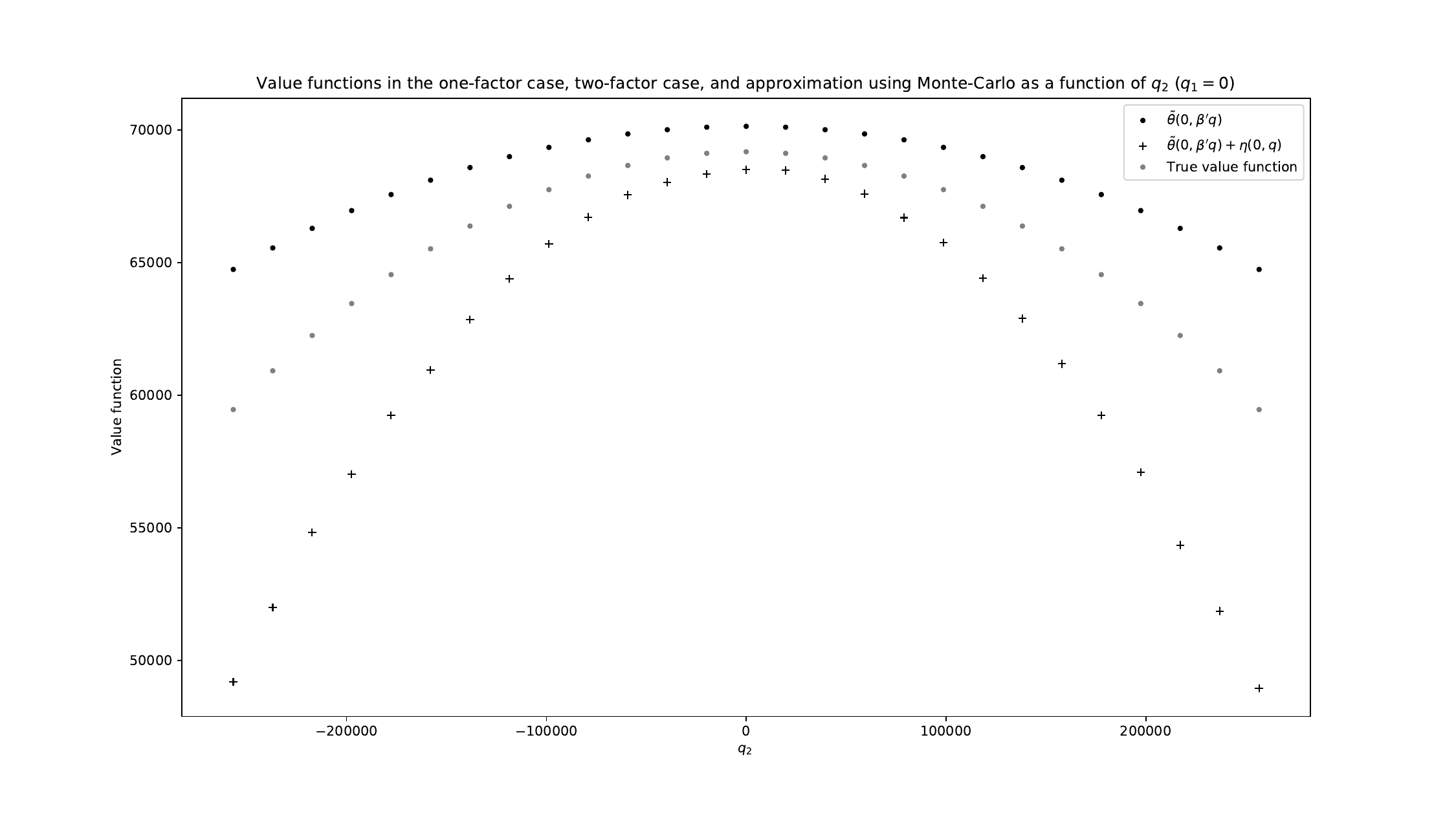}\\
\caption{Value functions in the one-factor case, two-factor case, and approximation using Monte-Carlo as a function of $q_2$ ($q_1 = 0$).}\label{value_funcs_q_2_q_1_0}
\end{figure}

\begin{figure}[!h]\centering
\includegraphics[width=0.95\textwidth]{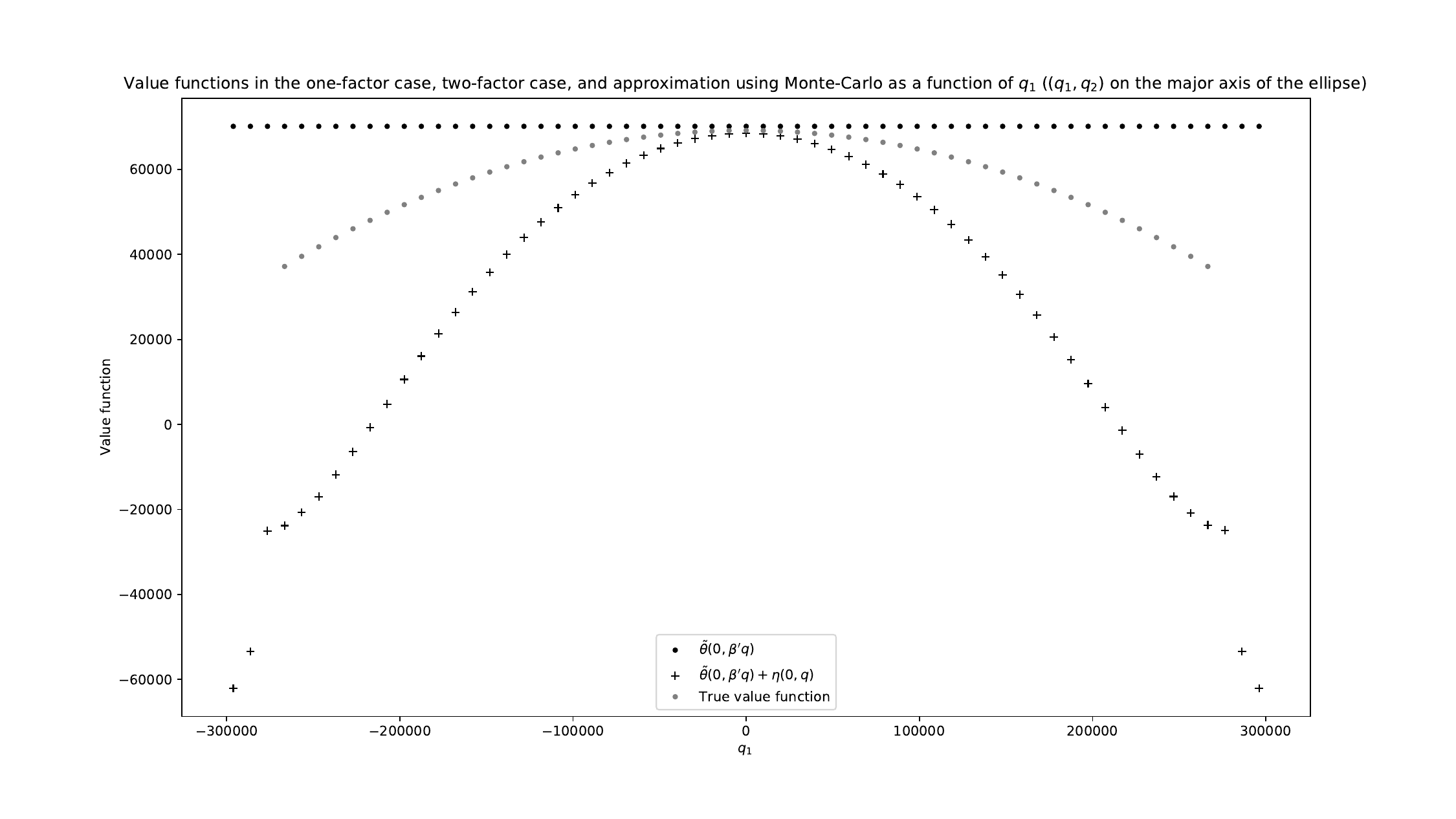}\\
\caption{Value functions in the one-factor case, two-factor case, and approximation using Monte-Carlo as a function of $q_1$ ($(q_1,q_2)$ on the major axis of the ellipse.}\label{value_funcs_major_axis}
\end{figure}

\begin{figure}[!h]\centering
\includegraphics[width=0.95\textwidth]{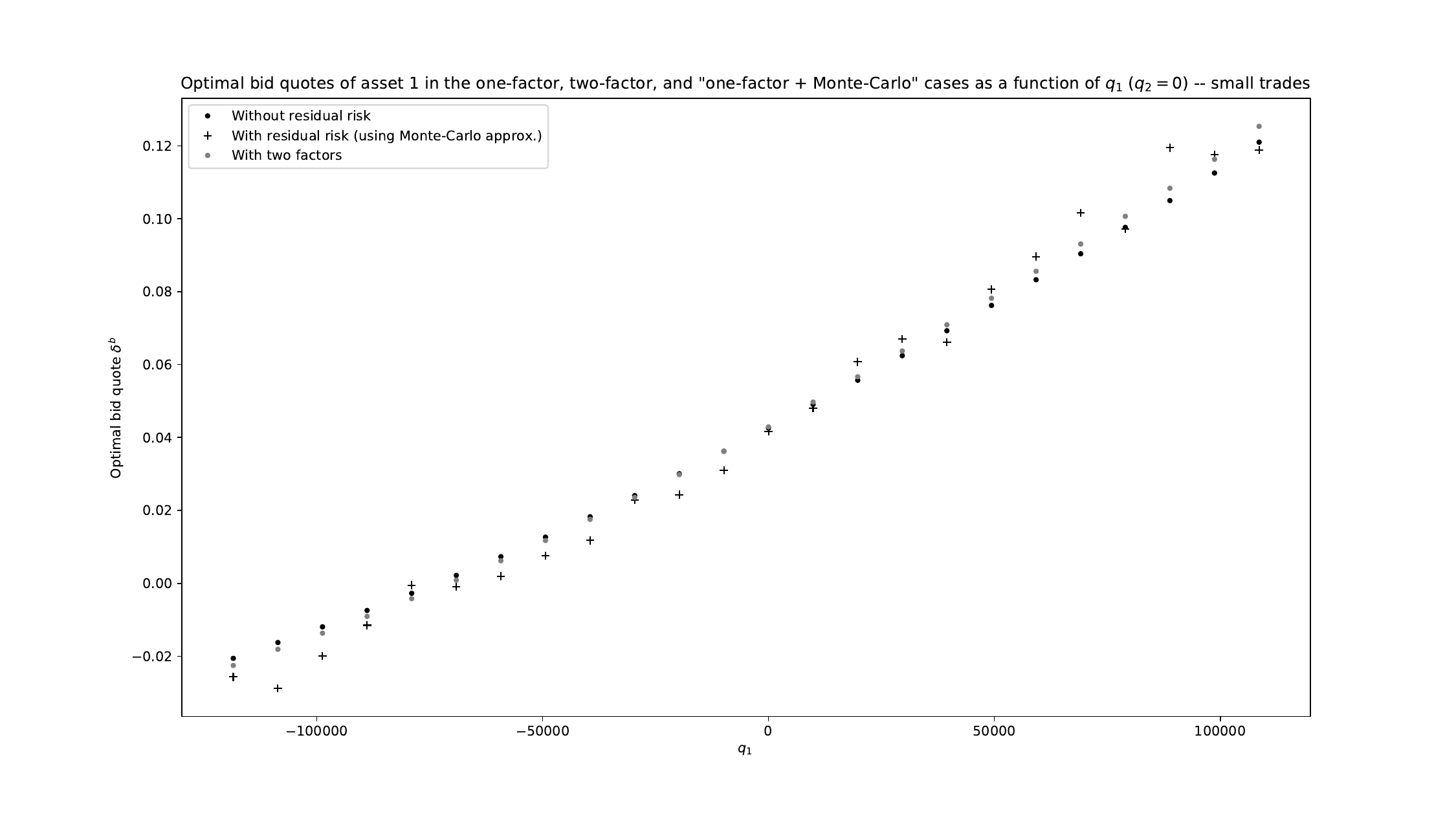}\\
\caption{Optimal bid quotes of asset 1 in the one-factor, two-factor, and "one-factor + Monte-Carlo" cases as a function of $q_1$ ($q_2 = 0$) -- small trades.}\label{Bid_1_with_residual_risk_2d}
\end{figure}

\begin{figure}[!h]\centering
\includegraphics[width=0.95\textwidth]{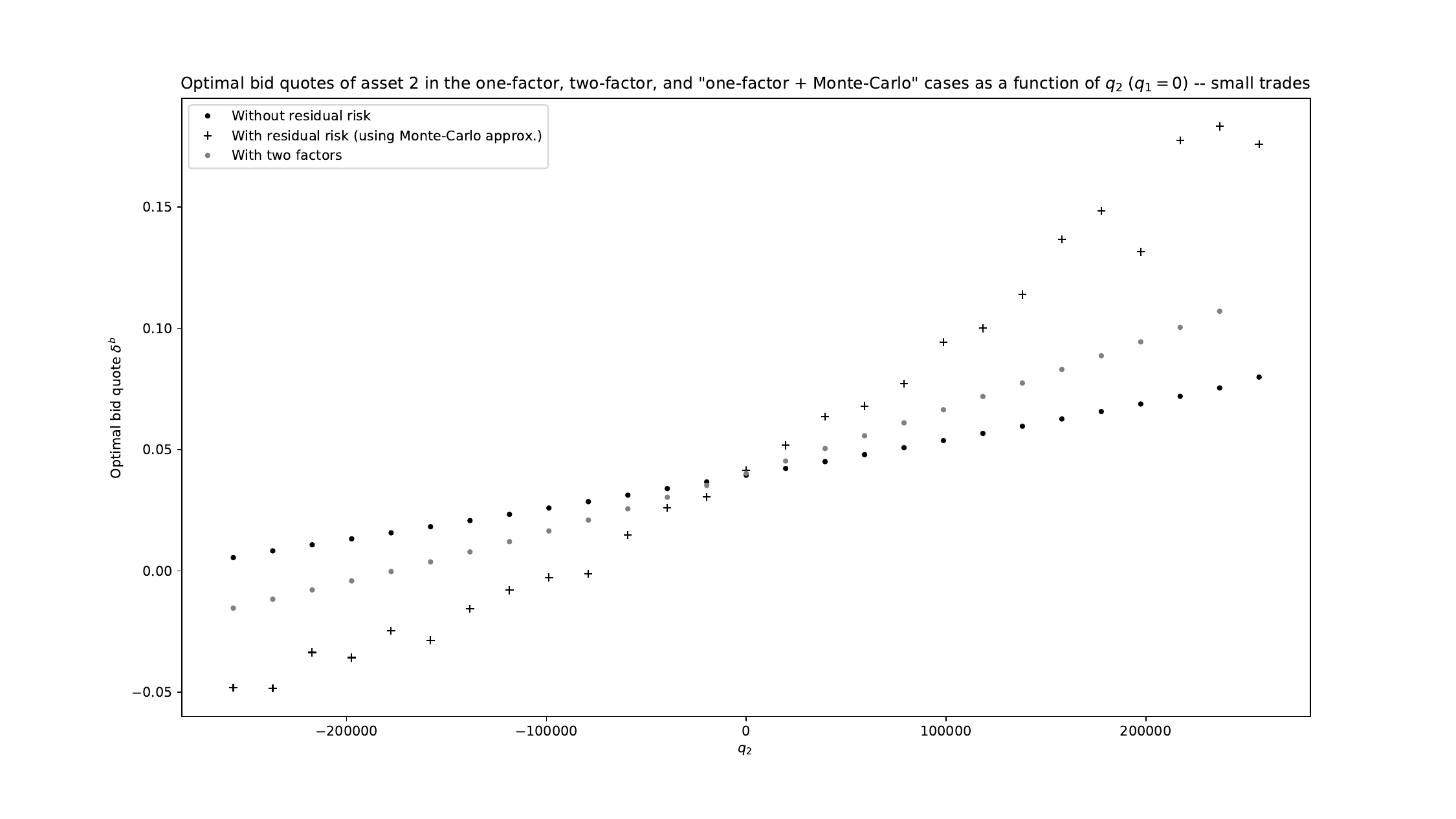}\\
\caption{Optimal bid quotes of asset 2 in the one-factor, two-factor, and "one-factor + Monte-Carlo" cases as a function of $q_2$ ($q_1 = 0$) -- small trades.}\label{Bid_2_with_residual_risk_2d}
\end{figure}

\begin{figure}[!h]\centering
\includegraphics[width=0.95\textwidth]{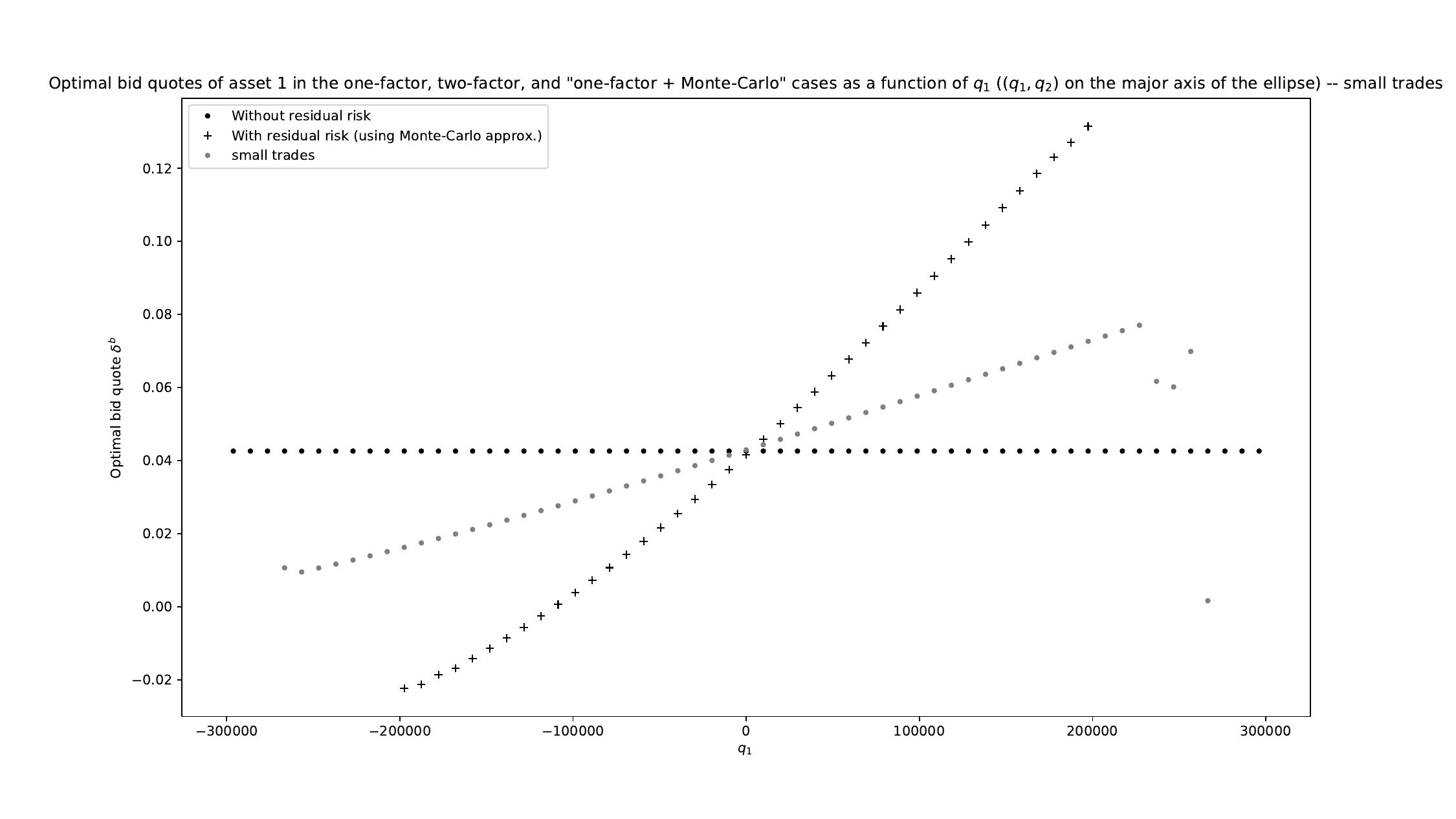}\\
\caption{Optimal bid quotes of asset 1 in the one-factor, two-factor, and "one-factor + Monte-Carlo" cases as a function of $q_1$ ($(q_1,q_2)$ on the major axis of the ellipse) -- small trades.}\label{Bid_1_with_residual_risk_major_axis_2d}
\end{figure}

We clearly see that, unsurprisingly, the Monte-Carlo adjustment goes in the right direction. However, the Monte-Carlo method leads to (i) an overestimation of the gap between the value function of the one-factor case and the true value function and (ii) an overestimation of the degree of concavity of the value function (this is particularly the case for the third section).\\

It is noteworthy that the quality of the approximation is the best around $0$. This point is interesting if one wants to estimate the degree of suboptimality of a quoting strategy in a scenario starting with zero inventory. The poor approximation of the concavity is however a limitation since quotes are based on finite differences of the value function. This is well illustrated by Figures \ref{Bid_1_with_residual_risk_2d}, \ref{Bid_2_with_residual_risk_2d}, and \ref{Bid_1_with_residual_risk_major_axis_2d}. Nevertheless, even though the Monte-Carlo adjustments of quotes are too large, especially on the major axis of the ellipse (see Figure \ref{Bid_1_with_residual_risk_major_axis_2d}), this drawback of the Monte-Carlo method of Section \ref{MCr} should be qualified as the quotes obtained with the Monte-Carlo technique naturally lead -- because of the overestimated slope of the quotes -- to trajectories of the inventory more concentrated around $0$ and therefore to a very rare use of the quotes that are too different from the optimal ones.\\

Before we go on with an example including 30 assets, let us conclude on the two-asset case. The method we propose to tackle the curse of dimensionality is based on the projection of market risk on a low-dimensional space of factors. It works very well in the two-asset case as the quotes permit to reach a value of the objective function close to the optimal one. The Monte-Carlo adjustment we suggested in Section \ref{MCr} allows to approximate the true value function at the point of zero inventory, which is quite useful when one does not have access to the true value function, as is the case in high dimension. However, it overestimates the changes one must make to the quotes computed with the low-dimensional approximation.\\

\subsection{Dealing with 30 assets}

We now consider the more challenging case of a market maker in charge of 30 assets (here bonds) with the following characteristics:\\
\begin{itemize}
  \item Asset prices: $S^i_0 = 100\ \textrm{\euro}, \quad \forall i \in \{1, \ldots, 30\}$.
  \item Volatility of assets: $\sigma^i = 1.2\  \textrm{\euro} \cdot \textrm{day}^{-\frac{1}{2}},  \forall i \in \{1, \ldots, 15\}$, and $\sigma^i = 0.6\  \textrm{\euro} \cdot \textrm{day}^{-\frac{1}{2}},  \forall i \in \{16, \ldots, 30\}$.
    \item Correlation matrix: $\begin{bmatrix}
R_{11} & R_{12}\\
R_{21} & R_{22} \\
\end{bmatrix},$ where
    \begin{equation*}
R_{11} = R_{22} = \begin{bmatrix}
1.0 & 0.9 & \hdots & \hdots & 0.9 \\
0.9 & \ddots & \ddots & \ddots & 0.9 \\
\vdots & \ddots & \ddots & \ddots & \vdots \\
0.9 & \ddots & \ddots & \ddots & 0.9 \\
0.9 & \hdots & \hdots & 0.9 & 1.0 \\
\end{bmatrix}
\quad \textrm{and} \quad
R_{12} = R_{21} = \begin{bmatrix}
0.2 & 0.2 & \hdots & 0.2 & 0.2 \\
0.2 & \ddots & \ddots & \ddots & 0.2 \\
\vdots & \ddots & \ddots & \ddots & \vdots \\
0.2 & \ddots & \ddots & \ddots & 0.2 \\
0.2 & 0.2 & \hdots & 0.2 & 0.2 \\
\end{bmatrix}.
\end{equation*}
  \item Intensity functions: $$\Lambda^{i,b}(\delta) = \Lambda^{i,a}(\delta) = \lambda_{RFQ} \frac{1}{1+e^{\alpha_\Lambda + \beta_\Lambda\delta}},\quad  \forall i \in \{1, \ldots, 30\},$$ with $\lambda_{RFQ} = 10\ \textrm{day}^{-1}$, $\alpha_\Lambda=0.7$, and $\beta_\Lambda=30\ \textrm{\euro}^{-1}$. This corresponds to 10 RFQs per day for each asset, a probability of $\frac 1 {1+e^{0.7}} \simeq 33\%$ to trade when the answered quote is the reference price and a probability of $\frac 1 {1+e^{-0.2}} \simeq 55\%$ to trade when the answered quote is the reference price improved by 3 cents.
  \item Request sizes are distributed according to a Gamma distribution $\Gamma(\alpha,\beta)$ with $\alpha = 4$ and $\beta=4\cdot 10^{-4}$. This corresponds to an average request size of $10000$ assets (i.e. approximately $1000000\ \textrm{\euro}$) and a standard deviation equal to half the average.\\
\end{itemize}

The variance-covariance matrix $\Sigma$ has two eigenvalues equal to $19.895060$ and  $4.584941$, and 28 eigenvalues below~$0.15$. The first eigenspace is spanned by a vector with all coordinates of the same sign. The associated factor -- the first factor -- represents an index of the 30 assets. The second eigenspace is spanned by a vector with the first 15 coordinates of the same sign and the next 15 of the same, but opposite, sign. The associated factor -- the second factor -- allows to separate the two groups of assets.\\

We can therefore legitimately approximate our 30-asset problem by a two-factor one and solve the corresponding Hamilton-Jacobi equation \eqref{eqn:HJBfac} to approximate the optimal quotes.\\

Regarding the objective function, we consider the following:\\
\begin{itemize}
  \item Time horizon given by $T = 2\ \textrm{days}$. This horizon ensures convergence towards stationary quotes at time $t=0$ -- see Figure \ref{conv_deltas_30_d}.
  \item $\psi: q \in \mathbb{R}^2 \mapsto \frac \gamma 2 q'\Sigma q$ with $\gamma = 8\cdot10^{-7}\ \textrm{\euro}^{-1}$.
  \item $\ell_d = 0$.\\
\end{itemize}

We approximate the solution $\tilde{\theta}$ to \eqref{eqn:HJBfac} with two factors using a monotone explicit Euler scheme with linear interpolation on a grid of size $71\times71$ for the factors and a discretization of the RFQ size distribution with the same 4 sizes as in the above two-asset example.\footnote{We considered risk limits similar to those of the above two-asset example. Here, no trade that would result in an inventory $q \in \mathbb{R}^{30}$ such that $q'\Sigma q > B$ was admitted, where $B = 5\cdot10^{10}$.}\\

The value function (at time $t=0$) is plotted in Figure \ref{theta_f_3d_30_d}. The associated optimal bid quotes for asset 1 and asset~16 (for the smallest RFQ size) are plotted in Figures \ref{delta_b_3d_asset_1_size_0_30_d} and \ref{delta_b_3d_asset_15_size_0_30_d}. We see on these graphs that the optimal quotes depend monotonously on the two factors.\footnote{Exceptions to this monotonicity property are related to boundary effects.}\\

As discussed above, we chose $T=2$ days to ensure convergence of the quotes to their stationary values. This is illustrated in Figure \ref{conv_deltas_30_d}.\\

\begin{figure}[!h]\centering
\includegraphics[width=\textwidth]{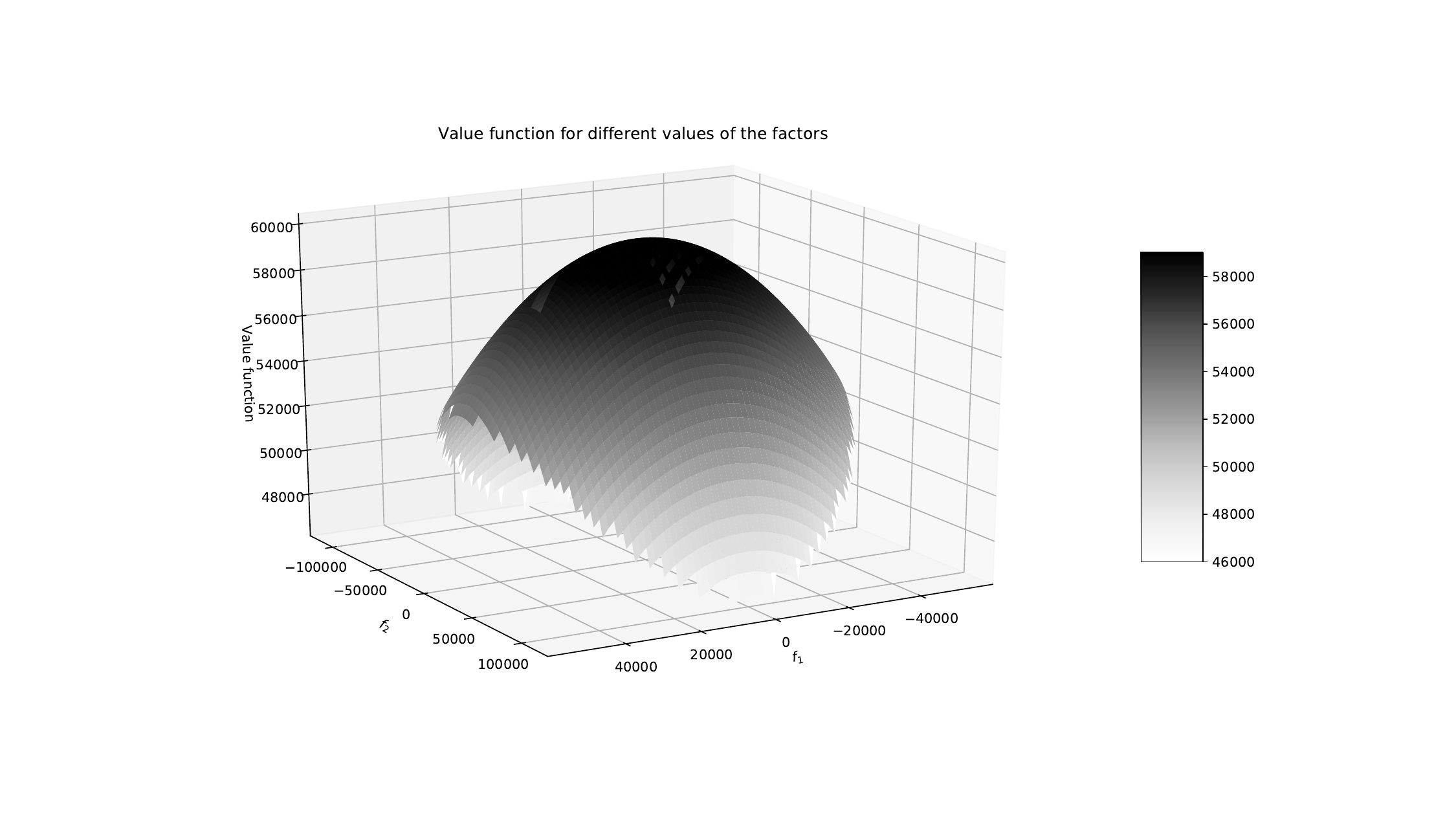}\\
\caption{Value function for different values of the factors.}\label{theta_f_3d_30_d}
\end{figure}
\vspace{1cm}

\begin{figure}[!h]\centering
\includegraphics[width=0.98\textwidth]{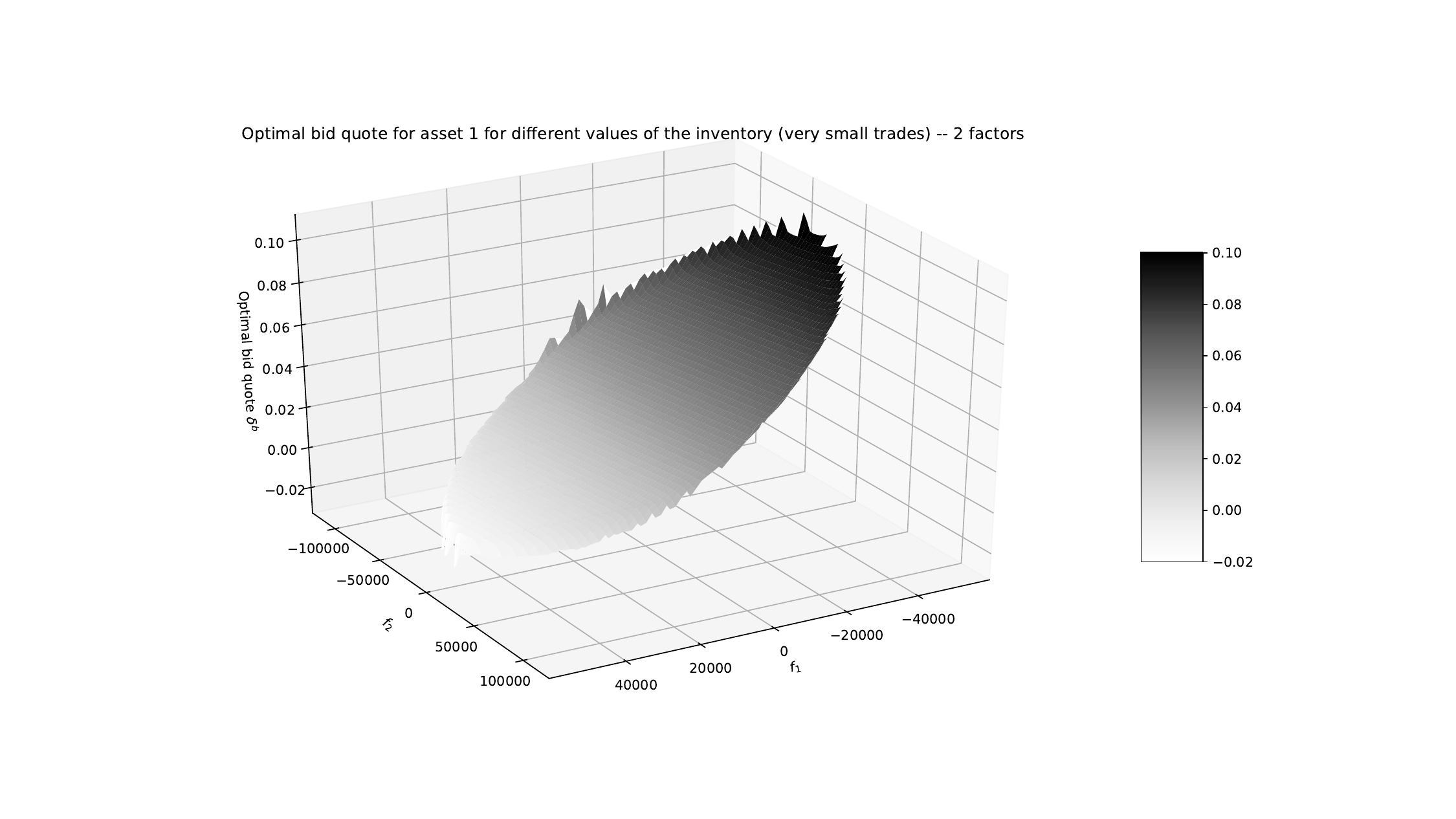}\\
\caption{Optimal bid quote for asset 1 for different values of the inventory (very small trades) -- 2 factors.}\label{delta_b_3d_asset_1_size_0_30_d}
\end{figure}
\vspace{5mm}

\begin{figure}[!h]\centering
\includegraphics[width=0.98\textwidth]{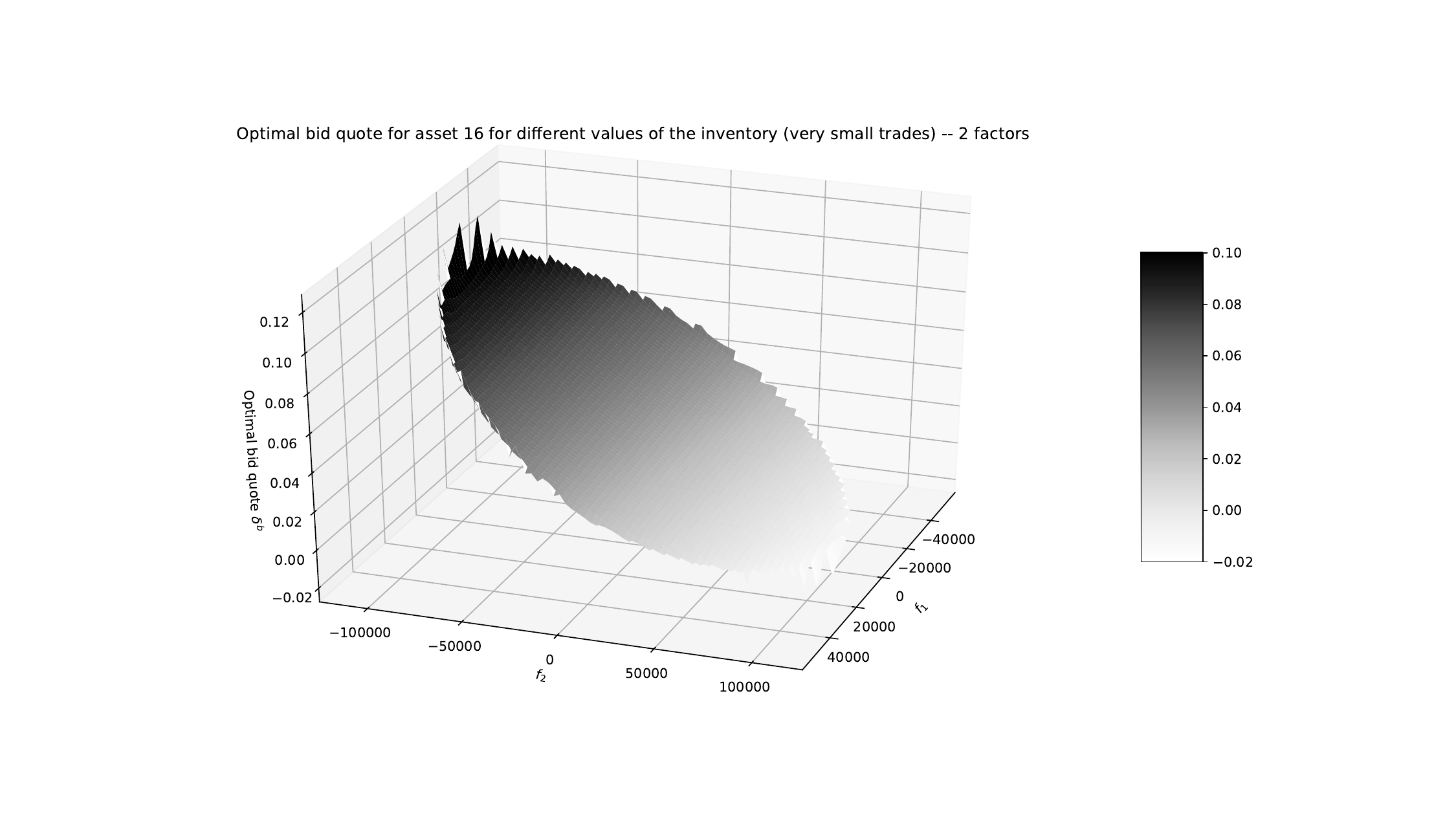}\\
\caption{Optimal bid quote for asset 16 for different values of the inventory (very small trades) -- 2 factors.}\label{delta_b_3d_asset_15_size_0_30_d}
\end{figure}

\begin{figure}[!h]\centering
\includegraphics[width=0.9\textwidth]{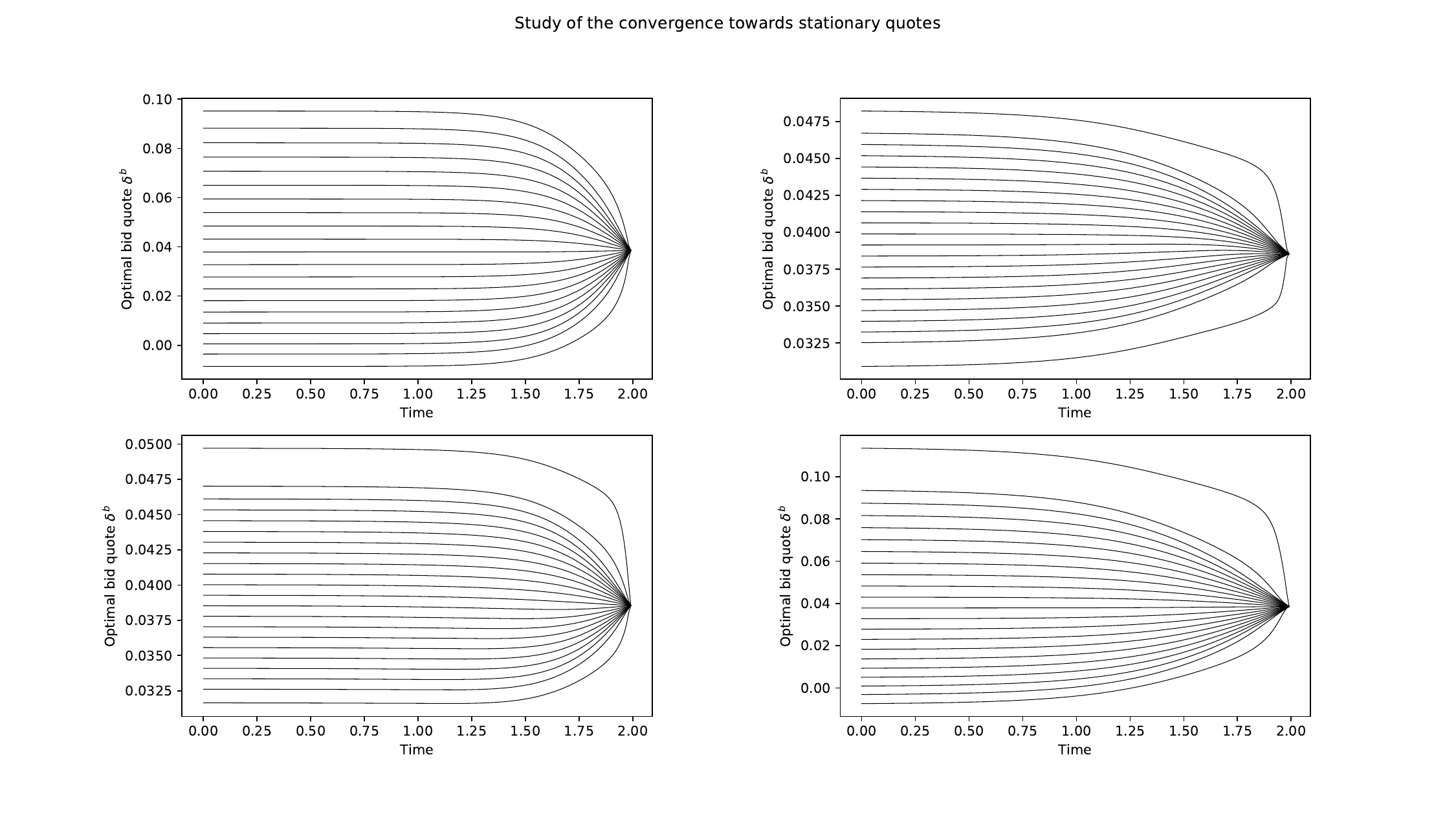}\\
\caption{Optimal bid quotes as a function of time for various values of the two factors. Top left: Asset 1 when $f^2 =0$. Top right: Asset 1 when $f^1 =0$. Bottom left: Asset 16 when $f^2 =0$. Bottom right: Asset 16 when $f^1 =0$.}\label{conv_deltas_30_d}
\end{figure}

To see the role of RFQ size on the optimal quotes, we plot in Figure \ref{deltas_asset_0_f_1_different_sizes_30_d} the four functions $f^1 \mapsto \bar{\delta}^{1,b}(0,f^1,0,z^k), k \in \{1,\ldots,4\}$ and in Figure \ref{deltas_asset_0_f_2_different_sizes_30_d} the four functions $f^2 \mapsto \bar{\delta}^{1,b}(0,0,f^2,z^k), k \in \{1,\ldots,4\}.$\\

\begin{figure}[!h]\centering
\includegraphics[width=0.92\textwidth]{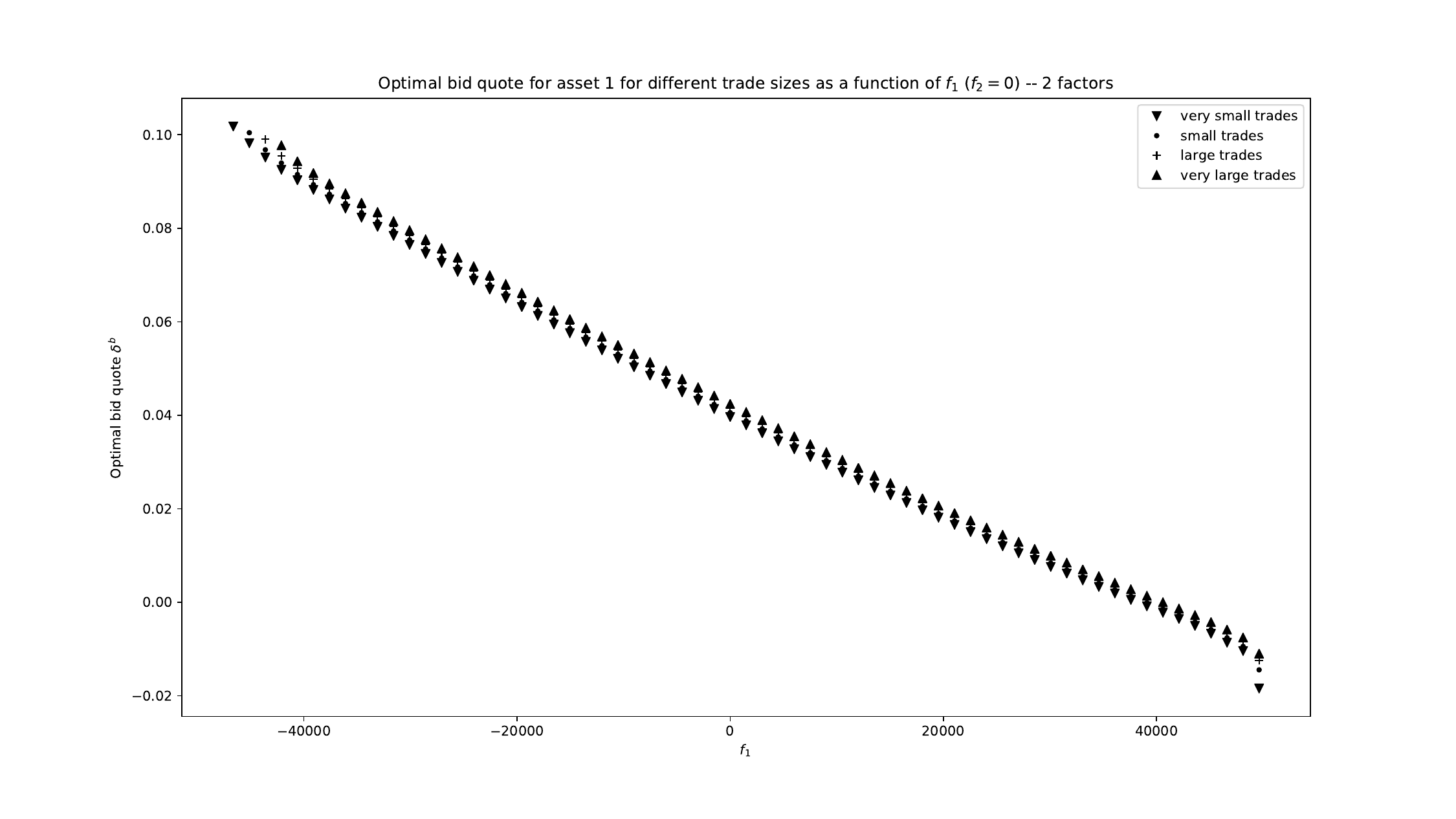}\\
\caption{Optimal bid quote for asset 1 for different trade sizes as a function of $f_1$ ($f_2 = 0$) -- 2 factors.}\label{deltas_asset_0_f_1_different_sizes_30_d}
\end{figure}

\begin{figure}[!h]\centering
\includegraphics[width=0.92\textwidth]{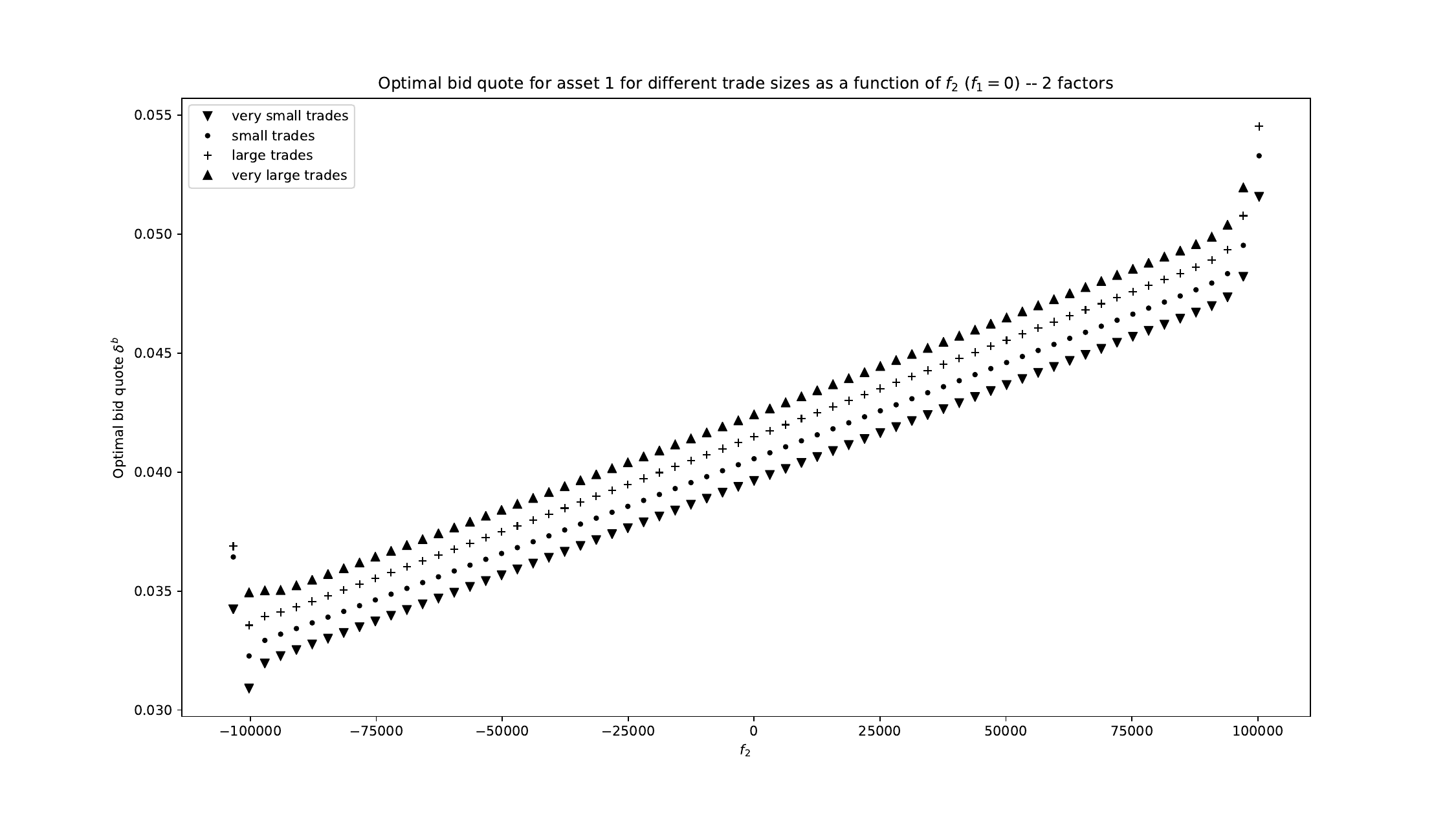}\\
\caption{Optimal bid quote for asset 1 for different trade sizes as a function of $f_2$ ($f_1 = 0$) -- 2 factors.}\label{deltas_asset_0_f_2_different_sizes_30_d}
\end{figure}

Likewise, we plot in Figure \ref{deltas_asset_15_f_1_different_sizes_30_d} the four functions $f^1 \mapsto \bar{\delta}^{16,b}(0,f^1,0,z^k), k \in \{1,\ldots,4\}$ and in Figure \ref{deltas_asset_15_f_2_different_sizes_30_d} the four functions $f^2 \mapsto \bar{\delta}^{16,b}(0,0,f^2,z^k), k \in \{1,\ldots,4\}.$\\

\begin{figure}[!h]\centering
\includegraphics[width=0.92\textwidth]{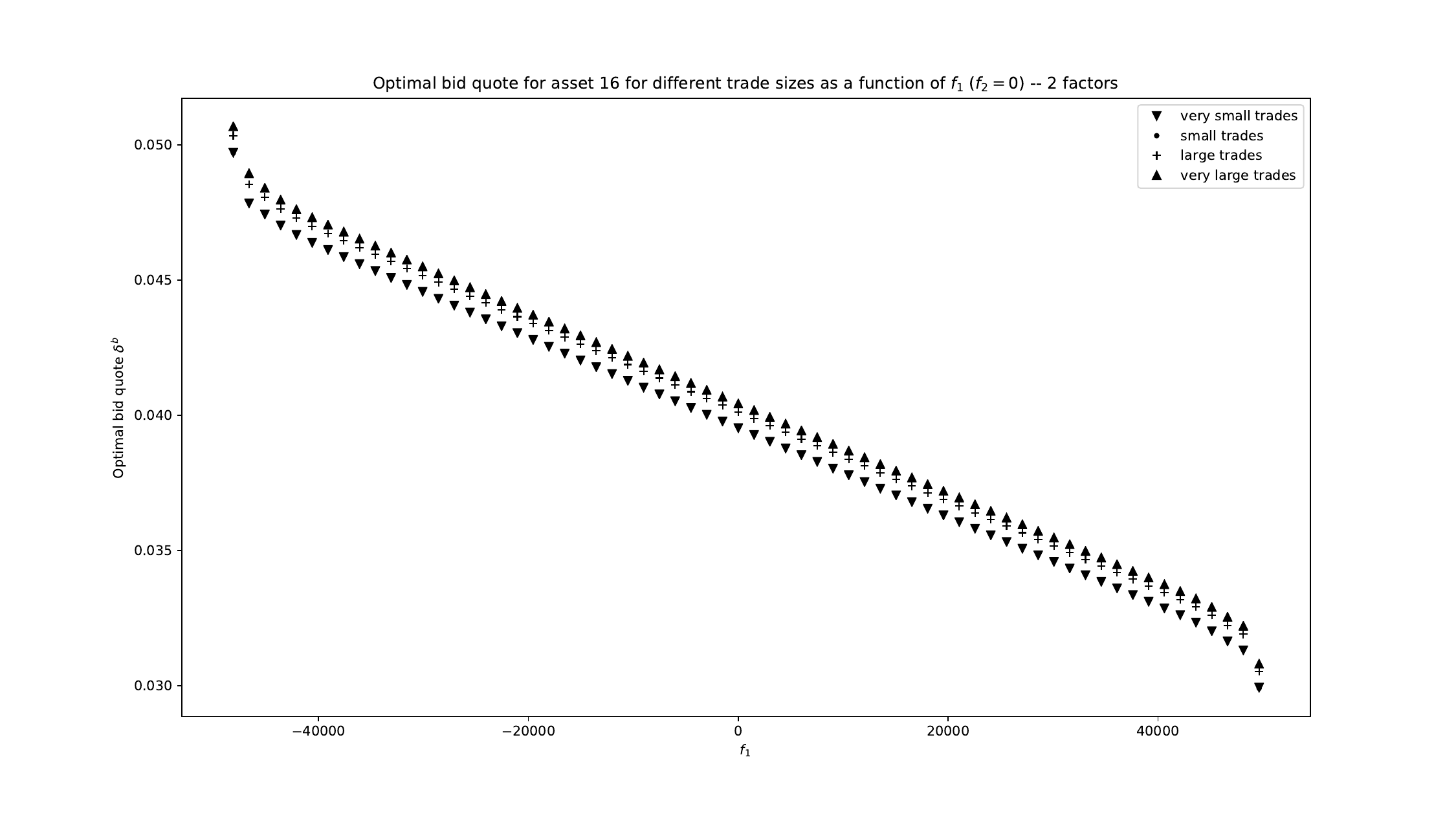}\\
\caption{Optimal bid quote for asset 16 for different trade sizes as a function of $f_1$ ($f_2 = 0$) -- 2 factors.}\label{deltas_asset_15_f_1_different_sizes_30_d}
\end{figure}

\begin{figure}[!h]\centering
\includegraphics[width=0.92\textwidth]{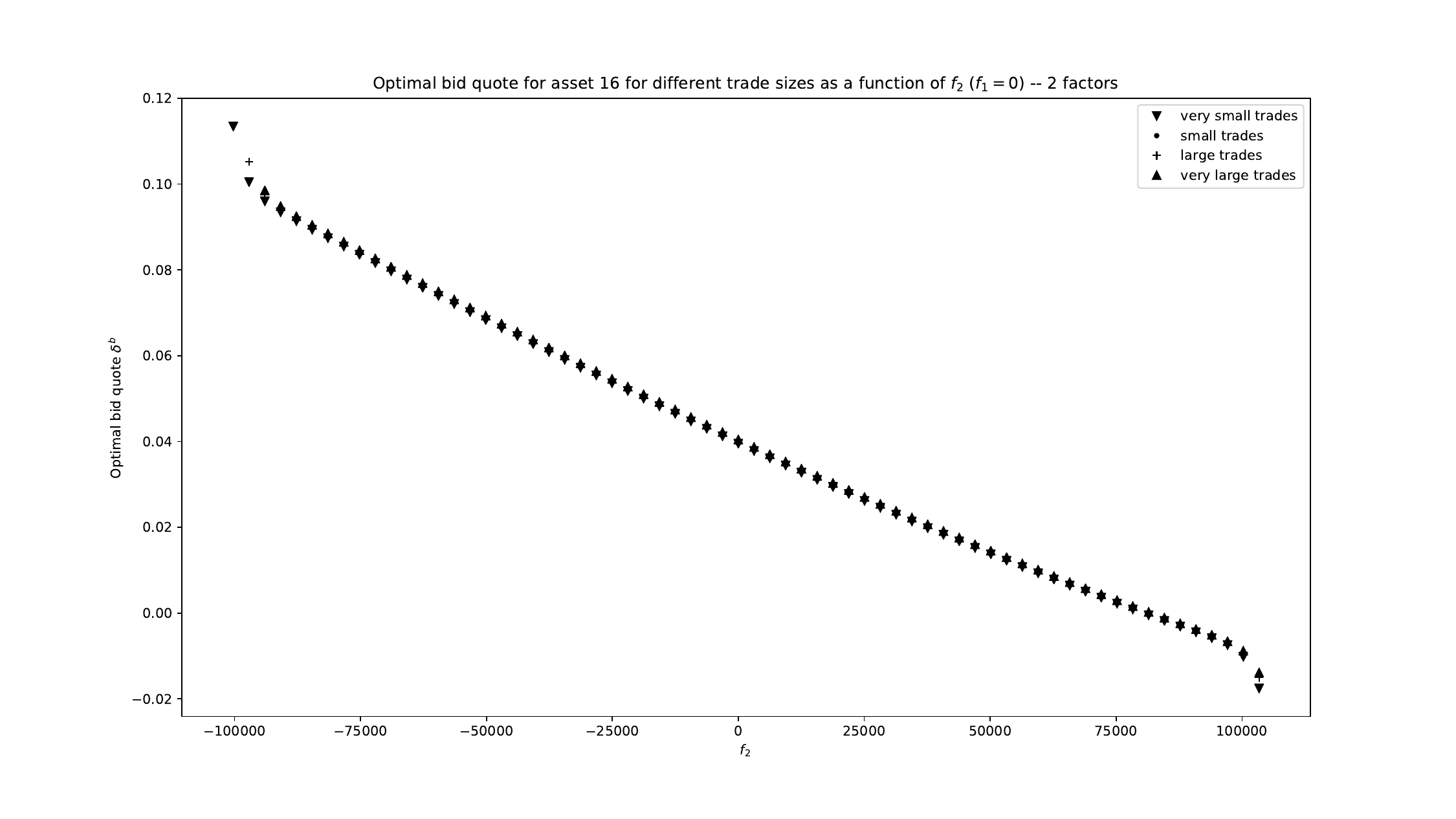}\\
\caption{Optimal bid quote for asset 16 for different trade sizes as a function of $f_2$ ($f_1 = 0$) -- 2 factors.}\label{deltas_asset_15_f_2_different_sizes_30_d}
\end{figure}

We see, especially in Figures \ref{deltas_asset_0_f_2_different_sizes_30_d} and \ref{deltas_asset_15_f_1_different_sizes_30_d}, that the size of the RFQ significantly impacts the quotes that should be answered (as computed with our two-factor approximation).\\

Unlike what we did in the two-asset case, it is impossible in our 30-asset case to know how far from real optimality are the optimal quotes computed with the two-factor approximation. Nevertheless, we can use Monte-Carlo simulations to estimate the value of the objective function associated with the optimal quotes computed with the two-factor approximation in a scenario starting from zero inventory, and compare such an estimation to an approximation of the value function at $(t,q)=(0,0)$ computed through the Monte-Carlo approximation of Section \ref{MCr}.\\

We carried out 2000 trajectories starting from zero inventory, using the optimal quotes computed with the two-factor approximation. These 2000 simulations enable to illustrate the distribution of the PnL at time $T$. The statistics associated with our simulations are documented in Table 4.\\

\begin{table}[h!]
\centering
\begin{tabular}{|c|c|c|c|}
  \hline
  Mean PnL & Stdev PnL & Stdev coming from RFQs & Objective function \\
  \hline
   61471 & 64911 & 5338  & 59765 \\
  \hline
\end{tabular}
\caption{Statistics associated with our 2000 simulations starting from zero inventory (with the two-factor optimal quotes).}
\end{table}

The value of the objective function documented in Table 4 has to be compared to an approximation of the true value function at $(t,q) = (0,0)$. Given the value $\tilde \theta(0, 0) = 60156$ obtained with our numerical scheme and given an estimation of $\eta(0,0)$ equal to $-643$ obtained using the Feynman-Kac representation of Section \ref{MCr} -- with 500 trajectories --, we obtain an approximation of the true value function at $(t,q) = (0,0)$ equal to $59513$. From the very small value of $\frac{\eta(0,0)}{\tilde \theta(0, 0)}$, we deduce that our two-factor approximation is quite satisfactory. The near-optimality of the quotes obtained with our two-factor approximation is confirmed by the value $59765$ obtained with our 2000 trajectories (see Table 4) which is even slightly above $59513$.\\

\section*{Conclusion}

In this paper, we generalized existing market making models to introduce trade size variability. This extension led to an integro-differential equation of the Hamilton-Jacobi type that can be solved using ODE techniques in an infinite-dimensional space. Then, we introduced a numerical method for approximating the optimal bid and ask quotes of a market maker over a large set of assets using a dimensionality reduction technique based on a factor decomposition of the risk. To exemplify our findings, and show that they contribute to beating the curse of dimensionality, we considered two cases of market making with respectively 2 and 30 assets. Our method scales linearly in the number of assets and exponentially in the number of factors, and can therefore be used on large markets driven by a few number of factors.\\

\section*{Appendix: On the construction of the processes $J^{i,b}$ and $J^{i,a}$}

Let us consider a new filtered probability space $\big(\Omega,\mathcal{F},(\mathcal{F}_t)_{t\in \mathbb{R}_{+}},\tilde{\mathbb{P}}\big)$. For the sake of simplicity, assume that $d=1$ and let us omit the superscript $i$ (the generalization is straightforward). Let us introduce $N^b$ and $N^a$ two independent compound Poisson processes of intensity 1 whose increments follow respectively the distributions $\mu^b(dz)$ and $\mu^a(dz)$ with support on $\mathbb{R}_+^*$. We denote by $J^b(dt,dz)$ and $J^a(dt,dz)$ the associated random measures.\\

For each $\delta \in \mathcal{A}$, we introduce the probability measure $\tilde{\mathbb{P}}^\delta$ given by the Radon-Nikodym derivative
\begin{equation}
\frac{d\tilde{\mathbb{P}}^\delta}{d\tilde{\mathbb{P}}} \Big|_{\mathcal{F}_t} = L_t^{\delta},
\end{equation}
where $\left(L_t^{\delta} \right)_{t\geq 0}$ is the unique solution of the stochastic differential equation
\begin{equation}
    dL_t^{\delta} = L_{t-}^{\delta} \left( \int_{\mathbb{R}_+^*} \left(\Lambda^b(\delta^b(t,z))-1 \right)\tilde{J}^b(dt,dz) + \int_{\mathbb{R}_+^*} \left(\Lambda^a(\delta^a(t,z))-1 \right)\tilde{J}^a(dt,dz) \right),\nonumber
\end{equation}
with $L_0^{\delta} = 1$, where $\tilde{J}^b(dt,dz)$ and $\tilde{J}^a(dt,dz)$ stand for the compensated measures.\\

We then know from Girsanov theorem (in its Brémaud-Jacod version) that under $\tilde{\mathbb{P}}^\delta$, $J^b(dt,dz)$ and $J^a(dt,dz)$ have respective intensity kernels
$$\lambda^{\delta,b}_t(dz)  = \Lambda^b(\delta^b(t,z)) \mu^b(dz) \quad \text{and} \quad \lambda^{\delta,a}_t(dz)  = \Lambda^a(\delta^a(t,z)) \mu^a(dz)$$ as in the body of the paper.\\

\section*{Data Availability Statement}

Data sharing is not applicable to this article as no new data were created or analyzed in this study.

\end{document}